\documentclass[12pt, onecolumn,a4paper]{quantumarticle}
\pdfoutput=1 

\usepackage{thmtools}
\usepackage{thm-restate}
\usepackage{makecell}
\usepackage{tikz}
\usetikzlibrary{positioning, calc}
\usetikzlibrary{calc}
\usetikzlibrary{arrows}
\usepackage{tikz-3dplot}
\usepackage{graphicx}
\usepackage{epsfig}
\usepackage{mdwlist}
\usepackage{color}
\usepackage{thmtools}
\usepackage{amssymb}
\usepackage{physics}
\usepackage{amsmath}
\usepackage{array}
\usepackage{doi}
\usepackage{url,hyperref}
\hypersetup{colorlinks={true},linkcolor=.,citecolor=red}

\usepackage[numbers,sort&compress]{natbib}
\usetikzlibrary{fadings}
\usetikzlibrary{decorations.pathmorphing}
\usepackage{mathrsfs}
\usepackage{authblk}
\usepackage[mathscr]{euscript}
\usepackage{enumitem}
\usetikzlibrary{quantikz2}

\interfootnotelinepenalty=10000

\DeclareMathAlphabet\mathbfcal{OMS}{cmsy}{b}{n}

\usepackage[noabbrev]{cleveref}

\usetikzlibrary{decorations.pathreplacing}
\tikzset{snake it/.style={decorate, decoration=snake}}

\usetikzlibrary{decorations.pathreplacing,decorations.markings}

\tikzset{
    >=stealth',
    punkt/.style={
           rectangle,
           rounded corners,
           draw=black, very thick,
           text width=6.5em,
           minimum height=2em,
           text centered},
    pil/.style={
           ->,
           thick,
           shorten <=2pt,
           shorten >=2pt,},
  on each segment/.style={
    decorate,
    decoration={
      show path construction,
      moveto code={},
      lineto code={
        \path [#1]
        (\tikzinputsegmentfirst) -- (\tikzinputsegmentlast);
      },
      curveto code={
        \path [#1] (\tikzinputsegmentfirst)
        .. controls
        (\tikzinputsegmentsupporta) and (\tikzinputsegmentsupportb)
        ..
        (\tikzinputsegmentlast);
      },
      closepath code={
        \path [#1]
        (\tikzinputsegmentfirst) -- (\tikzinputsegmentlast);
      },
    },
  },
  mid arrow/.style={postaction={decorate,decoration={
        markings,
        mark=at position .5 with {\arrow[#1]{stealth'}}
      }}}
}

\usepackage{slashed}
\usepackage{float} 
\usepackage{graphicx} 
\usepackage{subcaption}

\mathchardef\mhyphen="2D

\newcommand{\CDQS}{\textnormal{CDQS}}

\newtheorem{theorem}{Theorem}

\newtheorem{definition}[theorem]{Definition}

\newtheorem{lemma}[theorem]{Lemma}

\newtheorem{remark}[theorem]{Remark}

\newenvironment{proof}[1][Proof]{\noindent\textbf{#1. }}{\ \rule{0.5em}{0.5em}}
\geometry{margin=1.5cm}

\usepackage[T1]{fontenc}

\begin{document}

\title{A complexity theory for non-local quantum computation}

\author[1]{Andreas Bluhm}
\email{andreas.bluhm@univ-grenoble-alpes.fr}
\orcid{0000-0003-4796-7633}

\author[1]{Simon H\"{o}fer}
\email{simon.hofer@univ-grenoble-alpes.fr}
\orcid{}

\author[2,3]{Alex May}
\email{amay@perimeterinstitute.ca}
\orcid{0000-0002-4030-5410}

\author[2]{Mikka Stasiuk}
\email{mstasiuk@perimeterinstitute.ca }
\orcid{}

\author[4]{Philip Verduyn Lunel}
\email{philip.verduyn-lunel@lip6.fr}
\orcid{0000-0001-5419-6027}

\author[5]{Henry Yuen}
\email{hyuen@cs.columbia.edu}
\orcid{0000-0002-2684-1129}

\affiliation[1]{Univ. Grenoble Alpes, CNRS, Grenoble INP, LIG}
\affiliation[2]{Perimeter Institute for Theoretical Physics}
\affiliation[3]{Institute for Quantum Computing, Waterloo, Ontario}
\affiliation[4]{Sorbonne Universit\'e, Paris}
\affiliation[5]{Columbia University}

\abstract{
Non-local quantum computation (NLQC) replaces a local interaction between two systems with a single round of communication and shared entanglement. 
Despite many partial results, it is known that a characterization of entanglement cost in at least certain NLQC tasks would imply significant breakthroughs in complexity theory. 
Here, we avoid these obstructions and take an indirect approach to understanding resource requirements in NLQC, which mimics the approach used by complexity theorists: we study the relative hardness of different NLQC tasks by identifying resource efficient reductions between them. 
Most significantly, we prove that $f$-measure and $f$-route, the two best studied NLQC tasks, are in fact equivalent under $O(1)$ overhead reductions.
This result simplifies many existing proofs in the literature and extends several new properties to $f$-measure. 
For instance, we obtain sub-exponential upper bounds on $f$-measure for all functions, and efficient protocols for functions in the complexity class $\mathsf{Mod}_k\mathsf{L}$. 
Beyond this, we study coherently controlled applications of Pauli operators and more general unitaries. 
We find relationships among coherent protocols, and show the coherently controlled NLQCs imply classically controlled protocols, suggesting they are harder NLQC tasks.
}

\maketitle

\pagebreak

\tableofcontents

\flushbottom

\section{Introduction} 

A non-local quantum computation (NLQC) replaces a local interaction of two systems with a single round of communication and shared entanglement (see \Cref{fig:non-localandlocal}).
Non-local quantum computation has many applications, including to quantum position-verification (QPV) \cite{kent2011quantum,buhrman2014position}, AdS/CFT \cite{may2019quantum,may2020holographic,may2022complexity,kubicki2024constraints}, information-theoretic cryptography \cite{allerstorfer2024relating,asadi2024conditional,girish2025comparing}, uncloneable secret sharing \cite{ananth2025unclonable}, and Hamiltonian complexity \cite{apel2024security}. 
In all these applications, a key quantity of interest is the amount of entanglement necessary to implement a given computation as a NLQC. 
Despite the need to understand this, a clear and general characterization of the entanglement cost of a NLQC is still lacking. 

\begin{figure*}
    \centering
    \begin{subfigure}{0.45\textwidth}
    \centering
    \begin{tikzpicture}[scale=0.6]
    
    \draw[thick] (-1,-1) -- (-1,1) -- (1,1) -- (1,-1) -- (-1,-1);
    
    \draw[thick,mid arrow] (-3.5,-3) to [out=90,in=-90] (-0.5,-1);
    \draw[thick,mid arrow] (3.5,-3) to [out=90,in=-90] (0.5,-1);
    
    \draw[thick,mid arrow] (0.5,1) to [out=90,in=-90] (3.5,3);
    \draw[thick,mid arrow] (-0.5,1) to [out=90,in=-90] (-3.5,3);
    
    \node at (0,0) {$\mathbfcal{N}$};
    
    \node at (0,-5) {$ $};
    
    \end{tikzpicture}
    \caption{}
    \label{fig:local}
    \end{subfigure}
    \hfill
    \begin{subfigure}{0.45\textwidth}
    \centering
    \begin{tikzpicture}[scale=0.4]
    
    \draw[thick] (-5,-5) -- (-5,-3) -- (-3,-3) -- (-3,-5) -- (-5,-5);
    \node at (-4,-4) {$\mathbfcal{V}^L$};
    
    \draw[thick] (5,-5) -- (5,-3) -- (3,-3) -- (3,-5) -- (5,-5);
    \node at (4,-4) {$\mathbfcal{V}^R$};
    
    \draw[thick] (5,5) -- (5,3) -- (3,3) -- (3,5) -- (5,5);
    \node at (4,4) {$\mathbfcal{W}^R$};
    
    \draw[thick] (-5,5) -- (-5,3) -- (-3,3) -- (-3,5) -- (-5,5);
    \node at (-4,4) {$\mathbfcal{W}^L$};
    
    \draw[thick,mid arrow] (-4.5,-3) -- (-4.5,3);
    
    \draw[thick,mid arrow] (4.5,-3) -- (4.5,3);
    
    \draw[thick,mid arrow] (-3.5,-3) to [out=90,in=-90] (3.5,3);
    
    \draw[thick,mid arrow] (3.5,-3) to [out=90,in=-90] (-3.5,3);
    
    \draw[thick] (-3.5,-6) -- (3.5,-6) -- (0,-8) -- (-3.5,-6);
    \draw[thick] (-3.25,-6) -- (-3.25,-5);
    \draw[thick] (3.25,-6) -- (3.25,-5);
    \node at (0,-7) {$\Psi$};
    
    \draw[thick] (-4.5,-6) -- (-4.5,-5);
    \draw[thick] (4.5,-6) -- (4.5,-5);
    
    \draw[thick] (4.5,5) -- (4.5,6);
    \draw[thick] (-4.5,5) -- (-4.5,6);
    
    \end{tikzpicture}
    \caption{}
    \label{fig:non-localcomputation}
    \end{subfigure}
    \caption{(a) Circuit diagram showing the local implementation of a channel $\mathbfcal{N}$. Physically, systems $A$ and $B$ are brought together spatially and interacted. (b) Circuit diagram showing the form of a non-local quantum computation. $\mathbfcal{V}^L$, $\mathbfcal{V}^R$, $\mathbfcal{W}^L$, and $\mathbfcal{W}^R$ are quantum channels. The triangle labelled $\Psi$ represents the shared, in general entangled, resource state. The goal is to simulate the local channel $\mathbfcal{N}$.}
    \label{fig:non-localandlocal}
\end{figure*}

There are fundamental obstructions to obtaining a complete understanding of entanglement cost in NLQC. 
In particular, entanglement cost in certain NLQC tasks can be upper bounded by complexity measures. 
Consequently, placing lower bounds on entanglement would imply lower bounds on these measures of complexity, which is a notoriously hard problem. 
For instance, in the context of a specific scheme known as $f$-route, there is an upper bound on entanglement cost which is exponential in the memory needed to compute the Boolean function $f$ \cite{buhrman2013garden}. 
Super-polynomial bounds on $f$-route for a given function $f$ therefore imply super-logarithmic lower bounds on memory. 
Thus such lower bounds face complexity barriers: a super-polynomial lower bound for a function in $\mathsf{P}$ (poly-time) would separate $\mathsf{L}$ (logspace) and $\mathsf{P}$, a long-standing problem in complexity theory.\footnote{In fact, somewhat more is true: in \cite{cree2023code} an upper bound which is efficient for functions in $Mod_kL$ is given, so that a super-polynomial lower bound on $f$-route for a function in $\mathsf{P}$ would separate $Mod_kL$, which contains $L$, and $P$.} 

In this work we take a new approach to elucidating how entanglement cost is determined in non-local quantum computation. 
We focus on the question: \emph{when is one computation harder to implement non-locally than another?}
To understand this, we study the relationships among different NLQC tasks, looking for reductions among them. 
Whenever we find that a protocol for implementing a non-local computation $A$ can be transformed, without too much overhead, into a protocol for implementing a computation $B$, we know that $B$ is no harder than $A$. 
By introducing this notion of a reduction we lay the groundwork for a more complete notion of complexity theory for NLQC.

Our strategy mimics the study of reductions among computational problems. 
Indeed, since we face some of the same obstructions that appear in the study of complexity theory, it is natural to adopt a similar strategy.
We can also hope to gain many of the same insights.
For instance, studying examples and comparing their hardness gives insight into what makes a problem hard or easy as a NLQC. 

We can also then try to identify the hardest NLQCs, which provide useful candidates for position-based cryptography.
Further, by building a scaffolding of examples and relating them to one another, any new NLQC of interest can be related to these problems and some of its properties understood without studying each new example from scratch.
Finally, we can hope to gain insight into complexity theory itself via NLQC: since upper bounds for different NLQC tasks are characterized in terms of different complexity measures, we can hope to learn about these complexity measures by studying reductions among the tasks, or even hope to find lower bounds on complexity measures by lower bounding NLQC tasks. 

Among the many applications of NLQC, our results are most immediately relevant for the understanding of the security of quantum position-verification.
We briefly review the relevance of NLQC to QPV here. 
In a position-verification scheme, the verifier sends the prover quantum and classical systems and asks for a reply at a set of designated spacetime locations.
See \Cref{fig:2dsetup} for a standard set-up in a spacetime with one spatial dimension. 
Unfortunately, there is no unconditionally secure QPV scheme; the cheating strategy is exactly to implement a non-local quantum computation in order to simulate the actions of an honest player. 
Since all computations can be implemented in the NLQC form, it follows that QPV is not secure without making further assumptions. 

\begin{figure}
    \centering
    \begin{subfigure}{0.45\textwidth}
    \begin{tikzpicture}[scale=0.7]
    
    \node[below left] at (-4,0) {$c_1$};
    \draw[fill=black] (-4,0) circle (0.15);

    \node[below right] at (4,0) {$c_2$};
    \draw[fill=black] (4,0) circle (0.15);

    \node[below right] at (4,8) {$r_2$};
    \draw[fill=blue] (4,8) circle (0.15);

    \node[below left] at (-4,8) {$r_1$};
    \draw[fill=blue] (-4,8) circle (0.15);
    
    \draw[fill=gray,opacity=0.5] (-1,1) -- (1,1) -- (1,7) -- (-1,7) -- (-1,1);

    \draw[->] (-5.5,1) -- (-5.5,2);
    \node[above] at (-5.5,2) {$t$};
    \draw[->] (-5.5,1) -- (-4.5,1);
    \node[right] at (-4.5,1) {$x$};
    
    \end{tikzpicture}
    \caption{}
    \label{fig:taggingsub1}
    \end{subfigure}
    \hfill
\begin{subfigure}{.45\textwidth}
\begin{tikzpicture}[scale=0.7]

    \node[below left] at (-4,0) {$c_1$};
    \draw[fill=black] (-4,0) circle (0.15);

    \node[below right] at (4,0) {$c_2$};
    \draw[fill=black] (4,0) circle (0.15);

    \node[below right] at (4,8) {$r_2$};
    \draw[fill=blue] (4,8) circle (0.15);

    \node[below left] at (-4,8) {$r_1$};
    \draw[fill=blue] (-4,8) circle (0.15);
    
    \draw[postaction={on each segment={mid arrow}}] (-4,0) -- (-2,2) -- (-2,6) -- (-4,8);
    \draw[postaction={on each segment={mid arrow}}] (4,0) -- (2,2) -- (2,6) -- (4,8);
    \draw[postaction={on each segment={mid arrow}}] (-2,2) -- (0,4) -- (2,6);
    \draw[postaction={on each segment={mid arrow}}] (2,2) -- (0,4) -- (-2,6);
    
    \draw[dashed] (2,2) -- (0,0) -- (-2,2);
    \node[below] at (0,0) {$\ket{\Psi}$};
    
    \draw[fill=yellow] (-2,2) circle (0.3);
    \draw[fill=yellow] (2,2) circle (0.3);
    \draw[fill=yellow] (-2,6) circle (0.3);
    \draw[fill=yellow] (2,6) circle (0.3);
    
    \draw[fill=gray,opacity=0.5] (-1,1) -- (1,1) -- (1,7) -- (-1,7) -- (-1,1);
    
\end{tikzpicture}
\caption{}
\label{fig:taggingsub2}
\end{subfigure}
    
\caption{A standard QPV set-up in $1+1$ dimensions. Inputs are given at spacetime locations $c_1$, $c_2$. The prover should apply a designated quantum operation to these inputs, then return the outputs to points $r_1$, $r_2$. a) An honest prover enters the designated spacetime region (grey) to apply the needed quantum operation. b) A dishonest prover attempts to reproduce the same operation while acting outside the spacetime region; their actions take the form of a non-local quantum computation. Figure reproduced from \cite{may2019quantum}.}
\label{fig:2dsetup}
\end{figure}

Rather than look for unconditional security, we can ask if some computations are much easier to implement in the local, honest, form rather than the non-local, dishonest form. 
Attention has focused on the case where we limit the entanglement shared by Alice and Bob, where the goal is to show that some computation which is reasonably easy to implement locally requires large entanglement to implement non-locally. See for example \cite{beigi2011simplified,tomamichel2013monogamy,bluhm2021position, may2022complexity, gonzales2019bounds, asadi2025linear, asadi2024rank}.

In this bounded entanglement setting, particular attention has been paid to classes of protocols where most of the input is classical, with just $O(1)$ quantum bits, and in particular to schemes where an honest prover need only compute a classical function and do $O(1)$ quantum operations. 
The two natural protocols of this form are $f$-routing and $f$-measure\footnote{This is also referred to as $f$-BB84 in the literature.}, which are leading candidates for practical schemes.  
In $f$-route, Alice receives a classical string $x\in \{0,1\}^n$ along with a quantum system $Q$ described by Hilbert space $\mathcal{H}_Q$, while Bob receives a classical string $y\in\{0,1\}^n$. 
$Q$ is in an arbitrary unknown state.
Their goal is to bring $Q$ to Alice's side if $f(x,y)=0$, or $Q$ to Bob's side if $f(x,y)=1$. 
In $f$-measure, again Alice receives a classical string $x\in \{0,1\}^n$ along with a quantum system $Q$, while Bob receives a classical string $y\in\{0,1\}^n$. 
Now however, $Q$ is one of the BB84 states $H^{f(x,y)}\ket{b}$, $q,b\in\{0,1\}$. 
Alice and Bob's goal is to both output the bit $b$ at the end of the NLQC. 

\subsection{Related work}

Our explorations are in part inspired by a related set of reductions between instances of NLQC and primitives in information-theoretic cryptography \cite{allerstorfer2024relating}. 
Those relationships have already led to a number of useful results \cite{asadi2024conditional, asadi2024rank}, and our work can be seen as extending this to the study of relationships between NLQC tasks themselves. 
We describe the results of \cite{allerstorfer2024relating} briefly below.

In \cite{allerstorfer2024relating}, the $f$-route task was related to a primitive studied in information-theoretic cryptography known as conditional disclosure of secrets (CDS) \cite{gertner1998protecting}.
In CDS two players, Alice and Bob, share correlations (in the classical setting) or entanglement (in the quantum setting) but do not communicate. 
Alice receives a string $s$, the secret, as well as an input string $x\in \{0,1\}^n$; Bob receives an input string $y\in \{0,1\}^n$. 
Alice and Bob can each send a message to a third party, the referee. 
The referee knows $x,y$ and wishes to learn $s$. 
Alice and Bob's goal is to allow the referee to learn $s$ if and only if $f(x,y)=1$ for some Boolean function $f$. 
The work \cite{allerstorfer2024relating} proved that CDS with quantum resources is equivalent to $f$-route: a protocol for quantum CDS can be transformed into one for $f$-route with only a small overhead in the resources used, and vice versa. 
Further, classical CDS schemes give quantum CDS schemes using the same resources.

A second result in \cite{allerstorfer2024relating} related another primitive from information-theoretic cryptography and a second instance of NLQC.
Private-simultaneous message passing (PSM) \cite{ishai1997private} is defined by a choice of Boolean function $f:\{0,1\}^{2n}\rightarrow \{0,1\}$ played by three parties, Alice, Bob and the referee. 
Alice receives input $x\in \{0,1\}^n$; Bob receives input $y\in \{0,1\}^n$.
In this setting the referee does not know $x,y$.
Alice and Bob will each send a message to the referee, who should learn $f(x,y)$ but no other information about $(x,y)$. 
PSM with quantum resources, denoted PSQM, is related to an instance of NLQC known as coherent function evaluation (CFE). 
In CFE, the goal is to implement the isometry
\begin{align}\label{eq:CFE}
    \mathbf{V}[f] = \sum_{x,y} \ket{x,y}_Z \ket{f(x,y)}_{Z'} \bra{x}_X \bra{y}_Y
\end{align}
with $X$ starting on Alice's side, $Y$ starting on Bob's side, and $Z$ ending on Alice's side, $Z'$ ending on Bob's side. 
In \cite{allerstorfer2024relating} a reduction transforming protocols for CFE into protocols for PSQM using similar resources was given. 

\subsection{Our results}

\begin{figure}
    \centering
    \begin{tikzpicture}[scale=0.75]
    \coordinate (fHbasis) at (5, 3);
    \coordinate (fR) at (0, 3);
    \coordinate (Cphase) at (5, 6);
    \coordinate (CSWAP) at (0,6);
    \coordinate (CFE) at (20,3);
    \coordinate (PSQM) at (15,3);
    \coordinate (CDS) at (10,0);
    \coordinate (PSM) at (15,0);
    \coordinate (MIP) at (-5,-6);
    \coordinate (CDQS) at (10,3);
    \coordinate (CPauli) at (10,6);
    
    \draw[thick,mid arrow, blue]  (CSWAP) --  (Cphase) node[midway, inner sep=0pt, outer sep=0pt] {\hyperref[lem:cfswapTOcfphase]{\phantom{\rule{1.3cm}{0.2cm}}}};
    \draw[thick,mid arrow, blue]  (CSWAP) --  (fR) node[midway, inner sep=0pt, outer sep=0pt] {\hyperref[remark:coherenttoincoherent]{\phantom{\rule{0.2cm}{1.2cm}}}};
    \draw[thick,mid arrow]  (CFE) --  (PSQM);
    \draw[thick,mid arrow]  (CDS) -- (CDQS);
    \draw[thick,mid arrow]  (PSM) -- (PSQM);
    
    \draw[thick,mid arrow]  (PSM) -- (CDS);
    \draw[thick,mid arrow]  (PSQM) -- (CDQS);
    \draw[thick,mid arrow]  (CDQS) to [out=-160,in=-20] (fR);
    \draw[thick,mid arrow]  (fR) to [out=20,in=160] (CDQS);
    \draw[thick,mid arrow,blue]  (fHbasis) -- (CDQS) node[midway, inner sep=0pt, outer sep=0pt] {\hyperref[lem:fmeasureTOcdqs]{\phantom{\rule{1.25cm}{0.2cm}}}};

    \draw[thick,mid arrow, blue]  (CPauli) -- (CDQS) node[midway, inner sep=0pt, outer sep=0pt] {\hyperref[thm:PaulitoCDQS]{\phantom{\rule{0.2cm}{1.4cm}}}};
    \draw[thick,mid arrow,blue]  (fR) -- (fHbasis)
    node[midway, inner sep=0pt, outer sep=0pt] {\hyperref[lem:frouteTOfmeasure]{\phantom{\rule{1.25cm}{0.2cm}}}};
    
    \draw[thick,mid arrow,blue]  (Cphase) to [out=20,in=160] (CPauli);
    \draw[thick,mid arrow,blue]  (CPauli) to [out=-160,in=-20] (Cphase);

    \draw[draw = none] (Cphase) -- (CPauli) node[midway, inner sep=0pt, outer sep=0pt, yshift= 2.05ex] {\hyperref[thm:cfphaseTOcfpauli]{\phantom{\rule{1.75cm}{0.2cm}}}};

    \draw[draw = none] (CPauli) -- (Cphase) node[midway, inner sep=0pt, outer sep=0pt, yshift= -2.05ex] {\hyperref[thm:cfphaseTOcfpauli]{\phantom{\rule{1.75cm}{0.2cm}}}};
    
    \filldraw[color=white, fill=white] (PSM) ellipse (1.6 and 0.5);
    \filldraw[color=white, fill=white] (CFE) ellipse (1.6 and 0.5);
    \filldraw[color=white, fill=white] (PSQM) ellipse (1.6 and 0.5);
    \filldraw[color=white, fill=white] (CDS) ellipse (1.6 and 0.5);
    \filldraw[color=white, fill=white] (CSWAP) ellipse (1.6 and 0.5);
    \filldraw[color=white, fill=white] (fHbasis) ellipse (1.6 and 0.5);
    \filldraw[color=white, fill=white] (fR) ellipse (1.7 and 0.7);
    \filldraw[color=white, fill=white] (Cphase) ellipse (1.6 and 0.5);
    \filldraw[color=white, fill=white] (CDQS) ellipse (1.6 and 0.5);
    \filldraw[color=white, fill=white] (CPauli) ellipse (1.6 and 0.5);

    \node at (CDQS) {\hyperref[def:CDQS]{CDQS}};
    \node at (CDS) {CDS};
    \node at (PSQM) {PSQM};
    \node at (CFE) {CFE};
    \node at (PSM) {PSM};
    \node at (CPauli) {\hyperref[def:coherentU]{$Cf$-PAULI}};
    \node at (CSWAP) {\hyperref[def:coherentU]{$Cf$-SWAP}};
    \node at (fHbasis) {\hyperref[def:qubitfbb84]{$f$-measure}};
    \node at (fR) {\hyperref[def:frouting]{$f$-route}};
    \node at (Cphase) {\hyperref[def:coherent_phase]{$Cf$-PHASE}};
    
    \end{tikzpicture}
    \caption{The known reductions among function NLQC tasks and primitives from information-theoretic cryptography. Blue implications are new to this work. Clicking on primitives links to definitions; clicking on blue arrows links to proofs of the corresponding implication. Black arrows are proven in \cite{allerstorfer2024relating}.}
    \label{fig:NLQCmap}
\end{figure}

We explore a variety of NLQC tasks, each with different qualitative features whose relationships to their hardness we explore. 
An overview of the tasks we study and reductions among them appears as \Cref{fig:NLQCmap}. 

One grouping of NLQCs we study are the \emph{classically controlled} protocols, which require implementing a small quantum operation controlled on a large classical computation.
This setting is well studied in the quantum position-verification literature \cite{kent2011quantum,buhrman2014position,asadi2024rank,asadi2025linear,bluhm2022single,escola2025quantum,allerstorfer2025making, lau2011insecurity}. 
There, the hope is that as we increase the size of the classical control, the entanglement cost of the NLQC will go up, even while the quantum resources used by an honest player do not.
The examples of classically controlled NLQCs we study are $f$-route, CDQS, and $f$-measure\footnote{In earlier work this task is sometimes referred to as $f$-BB84.}.
In an $f$-measure task, Alice is given $H^{f(x,y)}\ket{b}_Q$ and the string $x\in \{0,1\}^n$ while Bob is given $y\in\{0,1\}^n$.
Their goal is for both Alice and Bob to output $b$.
We show the following relationship among these settings. 

\begin{theorem}\label{thm:measureandroute}
    $f$-route, $f$-measure and CDQS can all be transformed into one another with constant factor overhead in the entanglement cost.
\end{theorem}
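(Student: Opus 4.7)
The plan is to prove Theorem \ref{thm:measureandroute} by composing constant-overhead reductions into a cycle through all three tasks. Since \cite{allerstorfer2024relating} already establishes $f$-route $\leftrightarrow$ CDQS with constant overhead, the work reduces to supplying the two new implications marked in blue in \cref{fig:NLQCmap}: $f$-route $\to$ $f$-measure (stated as \cref{lem:frouteTOfmeasure}) and $f$-measure $\to$ CDQS (stated as \cref{lem:fmeasureTOcdqs}). Composing these with the pre-existing CDQS $\to$ $f$-route closes the cycle and gives constant-overhead reductions in every direction among the three tasks.

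For $f$-route $\to$ $f$-measure, the key observation is that $H^{f(x,y)}|b\rangle$ is a BB84 state whose measurement basis is exactly the one selected by $f(x,y)$: computational when $f=0$, Hadamard when $f=1$. The natural approach is to feed this qubit into the base $f$-route protocol so that it ends up with Alice when $f=0$ and with Bob when $f=1$, i.e.\ always with the party whose single-basis measurement recovers $b$. The complication is that $f$-measure requires \emph{both} parties to output $b$, not only the one who keeps the qubit after routing, and only a single round of communication is allowed. I would resolve this by first teleporting $|\psi\rangle = H^f|b\rangle$ through an auxiliary Bell pair and then piggybacking the teleportation corrections and measurement outcome onto the classical transcript that the base $f$-route already produces, so that each side can locally reconstruct $b$ during post-processing.

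For $f$-measure $\to$ CDQS, the natural strategy is to use a BB84-style encoding $H^{c}|b\rangle$ of the CDS/CDQS secret as Alice's input to the $f$-measure protocol and to forward the resulting transcripts as the CDQS shares to a referee who knows $x$ and $y$. Correctness and the appropriate security then follow from those of the underlying $f$-measure protocol together with the standard properties of BB84 encodings.

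The main obstacle is the first reduction, because it promotes a one-sided routing task to a two-party decoding task without an extra round of communication; the teleportation-and-piggyback idea is the central technical step, and verifying that it incurs only $O(1)$ additional ebits is where the constant-overhead bookkeeping lives. Once the two lemmas are in place, Theorem \ref{thm:measureandroute} itself follows by chasing the cycle of reductions in \cref{fig:NLQCmap}.
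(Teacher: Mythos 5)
Your top-level architecture is exactly the paper's: compose the known equivalence CDQS $\leftrightarrow$ $f$-route from \cite{allerstorfer2024relating} with two new $O(1)$ reductions, $f$-route $\Rightarrow$ $f$-measure and $f$-measure $\Rightarrow$ CDQS, and chase the cycle. However, your proposed proof of the first of these lemmas contains a genuine gap. You correctly identify the difficulty --- \emph{both} parties must output $b$, while routing delivers the qubit to only one of them --- but the teleport-and-piggyback fix cannot work: the classical outcome of a teleportation measurement is, by design, uniformly random and statistically independent of the teleported state, so appending those bits to the transcript gives the qubit-less party zero information about $b$. The paper's actual construction is different and is the real content of the lemma: Alice ``copies'' her input qubit twice, once via a computational-basis CNOT and once via Hadamard-basis CNOTs, producing a four-qubit state $ABCD$; Alice and Bob then run $f$-route on $B$ and $\neg f$-route on $C$ (and Alice sends $D$), so that after routing each party holds two qubits; finally each party measures both of their qubits in the basis determined by $f(x,y)$ and outputs the \emph{parity} of the two outcomes, which equals $b$ on each side for all four input states. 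Nothing in your sketch produces correlated information on both sides, so the reduction as proposed would fail.

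Your second reduction is closer in spirit to the paper's but is underspecified in ways that matter for security. The paper encodes a \emph{uniformly random} bit $r$ in the \emph{Hadamard} basis as $H\ket{r}_Q$, runs only the first round of the $f$-measure protocol, sends \emph{only one party's} post-communication system $M$ to the referee while discarding the other party's system $M'$, and has Alice additionally transmit $s\oplus r$. Each of these choices is load-bearing: if both shares were forwarded (as ``forward the resulting transcripts'' suggests), the referee would hold essentially the full purification of the reference qubit and could simply measure in the Hadamard basis to learn $r$ even when $f(x,y)=0$. Security is then not a ``standard property of BB84 encodings'' but an application of the entropic uncertainty relation (\cref{thm:CIT}): $\epsilon$-correctness of $f$-measure forces $S(Z|M')_{\rho^C}\leq h(\epsilon)$ for the computational-basis outcome, hence $S(Z|M)_{\rho^H}\geq 1-h(\epsilon)$ for the Hadamard-basis outcome, and Pinsker's inequality converts this into closeness of $\rho_M$ to an $r$-independent simulator state. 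You would need to supply these steps (and the one-time-pad argument of \cref{thm:CDQSprimearbitrary} to pass from a random $r$ to an arbitrary secret $s$) for the lemma to go through.
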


This is stated more formally in the main text using our definition of a reduction among NLQCs. 
The proof leverages the existing result that such a transformation exists between $f$-route and CDQS, and then we combine this with proofs that $f$-route $\Rightarrow$ $f$-measure and $f$-measure $\Rightarrow$ CDQS. 
In \Cref{sec:simplifications} we describe the simplifications of the literature or new results that follow from the equivalence of $f$-route and $f$-measure. 
Indeed, they are the best studied examples of NLQCs, and many properties previously proved for each separately now follow by proving it for only one of the two equivalent primitives. 
For example, $f$-measure gains an upper bound of $2^{O(\sqrt{n\log n})}$ via this connection, and efficient protocols for all functions in the complexity class $\mathsf{Mod}_k\mathsf{L}$. 

We also explore several classes of \emph{coherently controlled} operations, which implement $O(1)$ size gates controlled on inputs that may be in superposition. 
In particular we focus on NLQC implementations of the following three unitaries, 
\begin{align}
    Cf\text{-SWAP}_{AA'BB'} &= \sum_{x,y} \text{SWAP}^{f(x,y)}_{A'B'}\otimes \ketbra{x}{x}_A\otimes \ketbra{y}{y}_B \nonumber \\
    Cf\text{-PHASE}_{AA'BB'} &= \sum_{x,y} (-1)^{f(x,y)} \,\ketbra{x}{x}_A\otimes \ketbra{y}{y}_B \nonumber \\
    Cf\text{-}\mathbf{Z}_{AA'BB'} &= \sum_{x,y} \mathbf{Z}^{f(x,y)}_{A'}\otimes \ketbra{x}{x}_A\otimes \ketbra{y}{y}_B.
\end{align}
Note that by acting with local unitaries before and after implementing the NLQC protocol, we can convert a coherently controlled protocol for a unitary $\mathbf{U}_{A'B'}$ into one for $(\mathbf{V}_A\otimes \mathbf{V}_B) \mathbf{U}_{A'B'} (\mathbf{V}_A^\dagger\otimes \mathbf{V}_B^\dagger)$ for any choice of $\mathbf{V}_{A'}, \mathbf{V}_{B'}$. 
This means we can view the above unitaries as representatives of a class of equivalent NLQCs. 
With this in mind, we will refer also to the $Cf$-PAULI, which requires we implement $Cf$-$\mathbf{Z}$ or any unitary of the form $\mathbf{V}_{A'}\mathbf{Z}_{A'}\mathbf{V}_{A'}^\dagger$ (which includes Pauli $\mathbf{X}$ and $\mathbf{Y}$). 

We prove reductions among the coherently controlled NLQC tasks.
To start out, we observe that a $Cf$-SWAP protocol implies a $Cf$-PHASE protocol for the same function using similar resources.
This follows by running the $Cf$-SWAP operation with the $A'B'$ system chosen to be in the $-1$ eigenstate of the SWAP operator.
We were not able to find an implication from $Cf$-PHASE to $Cf$-SWAP, so $Cf$-SWAP may be stronger. 

In a similar spirit, choosing $A'$ to be in the state $\ket{1}_{A'}$ shows $Cf$-$\mathbf{Z}$ (and hence $Cf$-PAULI) implies $Cf$-PHASE. 
Conversely, we show that a $Cf$-PHASE protocol for the function $f\wedge z_1$ for $z_1$ a single bit gives a protocol for $Cf$-PAULI for the same function. 
We can add controls to the implemented unitary by concatenating with the AND operation further: $CZ$ controlled on function $f$ is implied by a $Cf$-PHASE protocol for $f\wedge z_1\wedge z_2$, $CCZ$ controlled on $f$ is implied by a $Cf$-PHASE protocol for $f\wedge z_1\wedge z_2 \wedge z_3$, etc. 
These observations raise the question of whether $f$ and $f\wedge z_1$ have similar costs in $Cf$-PHASE. 
We show that indeed this is the case: a $Cf$-PHASE protocol for $f\wedge z_1$ costs at most $1$ additional EPR pair compared to $f$.
This establishes that NLQCs for $Cf$-$CC...CZ$ for any constant number of controls are all equivalent under constant entanglement overhead reductions. 

We defined $Cf$-SWAP as a natural choice of coherent NLQC task which is strong enough to imply $f$-route. 
$Cf$-PHASE appears to be weaker than $Cf$-SWAP, but we can ask if it is strong enough to recover the incoherent NLQC tasks. 
We find that indeed $Cf$-PHASE actually suffices to obtain the classically controlled tasks $f$-route, $f$-measure and CDQS. 
We do this by proving an implication from the equivalent task $Cf$-PAULI to CDQS.
We leave open understanding if all (non-trivial) coherent tasks are stronger than all incoherent tasks. 

As a final additional result, we show that in the NLQC context the ability to efficiently implement a unitary that interchanges between two states $\ket{\psi_0}, \ket{\psi_1}$ implies the ability to efficiently distinguish between the two ``Fourier'' states $\ket{\phi_{\pm}}=\frac{1}{\sqrt{2}}(\ket{\psi_0}\pm \ket{\psi_1})$. 
This is inspired by a similar result which holds in the computational complexity context \cite{aaronson2020hardness}.  
Note that in the complexity context the ability to apply interchange efficiently implies the ability to distinguish the Fourier states efficiently, and vice versa. 
We were motivated to study if this result also holds in the NLQC context. 
This is because, for instance, it has been conjectured \cite{may2022complexity} that the complexity of a unitary controls its entanglement cost as an NLQC. 
According to this conjecture, interchange and distinguish should have similar entanglement costs. 
We were able to establish this, but only in one direction.
This is our only result relating classes of NLQCs which are not characterized by Boolean functions; we find this result suggestive that there may be other interesting reductions among state transformations or unitaries, though we do not explore this in depth here. 

\section{Background and preliminaries}

\subsection{Tools from quantum information theory}

We label the maximally entangled state on $\mathcal{H}_A\otimes \mathcal{H}_B$ by $\Psi^+_{AB}$. 
When we don't specify the Hilbert spaces, $\Psi^+$ denotes a maximally entangled state on two qubits. 

Let $\mathcal{D}(\mathcal{H}_A)$ be the set of density matrices on the Hilbert space $\mathcal{H}_A$. 
Given two density matrices $\rho$, $\sigma\in \mathcal{D}(\mathcal{H}_A)$,  define the fidelity,
\begin{align}
    F(\rho,\sigma) \equiv \left( \tr\left(\sqrt{\sqrt{\rho}\,\sigma\sqrt{\rho}}\right)\right)^2
\end{align}
which is related to the trace distance $\frac{1}{2}\|\rho-\sigma\|_1$ by the Fuchs-van de Graaf inequalities, 
\begin{align}\label{eq:FVDG}
    1- \sqrt{F(\rho,\sigma)} \leq \frac{1}{2}\|\rho-\sigma\|_1 \leq \sqrt{1-F(\rho,\sigma)}.
\end{align}
The fidelity and trace distance allow us to characterize distances between quantum states. 
To characterize distances between quantum channels, define the diamond norm distance \cite{kitaev2002classical,wilde2013quantum}. 
\begin{definition} Let $\mathbfcal{N}_{B\rightarrow C}, \mathbfcal{M}_{B\rightarrow C}: \mathcal{L}(\mathcal{H}_B)\rightarrow \mathcal{L}(\mathcal{H}_C)$ be quantum channels. 
The \textbf{diamond norm distance} is defined by 
\begin{align}
    \|\mathbfcal{N}_{B\rightarrow C}-\mathbfcal{M}_{B\rightarrow C}\|_\diamond = \sup_{d} \max_{\Psi_{A_dB}}\|\mathbfcal{N}_{B\rightarrow C}(\Psi_{A_dB}) - \mathbfcal{M}_{B\rightarrow C}(\Psi_{A_dB})\|_1
\end{align}
where $\Psi_{A_dB}\in \mathcal{D}(\mathcal{H}_{A_d}\otimes \mathcal{H}_B)$ and $\mathcal{H}_{A_d}$ is a $d$ dimensional Hilbert space. 
\end{definition}
The diamond norm distance has an operational interpretation in terms of the maximal probability of distinguishing quantum channels.

We also make use of the von Neumann entropy and some related quantities and inequalities. 
The von Neumann entropy of a density matrix $\rho_A$ is
\begin{align}
    S(A)_\rho = -\tr(\rho_A\log \rho_A)
\end{align}
where we use base $2$ logarithms here and throughout the paper. 
The conditional quantum entropy is
\begin{align}
    S(A|B)_\rho = S(AB)_\rho - S(B)_\rho.
\end{align}
and the mutual information is
\begin{align}
    I(A:B)_\rho = S(A)_\rho + S(B)_\rho - S(AB)_\rho.
\end{align}
We also use the relative entropy, which is given by
\begin{align}
    D(\rho\|\sigma) = \tr \rho\log \rho - \tr \rho\log \sigma 
\end{align}
when the support of $\rho$ is contained within the support of $\sigma$, and defined to be infinity otherwise. 
Pinsker's inequality relates the relative entropy and the trace distance,
\begin{align}\label{eq:pinsker}
    \|\rho-\sigma \|_1 \leq \sqrt{(2\ln 2 )D(\rho\|\sigma)}.
\end{align}
We use the identity
\begin{align}
    I(A:B)_\rho = D(\rho_{AB}\|\rho_A\otimes \rho_B).
\end{align}

We exploit the following inequality relating conditional entropies in a tri-partite state. 
\begin{theorem}[CIT \cite{renes2009conjectured,berta2010uncertainty}]\label{thm:CIT}
    Let $\ket{\psi}_{REF}$ be an arbitrary tripartite state, with $R$ a single qubit. 
    Consider measurements on the $R$ system that produce a measurement result we store in register ${Z}$. Consider measurements in both the computational and Hadamard basis, and denote the post-measurement state when measuring in the computational basis by $\rho_{ZEF}^C$, and when measuring in the Hadamard basis by $\rho_{ZEF}^H$.  
    Then,
    \begin{align*}
        S(Z|E)_{\rho^C} + S(Z|F)_{\rho^H} \geq 1 \enspace.
    \end{align*}
\end{theorem}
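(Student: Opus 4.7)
The statement is the qubit case (with $\log(1/c) = 1$ for two mutually unbiased bases) of the Renes--Boileau tripartite conditional entropic uncertainty relation, and my plan is to re-derive it from three standard ingredients.

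First I would write down the two post-measurement states. Decomposing $\ket{\psi}_{REF} = \ket{0}_R\otimes\ket{\phi_0}_{EF} + \ket{1}_R\otimes\ket{\phi_1}_{EF}$ with unnormalized $\ket{\phi_i}$, the computational measurement yields $\rho^C_{ZEF} = \sum_z\ketbra{z}{z}_Z\otimes\ketbra{\phi_z}{\phi_z}_{EF}$, and the Hadamard measurement produces the analogous state with $\ket{\tilde\phi_z} = (\ket{\phi_0}+(-1)^z\ket{\phi_1})/\sqrt{2}$. The crucial feature here is that, conditioned on any outcome $z$, the residual state on $EF$ is pure, so the purification structure of $\ket{\psi}_{REF}$ is preserved through the classical record. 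With this in hand I would invoke the bipartite smooth-entropy uncertainty relation $H_{\min}^\epsilon(Z|B)_{\sigma^C} + H_{\max}^\epsilon(Z|B)_{\sigma^H} \geq \log(1/c)$ together with the smooth-entropy duality $H_{\max}^\epsilon(A|B)_\chi = -H_{\min}^\epsilon(A|C)_\chi$ valid on pure states. Applied to the cq-purifications from the previous step, duality converts the $H_{\max}$ term from conditioning on the full purifying system to $-H_{\min}$ on its complement, which by the cq-purity above identifies with a term conditioned on $F$ alone; the symmetric manipulation handles the other term. Passing from smooth to von Neumann entropies via the asymptotic equipartition property applied to $n$ i.i.d.\ copies of $\ket{\psi}_{REF}$, followed by $n\to\infty$ and $\epsilon\to 0$, then produces the claimed bound.

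The main obstacle is the duality step. A naive application of the bipartite von Neumann entropic uncertainty relation with quantum memory gives only $S(Z|E)_{\rho^C} + S(Z|F)_{\rho^H} \geq \log(1/c) + S(R|EF)_\psi$, and for a pure $\ket{\psi}_{REF}$ with $R$ a qubit the correction term $S(R|EF)_\psi = -S(R)_\psi$ can be as negative as $-1$, eating the entire $\log(1/c) = 1$ bound. Avoiding this penalty genuinely requires the smooth-entropy plus duality route above, which exploits that the cq-structure of the post-measurement state keeps $EF$ pure conditional on $Z$; this is the heart of the argument, while the remaining steps are bookkeeping with standard entropy identities.
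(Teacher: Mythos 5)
The paper does not actually prove \cref{thm:CIT}; it is imported as a black box from \cite{renes2009conjectured,berta2010uncertainty}, so there is no in-paper proof to compare against and your attempt must be judged against the standard literature argument. Your setup is fine: the decomposition of $\ket{\psi}_{REF}$, the two post-measurement states, and the observation that the conditional states on $EF$ are pure all correctly isolate the structural fact that drives the theorem.

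The gap is in the inequality you take as your starting point. The relation $H_{\min}^\epsilon(Z|B)_{\sigma^C} + H_{\max}^\epsilon(Z|B)_{\sigma^H} \geq \log(1/c)$, with both terms conditioned on the \emph{same} memory and no correction term, is false: take $R$ maximally entangled with a qubit of $B=EF$; then both measurement outcomes are perfectly predictable from $B$, both entropies vanish, and the left-hand side is $0$. The valid smooth-entropy statement (Tomamichel--Renner) is already the tripartite one, $H_{\min}^\epsilon(X|B) + H_{\max}^\epsilon(Z|C) \geq \log(1/c)$, and it is the hard part of the proof, not something you can recover by a duality step bolted onto a bipartite relation. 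Your duality step also does not do what you claim: purifying $\sigma^H_{ZEF}$ requires an extra register $Z'$ (a coherent copy of $Z$), so duality converts $H_{\max}^\epsilon(Z|EF)_{\sigma^H}$ into $-H_{\min}^\epsilon(Z|Z')$, not into anything conditioned on $F$ alone. Finally, your second paragraph dismisses the one route that actually works. The bipartite von Neumann relation reads $S(Z|E)_{\rho^C} + S(Z|E)_{\rho^H} \geq \log(1/c) + S(R|E)_\psi$ (same conditioning system in both terms; the correction is $S(R|E)$, not $S(R|EF)$). Because the conditional states on $EF$ are pure, one has the exact identity $S(ZE)_{\rho^H} = S(ZF)_{\rho^H}$, hence $S(Z|E)_{\rho^H} = S(Z|F)_{\rho^H} + S(F)_\psi - S(E)_\psi$; substituting and using $S(RE)_\psi = S(F)_\psi$ for the pure $\ket{\psi}_{REF}$ gives $S(Z|E)_{\rho^C} + S(Z|F)_{\rho^H} \geq \log(1/c) + S(RE)_\psi - S(F)_\psi = 1$. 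So the smooth-entropy machinery is not ``genuinely required''; the von Neumann duality route closes the proof, whereas the route you propose starts from a false inequality.
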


\subsection{NLQC reductions}

We first define what a NLQC aims to achieve, which is to complete what we call a $2\rightarrow 2$ quantum task. 
\begin{definition}
    A \textbf{$2\rightarrow 2$ quantum task} is defined by a pair of input systems $A$, $B$, a pair of output systems $A'$, $B'$, and a set of input/output state pairs $\mathcal{S}=\{ (\rho_{RAB},\sigma_{RA'B'})\}$. We require that there exists a quantum channel $\mathbfcal{M}_{AB \to A'B'}$ such that $(\mathcal{I}_R \otimes \mathbfcal{M}_{AB\rightarrow A'B'})(\rho_{RAB}) = \sigma_{RA'B'}$ for all $\rho_{RAB}$.
\end{definition}
This definition of task could be further generalized to the case when multiple outputs are allowed per a given input state (i.e., a quantum analogue of a search problem), but we leave that for future exploration.

We will also be interested in \emph{families} of $2\rightarrow 2$ quantum tasks, which are collections of $2\rightarrow 2$ tasks parameterized by a natural number $n$.
The parameter $n$ will correspond to an input size with the exact relation specified in the definition of each family of tasks.  
We will label families of tasks with capital letters $F$, $G$, etc., where these denote sets of tasks, so that $F=\{F_n\}_n$, $G=\{G_n\}_n$, etc. 
As an example, the $f$-route example mentioned in the introduction with a choice of Boolean function family $\{f_n\}_n$ defines a family of $2\rightarrow 2$ task, with each member of the family labelled by an element of $\{f_n\}_n$.  

We are interested in implementing $2\rightarrow 2$ quantum tasks in the form of NLQCs, which we define more carefully next. 
\begin{definition}
    A \textbf{non-local quantum computation (NLQC)} is a channel in the form
    \begin{align}
        \mathbfcal{N}_{AB\rightarrow A'B'}(\,\cdot\,) =(\mathbfcal{W}_{K_aM_a\rightarrow A'}\otimes\mathbfcal{W}_{K_bM_b\rightarrow B'})\circ (\mathbfcal{V}_{AL\rightarrow K_{a}M_b}\otimes \mathbfcal{V}_{RB\rightarrow M_aK_b})(\,\cdot\, \otimes\Psi_{LR}).
    \end{align}
    We refer to $\Psi_{LR}$ as the resource system, the $\mathbfcal{V}$ channels as the first round operations, and the $\mathbfcal{W}$ channels as the second round operations.
    
    We say a NLQC is an $\epsilon$-correct implementation of a $2\rightarrow 2$ task $F_n$ if the channel $\mathbfcal{N}_{AB\rightarrow A'B'}$ implemented as a NLQC is $\epsilon$-close in diamond norm to the channel $\mathbfcal{M}_{AB\rightarrow A'B'}$ relating the input and output states in the definition of the $2\rightarrow 2$ task.
\end{definition}
We also introduce the following convention.  
When considering a specific $2\rightarrow 2$ task, call it $X$, and referring to an implementation of this as a non-local quantum computation, we will write ``the non-local computation $X$''. 
This can be understood as shorthand for ``implementations of the $2\rightarrow 2$ task $X$ in the form of a non-local quantum computation''. 

We are interested in understanding the relative difficulty of implementing different $2\rightarrow 2$ tasks as NLQCs.
To do this, we will define a notion of reduction between $2\rightarrow 2$ tasks. 
Heuristically, our notion of reduction says that the task $G$ reduces to $F$ when (a few copies of) the resources to implement $F$ as a NLQC can be used to implement $G$ as a NLQC. 

To formalize our notion of a reduction among $2\rightarrow 2$ tasks, it is helpful to recall the notion of reduction among computational problems. 
There we say a function family $A=\{a_n\}_n$ is polynomial-time Turing reducible to function family $B=\{b_n\}$ if a poly$(n)$ time machine with oracle access to $B$ can solve $A$. 
More generally, we can replace `poly-time' with any other complexity class, call it $\mathcal{X}$; the notion of reduction is most meaningful when $\mathcal{X}$ is itself too weak to implement $A$ or $B$.
Inspired by this definition, we give the following definition of reduction among NLQC classes. 
\begin{definition}\label{def:reduction}
    Let $F$ and $G$ be families of $2\rightarrow 2$ task, and $\alpha$, $\beta$, $\delta$ all be functions of $(n,\epsilon)$, with $\delta\rightarrow0$ as $\epsilon\rightarrow 0$.
    Then we say there is a $(\alpha, \beta,\delta)$-\textbf{reduction} from $G$ to $F$ if, for any resource state $\ket{\Psi^{n,\epsilon}}_{LR}$ which implements $F_n$ at least $\epsilon$-correctly, it is possible to implement $G_n$ $\delta$-correctly using $\alpha$ copies of $\ket{\Psi^{n,\epsilon}}_{LR}$ along with $\beta$ additional qubits of shared resource state.
\end{definition}
When it is necessary to distinguish this notion of a reduction from other similar notions (e.g. \Cref{def:oraclereduction} below), we refer to it as a \emph{resource state reduction}. 
When there is a reduction from family $F$ to family $G$, we will also say there is an \textbf{implication} from $G$ to $F$, and write $G \Rightarrow F$.

Regarding the values of $\alpha(n)$ and $\beta(n)$, in practice we will find reductions that have $\alpha(n)=O(1)$, $\beta(n)=O(1)$. 
We define an $O(1)$ reduction to be one where $\alpha=O(1)$, $\beta=O(1)$. 
Weaker reductions that rely on larger values of $\alpha(n), \beta(n)$ can also be meaningful, at least when $\beta(n)$ is smaller than the number of qubits of resource needed to implement $F$ and $G$.
The requirement on $\delta$ that $\delta \rightarrow 0$ as $\epsilon\rightarrow 0$ is needed to ensure our notion of reduction is not trivial. 
This is because many $2\rightarrow 2$ tasks can be implemented with some non-zero accuracy trivially, without using any resource state, but become non-trivial at a certain threshold of accuracy. 
For instance in $f$-route, the quantum system that is being routed can be brought left or right at random, completing the task correctly half the time.
Thus a construction showing that some other $2\rightarrow 2$ task can be used to complete $f$-route with probability $1/2$ is not meaningful, since we don't need to use any resource state at all. 
Requiring $\delta \rightarrow 0$ when $\epsilon\rightarrow 0$ ensures the resource state for $F$ is strong enough, at least if $F$ is implemented with high correctness, to implement a highly correct (and hence non-trivial) $G$. 

Other notions of reduction among $2\rightarrow 2$ tasks are also possible, which highlight different aspects of their structure.
For instance, we have not required that our resource state reductions are communication efficient, nor that they involve small computational overheads. 
Both aspects of NLQC could be explored with other notions of reduction.

One further notion of reduction that we make use of requires that the protocol for $G_n$ be given by using the protocol for $F_n$ used as an oracle, meaning it has access only to copies of the implementation of $F_n$ but not access directly to the state $\ket{\Psi^{n,\epsilon}}_{LR}$. 
We give a definition of this notion of reduction next. 
\begin{definition}\label{def:oraclereduction}
    Let $F$ and $G$ be families of $2\rightarrow 2$ task, and $\alpha$, $\beta$, $\delta$ all be functions of $(n,\epsilon)$, with $\delta\rightarrow0$ as $\epsilon\rightarrow 0$.
    Then we say that there is a $(\alpha, \beta,\delta)$ \textbf{oracle reduction} from $G$ to $F$ if $G$ can be implemented $\delta$ correctly by using $\alpha$ parallel implementations of $F$ along with $\beta$ additional qubits of resource system and communication. 
\end{definition}
The structure of an implementation of $G$ using $F$ as an oracle is shown in \Cref{fig:oraclereduction}. 

\begin{figure}
    \centering
    \begin{tikzpicture}[scale=0.5]

    \draw[thick, fill=gray,opacity=0.5] (-3,-1.5) -- (3,-1.5) -- (3,1.5) -- (-3,1.5) -- (-3,-1.5);
    
    \draw[thick] (-5,-5) -- (-5,-3) -- (-3,-3) -- (-3,-5) -- (-5,-5);
    
    \draw[thick] (5,-5) -- (5,-3) -- (3,-3) -- (3,-5) -- (5,-5);
    
    \draw[thick] (5,5) -- (5,3) -- (3,3) -- (3,5) -- (5,5);
    
    \draw[thick] (-5,5) -- (-5,3) -- (-3,3) -- (-3,5) -- (-5,5);
    
    \draw[thick,mid arrow,dashed] (-4.5,-3) -- (-4.5,3);
    
    \draw[thick,mid arrow,dashed] (4.5,-3) -- (4.5,3);
    
    \draw[thick,mid arrow, dashed] (-4,-3) -- (-4,-1.5) to [out=90,in=-90] (4,3);
    
    \draw[thick,mid arrow, dashed] (4,-3) -- (4,-1.5) to [out=90,in=-90] (-4,3);

    \draw[thick,gray] (-3.5,-3) -- (-2,-1.5);
    \draw[thick,gray] (3.5,-3) -- (2,-1.5);
    \draw[thick,gray] (3.5,3) -- (2,1.5);
    \draw[thick,gray] (-3.5,3) -- (-2,1.5);
    
    \draw[thick] (-3.5,-6) -- (3.5,-6) -- (0,-8) -- (-3.5,-6);
    \draw[thick] (-3.25,-6) -- (-3.25,-5);
    \draw[thick] (3.25,-6) -- (3.25,-5);
    \node at (0,-7) {$\Phi$};

    \node at (0,-0.25) {$F_n^{\otimes \alpha(n)}$};
    
    \draw[thick] (-4.5,-6) -- (-4.5,-5);
    \draw[thick] (4.5,-6) -- (4.5,-5);
    
    \draw[thick] (4.5,5) -- (4.5,6);
    \draw[thick] (-4.5,5) -- (-4.5,6);
    
    \end{tikzpicture}
    \caption{An oracle reduction from a NLQC $G$ to $F$. The protocol for $G_n$ uses $\alpha(n)$ instances of $F_n$, plus a $\beta(n)$ qubit resource system $\Phi$ and $\beta(n)$ qubits of communication. }
    \label{fig:oraclereduction}
\end{figure}

Compared to the notion of resource state reduction in \Cref{def:reduction}, an oracle reduction gives a tighter relationship between task families. 
This is because in a resource state reduction, we've shown that a resource state that works for $F$ also (up to certain overheads) works for $G$, while an oracle reduction implies not only that, but that additionally the local operations used in $F$ can also be applied to complete $G$. 
Resource state reductions allow greater flexibility in what operations are performed locally to exploit the non-local resources, and reductions under this definition focus on the power of non-local resources for completing NLQCs. 
For example, some of the reductions described by \Cref{thm:measureandroute} are not oracle reductions. 
Note also that if there is an oracle reduction from $F$ to $G$ then it immediately follows that there is a resource state reduction from $F$ to $G$ in the sense of \Cref{def:reduction}. 
When not further specified `reduction' in this article will always mean resource state reduction. 

A useful fact is that oracle reductions have convenient error-propagation properties. 
This follows from the next remark.
\begin{remark}\label{remark:blackboxerror}
    Suppose that there is a protocol which uses parallel implementations of channels $\mathbfcal{N}_1, \mathbfcal{N}_2,...$ to execute some target channel $\mathbfcal{N}$, so that
    \begin{align}
        \mathbfcal{N} = \mathbfcal{F}\circ \left(\bigotimes_{i=1}^m \mathbfcal{N}_i \right) \circ \mathbfcal{P}.
    \end{align}
    Then suppose we replace the exact implementations of the $\mathbfcal{N}_i$ with approximate implementations $\overline{\mathbfcal{N}}_i$ which satisfy $\Vert \mathbfcal{N}_i-\overline{\mathbfcal{N}}_i\Vert_\diamond \leq \epsilon_i$. 
    Then
    \begin{align}
        \overline{\mathbfcal{N}} = \mathbfcal{F}\circ \left(\bigotimes_{i=1}^m \overline{\mathbfcal{N}}_i \right) \circ \mathbfcal{P}
    \end{align}
    is $\sum_i \epsilon_i$ close in diamond norm to $\mathbfcal{N}$. 
\end{remark}
This remark follows from the properties of the diamond norm. 

To apply this remark to our oracle reductions, note that any realization of $G_n$ using oracle instances of $F_n$ must be exactly correct when $F_n$ is exactly correct. 
This is because we require that $\delta \rightarrow 0$ as $\epsilon\rightarrow 0$. 
When we instead use $\epsilon$ correct realizations of $F_n$, we get from the remark above that the diamond norm distance to the perfectly correct channel is at most $\alpha \cdot \epsilon$, where $\alpha $ is the number of instances of $F_n$ used. 

Finally, we will also discuss sets of $2\rightarrow 2$ tasks which are larger than the families introduced above. 
These sets have an additional parameter aside from the input size. 
For example, we can discuss the family of tasks given by routing on a particular function family $\{f_n\}_n$, but we can also go further and consider the set of tasks consisting of routing for any possible choice of Boolean function $f$.  
We refer to these larger sets of tasks as classes of $2\rightarrow 2$ task or classes of NLQCs. 
In this paper these classes are parameterized either by a choice of Boolean function, as in $f$-routing, or a choice of pair of quantum states. 
Other sets parameterized differently could also be defined. 

\section{Some classes and their properties}

In this section we collect definitions of NLQC tasks that we study here. 
Most of these tasks have been studied elsewhere in the literature. 

\subsection{Classically controlled operations}
 
First we define $f$-route, introduced in \cite{kent2011quantum, buhrman2013garden}. 
\begin{definition}\label{def:frouting}
    An \textbf{$f$-route} NLQC task is defined by a choice of Boolean function $f:\{ 0,1\}^{2n}\rightarrow \{0,1\}$, and a $d_Q$ dimensional Hilbert space $\mathcal{H}_Q$.
    Inputs $x\in \{0,1\}^{n}$ and system $Q$ are given to Alice, and input $y\in \{0,1\}^{n}$ is given to Bob.
    Alice and Bob exchange one round of communication, with the combined systems received or kept by Bob labelled $M$ and the systems received or kept by Alice labelled $M'$.
    Label the combined actions of Alice and Bob in the first round as $\mathbfcal{N}^{x,y}_{Q\rightarrow MM'}$. 
    The $f$-route task is completed $\epsilon$-correctly if there exists a family of channels $\mathbfcal{D}^{x,y}_{M\rightarrow Q}$ such that,
    \begin{align}
        \forall (x,y)\in X\times Y \,\,\, s.t. \,\, f(x,y)=1,\,\,\, \|\mathbfcal{D}^{x,y}_{M\rightarrow Q} \circ\tr_{M'} \circ\mathbfcal{N}^{x,y}_{Q\rightarrow MM'} -\mathbfcal{I}_{Q\rightarrow Q}\|_\diamond \leq \epsilon
    \end{align}
    and there exists a family of channels $\mathbfcal{D}^{x,y}_{M'\rightarrow Q}$ such that
    \begin{align}
        \forall (x,y)\in X\times Y \,\,\, s.t. \,\, f(x,y)=0,\,\,\,\|\mathbfcal{D}^{x,y}_{M'\rightarrow Q} \circ\tr_{M} \circ\mathbfcal{N}^{x,y}_{Q\rightarrow MM'} -\mathbfcal{I}_{Q\rightarrow Q}\|_\diamond \leq \epsilon.
    \end{align}
    In words, Bob can recover $Q$ if $f(x,y)=1$ and Alice can recover $Q$ if $f(x,y)=0$. 
\end{definition}

\begin{remark}\label{remark:routeandSWAP}
$f$-route is equivalent to $f$-SWAP, where in $f$-SWAP the goal is to implement SWAP$_{A_0B_0}^{f(x,y)}$ with $A_0$ on Alice's side and $B_0$ on Bob's side. 
We take the error to be the diamond norm distance between the implemented protocol and implementing SWAP$_{A_0B_0}^{f(x,y)}$ exactly.
We can implement $f$-SWAP by first teleporting system $B_0$ to Alice, and then playing $f$-route on $A_0$ and $\neg f$-route on $B_0$. This costs twice the resources used for $f$- route, and $O(1)$ extra entanglement for the teleportation of $B_0$.
\\
The maximal error in the two $f$-route protocols used gives an upper bound on the $f$-SWAP error.
$f$-SWAP immediately gives $f$-route with the same error by taking $B_0$ to be in any fixed state.
\end{remark}

$f$-route was shown to be related to CDQS in \cite{allerstorfer2024relating}; in our language their proof shows a $O(1)$ reduction from $f$-route to CDQS and vice versa. 
We define CDQS next, first defined in the quantum context in \cite{asadi2024conditional}. 
\begin{definition}\label{def:CDQS}
    A \textbf{conditional disclosure of secrets} task with quantum resources (CDQS) is defined by a choice of function $f:\{0,1\}^{2n}\rightarrow \{0,1\}$, and a $d_Q$ dimensional Hilbert space $\mathcal{H}_Q$ which holds the secret.
     The task involves inputs $x\in \{0,1\}^{n}$ and system $Q$ given to Alice, and input $y\in \{0,1\}^{n}$ given to Bob.
    Alice sends message system $M_a$ to the referee, and Bob sends message system $M_b$. 
    Label the combined message systems as $M=M_aM_b$.
    Label the quantum channel defined by Alice and Bob's combined actions $\mathbfcal{N}_{Q\rightarrow M}^{x,y}$. 
    We put the following two conditions on a CDQS protocol. 
    \begin{itemize}
        \item $\epsilon$\textbf{-correct:} There exists a channel $\mathbfcal{D}^{x,y}_{M\rightarrow Q}$, called the decoder, such that
        \begin{align}
            \forall (x,y)\in X\times Y \,\,\, s.t. \,\, f(x,y)=1,\,\,\, \|\mathbfcal{D}^{x,y}_{M\rightarrow Q}\circ \mathbfcal{N}^{x,y}_{Q\rightarrow M} - \mathbfcal{I}_{Q\rightarrow Q}\|_\diamond \leq \epsilon.
        \end{align}
        \item $\delta$\textbf{-secure:} There exists a quantum channel $\mathbfcal{S}_{\varnothing \rightarrow M}^{x,y}$, called the simulator, such that
        \begin{align}
            \forall (x,y)\in X\times Y \,\,\, s.t. \,\, f(x,y)=0,\,\,\, \|\mathbfcal{S}_{\varnothing \rightarrow M}^{x,y} \circ \tr_Q - \mathbfcal{N}_{Q\rightarrow M}^{x,y}\|_\diamond \leq \delta.
        \end{align}
    \end{itemize}
    We call the log-dimension of $M$ the communication cost of the CDQS protocol. 
    We call the log-dimension of the (possibly entangled) state shared by Alice and Bob at the beginning of the protocol the entanglement cost of the protocol. 
\end{definition}
The correctness parameter $\epsilon$ captures how well the secret can be recovered by the referee in $f^{-1}(1)$ instances; the security parameter $\delta$ captures how close the message $M$ (seen by the referee) is to a state that doesn't depend on the input at all. Thus when $\delta=0$, the message system is a fixed state independent of the secret, and hence doesn't reveal the secret. 

CDQS was introduced in \cite{allerstorfer2024relating}, where its equivalence to $f$-route was shown.
Note that CDQS is not formally an NLQC task, though it is closely related and serves as a useful intermediate setting in our proofs. 
The same reference also showed that classical CDS protocols also give CDQS protocols, which allowed results on classical CDS to be applied to $f$-route.  
CDQS was studied further in \cite{asadi2024conditional, girish2025comparing}, where a number of properties are established. 
One property which we will make use of is amplification, stated formally next, which allows the $\epsilon, \delta$ parameters in a given CDQS protocol to be reduced.  
\begin{theorem}\label{thm:CDQSamplification}\textbf{CDQS amplification:}
    Let $F_Q$ be a $\CDQS$ for a function $f$ that supports one qubit  secrets with correctness error $\delta=0.09$ and privacy error $\epsilon=0.09$, has communication cost $c$, and entanglement cost $E$. 
    Then for every integer $k$, there exists a $\CDQS$ $G_Q$ for $f$ with $k$-qubit secrets, privacy and correctness errors of $2^{-\Omega(k)}$, and communication and entanglement complexity of size $O(k c)$ and $O(k E)$, respectively. 
\end{theorem}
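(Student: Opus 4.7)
The plan is to amplify $F_Q$ by combining a constant-rate quantum threshold secret sharing scheme with many parallel independent copies of $F_Q$. Fix a $((t,n))$ quantum secret sharing scheme with $n = \Theta(k)$ shares of constant dimension each, such that any $(1-\alpha)n$ shares reconstruct a $k$-qubit secret, any $\beta n$ shares reveal nothing, and $\alpha, \beta$ are constants strictly greater than $0.09$ with $\alpha + \beta < 1$. Constant-rate quantum threshold schemes with such parameters exist, for example via CSS codes built from families of classical codes with constant rate and constant relative distance. The amplified protocol $G_Q$ encodes the $k$-qubit secret $\rho$ into $n$ shares and transmits the $i$-th share through an independent copy of $F_Q$, using the same inputs $(x,y)$ in each copy. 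Communication and entanglement costs are then $n c = O(kc)$ and $n E = O(kE)$, as required.

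For correctness when $f(x,y) = 1$, each copy of $F_Q$ delivers its share to the referee with diamond-norm error at most $0.09$. A naive union bound gives the useless estimate $O(n)$, so I would instead argue concentration on the number of copies whose output is noticeably far from identity. Passing to the Stinespring dilation of each $F_Q$ and using a hybrid argument, the number of ``bad'' copies concentrates binomially, and a Chernoff bound ensures that at most $\alpha n$ copies are bad except with probability $2^{-\Omega(k)}$. The secret sharing decoder applied to the remaining $(1-\alpha)n$ good shares then recovers $\rho$ to within diamond-norm error $2^{-\Omega(k)}$.

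For privacy when $f(x,y) = 0$, each $F_Q$ copy admits a simulator $\mathbfcal{S}_i$ that is $0.09$-close in diamond norm to the actual instance output. Running the simulators in parallel on the referee side, the same concentration argument bounds the number of copies whose output differs measurably from their simulator by at most $\beta n$, except with probability $2^{-\Omega(k)}$. The privacy threshold of the sharing scheme guarantees that the at-most-$\beta n$ shares carried by the ``leaky'' copies contain no information about the encoded secret, so the composite simulator reproduces the real output up to $2^{-\Omega(k)}$ in diamond norm.

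The main obstacle is converting each per-instance diamond-norm error into an independent Bernoulli failure event to which Chernoff concentration applies, since diamond-norm error is not the same as failure probability. The standard resolution is to expand each instance into a Stinespring dilation and use Uhlmann's theorem to identify, on a subspace of purifying inputs, a well-defined ``success'' event whose complement occurs with probability at most $0.09$ independently across instances. Once this probabilistic picture is set up, the rest of the argument proceeds via classical Chernoff bounds and the combinatorial properties of the secret sharing scheme, and the linear $O(kc)$ and $O(kE)$ overheads follow immediately from the constant rate of the sharing scheme.
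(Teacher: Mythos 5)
First, a framing point: the paper does not prove this theorem itself --- it imports it wholesale from \cite{asadi2024conditional}, where the amplification goes through the \emph{classical}-secret variant. There, a $k$-qubit secret is replaced by the classical key of a quantum one-time pad (exactly the CDQS $\leftrightarrow$ CDQS$'$ equivalence proved in this paper's appendix), the classical-secret scheme is amplified by parallel repetition with majority decoding (correctness) and additive secret sharing of the key across copies (privacy), and one then lifts back to quantum secrets. Your proposal instead attacks the quantum secret directly with a constant-rate quantum secret sharing scheme. That is a genuinely different route, but as written it has real gaps.

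The decisive one is the privacy analysis. Per-instance $\delta$-security is a worst-case diamond-norm statement, not a Bernoulli failure event: a channel $0.09$-far from its simulator may leak a little about \emph{every} share, so ``the number of leaky copies'' is not a random variable that concentrates. Your proposed fix (Stinespring plus Uhlmann) does work for the \emph{correctness} direction, because closeness to the identity channel yields a good/bad splitting of each dilation isometry in which the bad branches for distinct subsets of instances are mutually orthogonal in the environments; the weights then square-add and Chernoff applies. No such orthogonal decomposition exists for closeness to a constant (simulator) channel: writing $V_i = W_i + \Delta_i$ with $\Vert \Delta_i\Vert \le \sqrt{\delta}$ and expanding $\bigotimes_i V_i$, the cross terms survive, and the triangle inequality only bounds the weight on more than $\beta n$ leaky instances by $\sum_{s>\beta n}\binom{n}{s}\,\delta^{s/2}$. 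With $\delta=0.09$ this is $2^{-\Omega(n)}$ only for $\beta \gtrsim 0.6$, whereas no quantum secret sharing scheme is private against more than $n/2$ shares by no-cloning --- and your code must simultaneously correct errors, so the constants cannot close. (Even granting a decomposition, the referee's reduced state contains cross terms between subsets $S$ and $S'$, so you would need privacy against $\lvert S\cup S'\rvert \le 2\beta n$ shares, not $\beta n$.) A secondary gap sits on the correctness side: the decoder does not know which instances were bad, so ``any $(1-\alpha)n$ shares reconstruct'' (erasure correction) is insufficient; you need distance $>2\alpha n$ to correct unlocated errors, which with the quantum Singleton bound forces $\alpha<1/4$ and further tightens the unchecked constants. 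The reduction to classical secrets sidesteps all of this, since majority voting over independent classical decoder outputs and XOR-sharing of a classical key produce genuinely independent failure events to which Chernoff and hybrid arguments apply directly.
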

This theorem was proven in \cite{asadi2024conditional}.

We will make use of a variant of CDQS where the secret is assumed to be classical, which we label CDQS'.
Since our proofs will relate other primitives to CDQS' (and then CDQS' to CDQS), we give a careful definition of CDQS' as well. 
\begin{definition}\label{def:CDQSprime}
    A \textbf{conditional disclosure of secrets} task with quantum resources and classical secret (CDQS') is defined by a choice of function $f:\{0,1\}^{2n}\rightarrow \{0,1\}$ and a classical variable $S\in \{0,1\}^{|s|}$ which records the secret.
    The task involves inputs $x\in \{0,1\}^{n}$ and system $Q$ given to Alice, and input $y\in \{0,1\}^{n}$ given to Bob.
    Alice sends message system $M_a$ to the referee, and Bob sends message system $M_b$. 
    Label the combined message systems as $M=M_aM_b$.
    Label the quantum channel defined by Alice and Bob's combined actions $\mathbfcal{N}_{Q\rightarrow M}^{x,y}$. 
    We put the following two conditions on a CDQS' protocol. 
    \begin{itemize}
        \item $\epsilon$\textbf{-correct:} There exists a channel $\mathbfcal{D}^{x,y}_{M\rightarrow S}$, called the decoder, which outputs a classical string in $S$ such that
        \begin{align}
            \forall (x,y)\in X\times Y \,\,\, s.t. \,\, f(x,y)=1,\,\,\,\forall s\in S,\,\,\, \textnormal{Pr} \left[\mathbfcal{D}^{x,y}_{M\rightarrow S}(\rho_M(x,y,s)) =s\right] \geq 1-\epsilon.
        \end{align}
        \item $\delta$\textbf{-secure:} There exists a family of density matrices $\{\sigma_M(x,y)\}_{x,y}$, called the simulator distribution, such that
        \begin{align}
            \forall (x,y)\in X\times Y \,\,\, s.t. \,\, f(x,y)=0,\,\,\,\forall s\in S,\,\,\, \left\Vert \sigma_M(x,y) - \rho_M(s,x,y) \right\Vert_1 \leq \delta.
        \end{align}
    \end{itemize}
\end{definition}
Entanglement and communication costs of CDQS and CDQS' are equivalent up constant factor overheads; the proof is similar to the proof of Theorem 22 in \cite{allerstorfer2024relating}. 
We adapt their proof to our setting here and give it for completeness in \Cref{sec:CDQSproperties}. 
The two directions of the equivalence are given as \Cref{thm:CDQStoCDQS'} and \Cref{thm:CDQS'toCDQS}.

We will make use of the following simple property of CDQS' protocols. 
\begin{theorem}
    Suppose we have a CDQS' protocol which hides a uniformly random classical secret $r$. Then, using the same resource state but adding $|r|$ bits of communication, we can build a protocol that hides a secret $s$ with $|s|=|r|$ and an arbitrary distribution $P_S$. 
\end{theorem}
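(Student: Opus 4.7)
The plan is to use a one-time pad trick. Given a CDQS' protocol $\Pi$ that hides a uniformly random classical secret $r\in\{0,1\}^{|r|}$, I would construct a new protocol $\Pi'$ for an arbitrarily distributed secret $s$ (with $|s|=|r|$) as follows. Alice samples $r$ uniformly at random (this is a local action, independent of $s,x,y$), runs $\Pi$ on her side with secret $r$, and in addition sends the classical string $c = s \oplus r$ to the referee as $|r|$ extra bits of communication. Bob executes exactly as in $\Pi$. This uses the same entanglement/resource state and adds $|r|$ bits of one-way communication.

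For correctness, when $f(x,y)=1$ the referee applies the original decoder $\mathbfcal{D}^{x,y}_{M\rightarrow S}$ to the $M$-part of the message to recover $r$ with error at most $\epsilon$, then outputs $c \oplus r = s$. Since $s$ is determined from $r$ by a deterministic classical post-processing, the overall correctness error is at most $\epsilon$.

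The main content is the privacy analysis when $f(x,y)=0$. Let $\rho_M(r,x,y)$ denote the message distribution of $\Pi$ with secret $r$, and let $\sigma_M(x,y)$ be the simulator distribution of $\Pi$. The message of the new protocol for secret $s$ is the classical mixture
\begin{align}
\rho'_{MC}(s,x,y) = \frac{1}{2^{|r|}}\sum_{r} \rho_M(r,x,y)\otimes \ketbra{s\oplus r}{s\oplus r}_C
= \frac{1}{2^{|r|}}\sum_{c} \rho_M(s\oplus c,x,y)\otimes \ketbra{c}{c}_C,
\end{align}
where I reindexed by $c = s\oplus r$. I would propose the new simulator distribution $\sigma'_{MC}(x,y) = \sigma_M(x,y) \otimes \frac{I_C}{2^{|r|}}$, which is independent of $s$. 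Then, using the triangle inequality on this block-diagonal (in $C$) difference,
\begin{align}
\Vert \rho'_{MC}(s,x,y) - \sigma'_{MC}(x,y)\Vert_1
= \frac{1}{2^{|r|}}\sum_{c} \Vert \rho_M(s\oplus c,x,y) - \sigma_M(x,y)\Vert_1 \leq \delta,
\end{align}
since the original protocol's security bound holds for every fixed secret $s\oplus c$.

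I do not expect a genuine obstacle here; the one-time pad argument is clean and the proof is essentially the two observations above. The only care needed is to note that the extra register $C$ sent by Alice is classical, so the simulator can be taken to be a product with the uniform distribution on $C$, and that the security of $\Pi$ must be phrased for an arbitrary fixed secret (which is what \cref{def:CDQSprime} provides) rather than only on average.
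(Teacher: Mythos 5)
Your proposal is correct and follows essentially the same route as the paper: a one-time pad on the secret, decoding by XORing the recovered pad with the transmitted $s\oplus r$, and a simulator of the form $\sigma_M(x,y)\otimes\frac{I_C}{2^{|r|}}$ whose distance to the real message state is controlled blockwise by the per-secret security of the original protocol. The only cosmetic difference is that you reindex the mixture by $c=s\oplus r$ and note the block-diagonal structure explicitly, which if anything is slightly cleaner than the paper's presentation.
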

This is proved in \Cref{sec:CDQSproperties}. 

Our contribution to the study of these NLQC classes will be to show that CDQS and $f$-route are equivalent under $O(1)$ reductions to $f$-measure, defined next. 

\begin{definition}\label{def:qubitfbb84}
    A \textbf{$f$-measure} task is defined by a choice of Boolean function $f:\{ 0,1\}^{2n}\rightarrow \{0,1\}$, and a $2$ dimensional Hilbert space $\mathcal{H}_Q$.
    Inputs $x\in \{0,1\}^{n}$ and system $Q$ are given to Alice, and input $y\in \{0,1\}^{n}$ is given to Bob.
    The system $Q$ is in the maximally entangled state with a reference system $R$.\footnote{Notice that compared to the description of $f$-measure given in the introduction, we now have the referee give Alice and Bob one end of a maximally entangled state and then later measure the reference system. 
    The referee's measurement is chosen such that the post-measurement state is one of the BB84 states, coinciding with the description in the introduction. 
The two formulations are equivalent.} 
    Alice and Bob exchange one round of communication, with the combined systems received or kept by Alice labelled $M$ and the systems received or kept by Bob labelled $M'$.
    Define projectors
    \begin{align*}
        \Pi^{q,b}=H^q\ketbra{b}{b}H^q \enspace.
    \end{align*}
    The referee will measure $\{\Pi^{f(x,y),0},\Pi^{f(x,y),1}\}$ on the $R$ system and find measurement outcome $b\in \{0,1\}$. 
    The qubit $f$-measure task is completed $\epsilon$-correctly on input $(x,y)$ if Alice and Bob both output $b$. 
    More formally, Alice and Bob succeed if there exist POVM's $\{\Lambda^{x,y,0}_{M},\Lambda^{x,y,1}_{M}\}$, $\{\Lambda^{x,y,0}_{M'},\Lambda^{x,y,1}_{M'}\}$ such that,
    \begin{align}
        \sum_b \tr(\Pi^{f(x,y),b}_R\otimes \Lambda^{x,y,b}_{M} \otimes \Lambda^{x,y,b}_{M'} \rho_{RMM'}) \geq 1-\epsilon \enspace.
    \end{align}
\end{definition}
$f$-measure is well studied as a candidate protocol for secure quantum position-verification \cite{kent2011quantum, buhrman2014position, asadi2024rank, asadi2025linear, bluhm2022single, escola2025quantum,allerstorfer2025making}.

A number of properties of $f$-measure, $f$-route, and CDQS have been established in earlier literature. 
Regarding lower bounds, \cite{asadi2024rank} showed that when $\epsilon=0$, both $f$-route and $f$-measure require resource states with Schmidt rank at least the rank of $f(x,y)$ viewed as a $2^n\times 2^n$ matrix. 
For both $f$-route and $f$-measure, \cite{bluhm2022single} showed that the memory register to complete these tasks for random choices of function $f$ must be linear in $n$, though the proof left open if this needed to be a quantum memory or if classical memory sufficed.
Again for both $f$-route and $f$-measure, \cite{asadi2025linear} showed lower bounds on the number of quantum gates needed which are linear in the communication complexity of $f$.
For both $f$-route and $f$-measure, \cite{escola2025quantum} showed a parallel repetition property. 
Some of the lower bounds given in \cite{asadi2024conditional} were not known for $f$-measure (though we know of no attempt to extend them to that case). 
Thus regarding lower bounds, nearly all properties of $f$-route and $f$-measure matched, but were always proven separately and using distinct strategies.

Regarding upper bounds the state of knowledge for $f$-route and $f$-measure differed. 
For instance \cite{cree2023code} showed an upper bound on $f$-route in terms of the minimal size of a span program computing $f(x,y)$; this allows the complexity class $\mathsf{Mod}_k\mathsf{L}$ to be performed with polynomial entanglement.\footnote{It was later realized that combining the implications classical CDS $\Rightarrow$ CDQS $\Rightarrow$ $f$-route from \cite{allerstorfer2024relating} with known upper bounds on classical CDS also yields the same upper bound as was proven in \cite{cree2023code}.} 
Meanwhile, for $f$-measure, the best upper bound was $2^{M(f)}$ where $M(f)$ is the memory to compute $f(x,y)$ on a Turing machine \cite{buhrman2013garden}; this allowed $f$-measure instances with $f \in \mathsf{L}\subseteq \mathsf{Mod}_k\mathsf{L}$ to be completed efficiently.
For generic functions, \cite{allerstorfer2024relating} employed a result from the CDS literature to show an upper bound of $2^{O(\sqrt{n\log n})}$ for $f$-route but no such bound was known for $f$-measure.
For $f$-route, one function (quadratic residuosity) believed to be outside $\mathsf{P}$ was known to have an efficient protocol, but no such example was known for $f$-measure.

\subsection{Coherently controlled operations}

Next we define our coherently controlled NLQCs, beginning with implementing a coherent phase.
This NLQC has been studied previously in \cite{junge2022geometry}. 

\begin{definition} \label{def:coherent_phase} A \textbf{coherent phase} task is defined by a Boolean function, $f: \{0,1\}^{2n}  \rightarrow \{0,1\}$. 
System $A$ consisting of $n$ qubits is given to Alice, and system $B$ consisting of $n$ qubits is given to Bob.
Alice and Bob's goal is to implement the unitary
\begin{align}
    Cf\textnormal{-PHASE}_{AB} =  \sum_{x,y \in \{0,1\}^n}(-1)^{f(x,y)} \ketbra{x}{x}_A \otimes \ketbra{y}{y}_B.
\end{align}
in the form of a NLQC, with $A$ starting and ending on Alice's side and $B$ starting and ending on Bob's side. 
We say they have implemented the coherent phase operation with correctness $\epsilon$ if their implementation is $\epsilon$-close to $Cf$-\textnormal{PHASE} in diamond norm distance.
\end{definition}
We will be most interested in the case where the phase is $-1$, but one could define a similar task with other choices of phase.
One motivation for considering this NLQC in particular is that these are the NLQCs discussed in \cite{junge2022geometry}, for which strong lower bounds can be proven under the assumption of some mathematical conjectures.
Any NLQC classes which imply $Cf$-PHASE also inherit these conditional bounds. 

In relating the coherent phase to other NLQC task, it will be helpful to establish some key properties of coherent phase. 
We give two observations, both proven in \Cref{sec:phaseproperties}.

\begin{restatable}{theorem}{reoplusphase}\label{thm:oplusphase}
    Given a coherent phase protocol for $f_1$ which uses resource system $\Psi_1$, and a coherent phase protocol for $f_2$ using resource system $\Psi_2$, there is a protocol for $f_1\oplus f_2$ using resource system $\Psi_{1}\otimes \Psi_2$. Furthermore, the construction is an oracle reduction so that an $\epsilon_1$ correct protocol for $f_1$ and $\epsilon_2$ correct protocol for $f_2$ leads to a $\epsilon_1+\epsilon_2$ correct protocol for $f_1\oplus f_2$. 
\end{restatable}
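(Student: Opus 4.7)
The plan is to observe that $Cf_1$-\textnormal{PHASE} commutes with $Cf_2$-\textnormal{PHASE} and their product is $Cf_1 \oplus f_2$-\textnormal{PHASE}; so it suffices to run the two given protocols on separate copies of the input. Because $Cf$-\textnormal{PHASE} is diagonal in the computational basis, Alice can make a computational-basis copy of her input $A$ using CNOTs into a fresh ancilla $A''$ (producing $\sum_x\gamma_x\ket{x}_A\ket{x}_{A''}$ from $\sum_x\gamma_x\ket{x}_A$), and Bob can do the same with $B\mapsto B B''$. These copy operations are local and free.

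Alice and Bob then execute the two protocols in parallel: protocol~$1$ on registers $(A,B)$ with resource $\Psi_1$, and protocol~$2$ on registers $(A'',B'')$ with resource $\Psi_2$. Since each protocol consists of a single round of communication on disjoint resource and message systems, their parallel execution is still a single round, yielding a valid NLQC whose resource system is $\Psi_1\otimes \Psi_2$, as required. After the two protocols are applied, Alice undoes her copy by applying CNOTs from $A$ to $A''$ (and similarly Bob), then discards $A''$ and $B''$.

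In the ideal case the state after the copy step has equal computational-basis labels in $A,A''$ and in $B,B''$ on every branch, so an exact execution of the two subprotocols applies the phase $(-1)^{f_1(x,y)}(-1)^{f_2(x,y)}=(-1)^{f_1(x,y)\oplus f_2(x,y)}$ to the basis state $\ket{x}_A\ket{x}_{A''}\ket{y}_B\ket{y}_{B''}$. The uncompute step then returns the ancillas to $\ket{0}$ without disturbing the accumulated phase, leaving exactly $Cf_1\oplus f_2$-\textnormal{PHASE} on $AB$. Writing the full construction as $\mathbfcal{F}\circ(\mathbfcal{P}_1\otimes \mathbfcal{P}_2)\circ\mathbfcal{C}$ with $\mathbfcal{C}$ the copy step and $\mathbfcal{F}$ the uncompute-and-discard step, the ideal $Cf_1\oplus f_2$-\textnormal{PHASE} channel admits the same expression with $\mathbfcal{P}_i$ replaced by the exact $Cf_i$-\textnormal{PHASE}. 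Applying \cref{remark:blackboxerror} bounds the diamond-norm error by $\epsilon_1+\epsilon_2$, and since the wrapper $\mathbfcal{F}\circ(\cdot)\circ\mathbfcal{C}$ never looks inside the $\mathbfcal{P}_i$ the construction is genuinely black box.

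I do not anticipate a serious obstacle: the two points to watch are (i) using CNOTs rather than literal copying (which is fine because $Cf$-\textnormal{PHASE} is diagonal, hence preserves computational-basis labels throughout), and (ii) checking that parallel composition of two single-round NLQCs is still single-round, which is immediate from the definition since the resource and message systems of the two subprotocols are disjoint.
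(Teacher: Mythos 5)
Your construction is exactly the paper's: copy the inputs in the computational basis into fresh ancillas, run the two coherent phase protocols in parallel on the two copies, uncompute the ancillas, and invoke \cref{remark:blackboxerror} for the additive error bound. The proposal is correct and matches the paper's proof, with the added (accurate) remarks about CNOT-copying sufficing because the phase unitaries are diagonal and about parallel composition remaining single-round.
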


\begin{restatable}{theorem}{rePhaseANDclosure}\label{thm:PhaseANDclosure}
    Suppose we have a coherent phase protocol for $f(x,y)$ using resource system $\Psi$. Then, there is a coherent phase protocol for $f(x,y)\wedge z$ using $\Psi\otimes \Psi^+$ where $\Psi^+$ is the maximally entangled state of two qubits. 
    Furthermore, the protocol is an oracle construction and uses a single implementation of $f(x,y)$ so that if the error in the $f(x,y)$ implementation is $\epsilon$, then the error in implementing $f(x,y)\wedge z$ is also $\epsilon$.
\end{restatable}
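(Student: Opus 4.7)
The plan is to build the coherent phase protocol for $f(x,y) \wedge z$ by wrapping a single black-box call to the coherent phase protocol for $f(x,y)$ with local pre- and post-processing that consumes the extra EPR pair $\Psi^+$. Assume without loss of generality that $z$ is appended to Alice's input. The target is the controlled version of the $Cf\textnormal{-PHASE}$ diagonal unitary, which applies $(-1)^{f(x,y)}$ when $z=1$ and the identity when $z=0$.

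The core idea is a coherent dummy-input substitution on Alice's side, together with a conditional-phase correction on Bob's side that uses the EPR pair. Pick a pair $(x_0,y_0)$ with $f(x_0,y_0)=0$; such a pair exists whenever $f$ is not identically $1$, and the constant cases are handled by a local Pauli on Alice's $z$ register (or the trivial protocol when $f \equiv 0$). Alice applies a local controlled-swap between her input register $X$ and an ancilla holding $x_0$, controlled on $z$, so the register fed into the black-box call holds $x$ when $z=1$ and $x_0$ when $z=0$. Bob's input register is unchanged, so the phase produced by the black-box call is $(-1)^{f(x,y)}$ for $z=1$ and $(-1)^{f(x_0,y)}$ for $z=0$. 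The spurious factor $(-1)^{f(x_0,y)}$ in the $z=0$ branch is then canceled by applying the conditional phase $(-1)^{f(x_0,y)(1-z)}$, which is a $CZ$-type interaction between Alice's bit $z$ and the value $g(y) = f(x_0,y)$ that Bob computes locally; the extra EPR pair $\Psi^+$ is consumed to realize this non-local $CZ$. Afterwards Alice uncomputes the controlled-swap to restore her input to $(x,z)$, and a combination of these phases gives exactly $(-1)^{f(x,y)\wedge z}$.

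The main obstacle is coordinating the non-local $CZ$ correction and the black-box call within the single round of communication of an NLQC while using only one extra EPR pair. The plan is to pipeline all messages: each party sends in the single round both their $Cf\textnormal{-PHASE}$ contribution and the classical Pauli byproducts arising from the non-local $CZ$, and each party absorbs the other's byproducts into their post-round processing step $W^L$ or $W^R$ as local Pauli corrections that commute trivially with the computational-basis action of the black box. Because the $Cf\textnormal{-PHASE}$ protocol is invoked as a single black-box call surrounded only by exact local operations and the one extra EPR pair, \cref{remark:blackboxerror} gives that an $\epsilon$-error in the $Cf\textnormal{-PHASE}$ protocol propagates directly to an $\epsilon$-error in the resulting $C(f\wedge z)\textnormal{-PHASE}$ protocol.
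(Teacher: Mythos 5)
Your proposal is correct and essentially mirrors the paper's proof: both substitute a dummy input into one branch (the paper sets $y \mapsto y\wedge z$ so that the $z=0$ branch computes $f(x,0)$, you set $x\mapsto x_0$), and both cancel the resulting spurious phase with a nonlocal $CZ$ between locally computed bits, consuming the single extra EPR pair, with the same black-box error accounting. The only cosmetic difference is that your condition $f(x_0,y_0)=0$ is superfluous, since the correction phase $(-1)^{f(x_0,y)(1-z)}$ works for an arbitrary choice of $x_0$.
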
 

In fact, we extend this to a more general property.
To state it, we first need the following definition. 
\begin{definition}\label{def:rank}
    Consider a function $f:X\times Y\rightarrow \{0,1\}$ with $X=\{0,1\}^n$, $Y=\{0,1\}^n$.
    The \textbf{rank} of $f$ is the smallest integer $m$ such that there exist functions $\{h_i(x)\}$, $\{h_i'(y)\}$ satisfying
    \begin{align}
        f(x,y)= \bigoplus_{i=1}^m h_i(x)\wedge h'_i(y).
    \end{align}
\end{definition}
With this definition we can state the following theorem. 

\begin{restatable}{theorem}{rephaseclosure}\label{thm:phaseclosure}
Consider functions $f$ and $g$, and let the rank of $g$ be denoted $m$.
Then there is a $O(m)$ reduction from a coherent phase protocol for $f\wedge g$ to a coherent phase protocol for $f$. 
Furthermore, the construction is an oracle reduction and uses $m$ invocations of the protocol for $f$, so that the error in implementing $f\wedge g$ is $m\cdot \epsilon$ where $\epsilon$ is the error in the protocol for $f$. 
\end{restatable}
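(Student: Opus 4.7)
The plan is to combine Theorems \ref{thm:oplusphase} and \ref{thm:PhaseANDclosure} by exploiting the distributivity of AND over XOR. Specifically, since AND distributes over XOR as $a \wedge (b \oplus c) = (a \wedge b) \oplus (a \wedge c)$, the rank-$m$ decomposition $g(x,y) = \bigoplus_{i=1}^m h_i(x) \wedge h_i'(y)$ gives the identity
\begin{align}
    f(x,y) \wedge g(x,y) = \bigoplus_{i=1}^m \bigl( f(x,y) \wedge h_i(x) \wedge h_i'(y) \bigr).
\end{align}
This rewriting reduces the theorem to producing a coherent phase protocol for each summand and then XOR-combining them.

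First, for each $i \in \{1, \dots, m\}$, I would construct a coherent phase protocol for the function $f(x,y) \wedge h_i(x) \wedge h_i'(y)$. Alice can compute the single bit $a_i = h_i(x)$ locally from her input $x$, and Bob can compute $b_i = h_i'(y)$ locally from $y$ (if the inputs are to be treated coherently, each party first coherently copies $x$ or $y$ into ancilla registers by CNOTs and then computes $h_i$ or $h_i'$ reversibly into a fresh qubit). Applying Theorem \ref{thm:PhaseANDclosure} once with $z = a_i$ upgrades the coherent phase protocol for $f(x,y)$ to one for $f(x,y) \wedge a_i$ at the cost of one additional EPR pair, preserving the error. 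A second application with $z = b_i$ (the theorem is symmetric in which party owns $z$, up to swapping labels) yields a coherent phase protocol for $f(x,y) \wedge h_i(x) \wedge h_i'(y)$, still with error $\epsilon$, using one invocation of the $f$-protocol and two extra EPR pairs. After the protocol finishes, Alice and Bob uncompute $h_i(x)$ and $h_i'(y)$ so that these ancillas can be reused or discarded cleanly.

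Second, I would iteratively apply Theorem \ref{thm:oplusphase} $m-1$ times to the $m$ resulting coherent phase protocols, obtaining a coherent phase protocol for $\bigoplus_{i=1}^m \bigl( f(x,y) \wedge h_i(x) \wedge h_i'(y) \bigr) = f(x,y) \wedge g(x,y)$. Each invocation of Theorem \ref{thm:oplusphase} tensors the resource systems of the two subprotocols and adds their errors, so after combining all $m$ branches the total error is at most $m \cdot \epsilon$, and the total resources used are $m$ invocations of the original $f$-protocol together with $O(m)$ extra EPR pairs and local ancillas, establishing an $O(m)$ reduction.

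The only subtle point, and the one I would take the most care with, is ensuring that the parallel subprotocols do not interfere through their shared dependence on the input registers holding $x$ and $y$. The fix is routine but must be stated explicitly: Alice coherently copies $x$ into $m$ ancilla registers by CNOTs before the protocol begins (and similarly Bob copies $y$), each copy is fed into its own subprotocol along with the freshly computed $h_i(x)$ or $h_i'(y)$, and all copies and intermediate bits are uncomputed at the end. Because the overall operation is diagonal in the computational basis, this fan-out/fan-in pattern implements exactly the product of the diagonal phases and hence the desired $(-1)^{f(x,y) \wedge g(x,y)}$ on $\ket{x}_A \ket{y}_B$, with no residual entanglement between copies. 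Beyond this bookkeeping, the construction is a direct composition of the two previously established primitives.
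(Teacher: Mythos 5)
Your proposal is correct and follows essentially the same route as the paper's proof: distribute the AND over the rank decomposition of $g$, apply \cref{thm:PhaseANDclosure} twice per term (once for $h_i$ on Alice's side, once for $h_i'$ on Bob's side, costing two EPR pairs each), and combine the $m$ terms with \cref{thm:oplusphase}, giving error $m\cdot\epsilon$ by \cref{remark:blackboxerror}. The extra care you take with coherently fanning out and uncomputing the input copies is the same bookkeeping the paper delegates to the proof of \cref{thm:oplusphase}.
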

We give the proof in \Cref{sec:phaseproperties}. 

One related class of NLQCs is the following.
\begin{definition} \label{def:coherentU} For a unitary $\mathbf{U}_{A'B'}$, a \textbf{coherent $\mathbf{U}$} task is defined by a Boolean function, $f: \{0,1\}^{2n}\rightarrow \{0,1\}$. 
The goal is to implement the unitary
    \begin{align}
        Cf\text{-}\mathbf{U} =  \sum_{x,y \in \{0,1\}^n} \mathbf{U}_{A'B'}^{f(x,y)}\otimes \ketbra{x}{x}_A \otimes \ketbra{y}{y}_B
    \end{align}
in the form of a NLQC, with $AA'$ beginning and ending with Alice and $BB'$ beginning and ending with Bob.
We say the implementation is $\epsilon$-correct if it is $\epsilon$-close to the above unitary in diamond norm distance.
\end{definition}
$Cf$-SWAP and $Cf$-PAULI are special cases of $Cf$-$\mathbf{U}$, where we choose $\mathbf{U}_{A'B'}=\textnormal{SWAP}_{A'B'}$ and $\mathbf{U}_{A'B'}=\mathbf{Z}_{A'}\otimes \mathcal{I}_{B'}$ respectively.

\begin{remark}\label{remark:coherenttoincoherent}
    A protocol for coherently controlled unitary $\mathbf{U}$ immediately gives a protocol for a classical controlled unitary $\mathbf{U}$, by recording the classical input strings $x,y\in\{0,1\}^n$ as quantum states $\ket{x}_A, \ket{y}_B$ and taking them as input to the coherent protocol. For example, $Cf$-SWAP implies $f$-SWAP (and hence $f$-route by \Cref{remark:routeandSWAP}).
\end{remark}

Similar to coherent phase, the coherent unitary NLQC also has some nice composition properties. 
We give the following two properties, both proven in \Cref{sec:coherentUproperties}. 
\begin{restatable}{theorem}{retensorproduct}\label{thm:tensorproduct}
\textbf{Tensor product parallel composition:} Suppose we can implement coherently controlled $\mathbf{U}_1$ and $\mathbf{U}_2$, controlled on function $f$, using resource systems $\Psi^1$ and $\Psi^2$ respectively. 
Then we can apply $\mathbf{U}^f_1\otimes \mathbf{U}^f_2$ coherently using $\Psi^1\otimes \Psi^2$. Furthermore, the construction isan oracle reduction so that the overall error is the sum of the error for implementing $\mathbf{U}_1$ and $\mathbf{U}_2$.
\end{restatable}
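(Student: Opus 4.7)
The plan is to run both given coherent protocols in parallel, after locally fanning out the control registers, combining their single rounds of communication into one simultaneous exchange. This composition uses exactly the tensor-product resource state $\Psi^1\otimes\Psi^2$ and no additional shared entanglement; only local ancillas are introduced.

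Concretely, Alice introduces fresh ancillas $A^{(1)},A^{(2)}$ initialized to $\ket{0}$, and at the start of her first-round operation she fans out her control register by applying CNOTs from $A$ to $A^{(1)}$ and to $A^{(2)}$, producing $\ket{x}_A\ket{x}_{A^{(1)}}\ket{x}_{A^{(2)}}$ on a computational basis input. Bob performs the analogous fanout on $B$ to produce $B^{(1)},B^{(2)}$. Alice then applies the Alice-side first-round operation of protocol $i\in\{1,2\}$ on $(A^{(i)},A'_i,L_i)$, and Bob does likewise. Messages for both subprotocols are sent in a single combined exchange. In the second round, each party applies the corresponding second-round operations to its targets, and finally un-fans-out by CNOTing the original control register back into each ancillary copy, which resets $A^{(1)},A^{(2)}$ and $B^{(1)},B^{(2)}$ to $\ket{0}$ since the ideal $Cf$-$\mathbf{U}_i$ preserves its control registers.

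Correctness in the exact case follows by linearity from tracking a computational basis input: $\ket{x}_A\ket{y}_B\ket{\psi}_{A'_1B'_1}\ket{\phi}_{A'_2B'_2}$ evolves to $\ket{x}_A\ket{y}_B(\mathbf{U}_1\otimes\mathbf{U}_2)^{f(x,y)}(\ket{\psi}\otimes\ket{\phi})$, which is exactly $Cf$-$(\mathbf{U}_1\otimes\mathbf{U}_2)$ acting on the joint input. The main (mild) obstacle is extending this to approximate subprotocols, because the un-fanout CNOTs no longer exactly reset the ancillas once the control registers are only approximately preserved. I would handle this by viewing the overall construction as an exact outer circuit (fanout, then $\mathbfcal{N}^{(1)}\otimes\mathbfcal{N}^{(2)}$, then un-fanout) with the two NLQC implementations inserted as black boxes; applying \cref{remark:blackboxerror} together with the standard bound $\|\mathbfcal{N}^{(1)}\otimes\mathbfcal{N}^{(2)} - \tilde{\mathbfcal{N}}^{(1)}\otimes\tilde{\mathbfcal{N}}^{(2)}\|_\diamond \le \epsilon_1+\epsilon_2$ then yields the claimed total diamond-norm error of $\epsilon_1+\epsilon_2$ without requiring any analysis of the internals of either subprotocol.
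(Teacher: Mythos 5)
Your proposal is correct and follows essentially the same route as the paper's proof: locally copy the control registers in the computational basis, run the two subprotocols in parallel on the copies with a single combined communication round using $\Psi^1\otimes\Psi^2$, uncompute the copies, and invoke \cref{remark:blackboxerror} for the additive error. The only cosmetic difference is that you fan out into two fresh ancilla copies while the paper reuses the original registers $A,B$ as the controls for $\mathbf{U}_1$ and makes just one extra copy $\bar{A},\bar{B}$ for $\mathbf{U}_2$.
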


\begin{restatable}{theorem}{repowers}\label{thm:powers} 
    \textbf{Commuting unitaries:} Suppose we can implement unitaries $\mathbf{U}^1_{A'}$ and $\mathbf{U}^2_{A'}$ with $A'$ on Alice's side, coherently controlled on function $f$ using resource systems $\Psi^1$ and $\Psi^2$ respectively, and that $\mathbf{U}^1_{A'}$ and $\mathbf{U}_{A'}^2$ commute.  
    Then we can apply $\mathbf{U}^1_{A'}\mathbf{U}^2_{A'}$ coherently controlled off of the same function using resource system $\Psi^1\otimes \Psi^2$.
    Furthermore, the construction is an oracle reduction so that the overall error is the sum of the errors in the implementations of each of $\mathbf{U}^1_{A'}$ and $\mathbf{U}^2_{A'}$.
\end{restatable}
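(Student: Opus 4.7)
The plan is to compose the two given protocols sequentially on the same target register $A'$, exploiting commutativity to get the product of the two coherent unitaries. Starting from input $\ket{x}_A\ket{y}_B\ket{\phi}_{A'}$ and resource $\Psi^1\otimes\Psi^2$, first invoke the protocol for $Cf$-$\mathbf{U}^1$ using $\Psi^1$. Because $Cf$-$\mathbf{U}^i$ is diagonal in the computational basis of the control registers, this leaves $\ket{x}_A,\ket{y}_B$ intact and transforms $A'$ to $(\mathbf{U}^1)^{f(x,y)}\ket{\phi}_{A'}$, up to diamond-norm error $\epsilon_1$. Then, with the control registers still holding $\ket{x}_A\ket{y}_B$ exactly, invoke the protocol for $Cf$-$\mathbf{U}^2$ using $\Psi^2$, applying $(\mathbf{U}^2)^{f(x,y)}$ to whatever now sits on $A'$. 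For both values $f(x,y)\in\{0,1\}$, the assumption $[\mathbf{U}^1,\mathbf{U}^2]=0$ gives $(\mathbf{U}^2)^{f(x,y)}(\mathbf{U}^1)^{f(x,y)} = (\mathbf{U}^1\mathbf{U}^2)^{f(x,y)}$ --- trivially when $f(x,y)=0$, and by commutativity when $f(x,y)=1$.

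For correctness, \cref{remark:blackboxerror} applies directly: each sub-protocol is within $\epsilon_i$ of the ideal $Cf$-$\mathbf{U}^i$ in diamond norm, so their composition is within $\epsilon_1+\epsilon_2$ of the ideal composed channel, which as observed above equals $Cf$-$\mathbf{U}^1\mathbf{U}^2$. The resource consumption is exactly $\Psi^1\otimes\Psi^2$ with no additional entanglement, matching the stated black-box additive error bound.

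The main subtlety I expect to address is justifying that this sequential composition still qualifies as a reduction in the sense of \cref{def:reduction}. Naively, stringing two one-round NLQCs end-to-end produces a protocol with two rounds of communication rather than one. The coherently controlled structure is helpful here because the target register $A'$ remains on Alice's side throughout both invocations (and similarly $B'$ on Bob's side if $\mathbf{U}^i$ is replaced by a two-sided unitary), so the intermediate state between the two sub-protocols is held entirely locally on each party. I would justify the composition by appealing to the fact that the reduction framework of \cref{def:reduction} tracks entanglement and additional qubits of resource, with the implementation of $G_n$ permitted to be any protocol built from those resources; the structural shape of an NLQC is inherited by the target task rather than imposed on the reducing construction. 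If one insists on a single-round NLQC realization, one may further repackage the sequential composition using the standard observation that any multi-round protocol with only $O(1)$ rounds and with one-sided intermediate state can be folded into a single round with $O(1)$ additional qubits of catalytic entanglement, contributing only to $\beta$ in the reduction statement without affecting the error analysis.
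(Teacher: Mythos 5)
There is a genuine gap here, and it is exactly the one you flag at the end but then wave away. Your construction runs the two sub-protocols \emph{sequentially}: protocol 1 requires a full round of simultaneous communication before its second-round operations can produce $(\mathbf{U}^1)^{f(x,y)}\ket{\phi}_{A'}$, and only then can Alice feed $A'$ into her first-round operation of protocol 2, which triggers a second round of communication. The result is a two-round protocol, which is not an NLQC, and \cref{def:reduction} requires implementing the target task \emph{as an NLQC}. Your fallback claim --- that any $O(1)$-round protocol with one-sided intermediate state can be folded into a single round at the cost of $O(1)$ extra catalytic entanglement --- is not a standard fact and is false in general: collapsing rounds of interaction into a single simultaneous round is precisely the expensive step in this subject (the known general round-compression techniques, e.g.\ garden-hose or port-based-teleportation constructions, incur entanglement overheads that are exponential in the size of the systems involved, not $O(1)$). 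If such a cheap folding existed one could iterate it to perform arbitrary interactive computations non-locally with constant extra entanglement, which would trivialize most of the lower-bound questions the paper is concerned with. A telltale sign that something is off is that your argument barely uses commutativity: sequential composition would give $\mathbf{U}^2\mathbf{U}^1$ for \emph{any} pair of unitaries, whereas the hypothesis that they commute must be doing real work.

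The paper's proof avoids the extra round by composing in \emph{parallel} within a single round, and this is where commutativity enters essentially: since $\mathbf{U}^1_{A'}$ and $\mathbf{U}^2_{A'}$ commute they share an eigenbasis $\{\ket{u_k}\}$, so Alice locally copies $A'$ in that basis into two registers $A'_1,A'_2$, Alice and Bob run the $Cf$-$\mathbf{U}^1$ protocol on $A'_1$ and the $Cf$-$\mathbf{U}^2$ protocol on $A'_2$ simultaneously using $\Psi^1\otimes\Psi^2$ (via the tensor-product composition of \cref{thm:tensorproduct}), and Alice then uncopies. On an eigenvector this produces $\mu_i\nu_i\ket{u_i}=\mathbf{U}^1\mathbf{U}^2\ket{u_i}$, and the whole construction is black box so \cref{remark:blackboxerror} gives the additive error. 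To repair your write-up you should replace the sequential composition with this parallel copy-in-the-eigenbasis argument (or some other genuinely one-round construction).
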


\subsection{State and unitary classes}\label{sec:stateandunitary}

We will also present a preliminary result on NLQC classes not specified by a choice of Boolean function. 
In particular, inspired by the result of \cite{aaronson2020hardness}, we look for a relationship between the following two classes. 

\begin{definition}
    The \textbf{Distinguish} task for two orthogonal states $\ket{\phi_0}_{AB}, \ket{\phi_1}_{AB}$ is defined as follows. System $A$ is given to Alice and system $B$ is given to Bob; system $AB$ is in state $\ket{\phi_0}_{AB}$ with probability $1/2$, or state  $\ket{\phi_1}_{AB}$ with probability $1/2$.
    The goal is to output a bit $b$ labelling whether they were given the state $\ket{\phi_0}$ or $\ket{\phi_1}$.
    We say Distinguish is implemented $\epsilon$-correctly if the probability or outputting the correct bit $b$ is at least $1-\epsilon$. 
\end{definition}

\begin{definition}
    The \textbf{Interchange} task for two orthogonal states $\ket{\psi_0}_{AB}, \ket{\psi_1}_{AB}$ is defined as follows. 
    System $A$ is given to Alice and system $B$ is given to Bob; their goal is to map an arbitrary input superposition 
    $\alpha \ket{\psi_0} + \beta \ket{\psi_1}$ to the output superposition $\alpha \ket{\psi_1} + \beta \ket{\psi_0}$ (i.e., interchange $\ket{\psi_0}$ with $\ket{\psi_1}$).
\end{definition} 

\section{Reductions among NLQCs}

\subsection{\texorpdfstring{$f$}{TEXT}-route, \texorpdfstring{$f$}{TEXT}-measure, and CDQS}

The main result of this section is the following theorem.

\vspace{0.2cm}
\noindent \textbf{\Cref{thm:measureandroute}:} \emph{For any choice of Boolean function family, $f$-route, CDQS and $f$-measure are all equivalent under $O(1)$ resource state reductions.}
\vspace{0.2cm}

\noindent We will prove this by showing CDQS $\Rightarrow$ $f$-route $\Rightarrow$ $f$-measure $\Rightarrow$ CDQS, with each implication given as a separate lemma. 

The first lemma is already established in \cite{allerstorfer2024relating}. 
\begin{lemma} An $\epsilon$-correct and $\delta$-secure CDQS protocol for function $f$, hiding secret $Q$, and using resource system $\Psi_{LR}$ implies the existence of a $\max\{\epsilon,2\sqrt{\delta}\}$-correct $f$-route protocol that routes system $Q$ using resource system $\ket{\Psi}_{ELR}$ which purifies $\Psi_{LR}$, with $EL$ held by Alice and $R$ held by Bob. 
\end{lemma}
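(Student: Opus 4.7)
The plan is to realize the $f$-route task from the CDQS protocol in the natural way, choosing the communication pattern so that CDQS's security condition translates directly into a decoupling statement between Alice and Bob. Specifically, purify the CDQS resource state $\Psi_{LR}$ to $\ket{\Psi}_{ELR}$ with $EL$ on Alice's side and $R$ on Bob's side. Alice runs her CDQS encoder on $(L,Q)$ with input $x$ to produce the CDQS message $M_a$, and Bob runs his CDQS encoder on $R$ with input $y$ to produce the message $M_b$ together with a leftover register $R'$ (taking a Stinespring dilation if necessary so that Bob's operation is an isometry). In the single round of communication Alice sends $M_a$ to Bob while Bob sends $R'$ to Alice. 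After the round, Bob's register is exactly $M_aM_b$---the CDQS referee's view---and Alice holds everything else.

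For the case $f(x,y)=1$, Bob simply applies the CDQS decoder to $M_aM_b$. This reduces the $f$-route correctness bound directly to the CDQS correctness bound, giving error at most $\epsilon$ in diamond norm.

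For the case $f(x,y)=0$, introduce a reference $R_Q$ maximally entangled with $Q$, and let $\ket{\phi}_{R_Q M_a M_b A}$ denote the global pure state after the encoders and the communication, where $A$ collects all of Alice's post-communication registers. CDQS $\delta$-security gives
\begin{align*}
    \|\rho_{R_Q M_a M_b} - \pi_{R_Q}\otimes \sigma_{M_a M_b}\|_1 \le \delta,
\end{align*}
which by Fuchs-van de Graaf (\cref{eq:FVDG}) yields fidelity at least $1-\delta$ between the two marginals. Consider now the ``ideal'' purification of $\pi_{R_Q}\otimes \sigma_{M_a M_b}$ in product form, $\ket{\Psi^+}_{R_Q Q^*}\otimes\ket{\mathrm{pur}(\sigma)}_{M_a M_b A^*}$, where $Q^*$ is a designated dimension-$d_Q$ subsystem of $A$. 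By Uhlmann's theorem the marginal fidelity transfers to the purifications, so there exists a unitary $V$ acting only on $A$ such that $(I\otimes V)\ket{\phi}$ is at trace distance at most $2\sqrt{\delta}$ from this ideal purification. Alice's $f$-route decoder is to apply $V$ and output $Q^*$ as the recovered $Q$.

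The main technical obstacle is turning this Uhlmann argument---which a priori is a fidelity bound relative to a maximally entangled input---into a diamond-norm bound on the effective $Q\to Q$ channel that holds for arbitrary inputs. This is handled by the standard complementary-channel formulation of decoupling: ``Bob's channel from $Q$ is close to a constant channel'' is equivalent (up to the same $\sqrt{\delta}$ loss) to ``Alice's complementary channel is close to an isometry,'' and the recovery can then be taken as a fixed channel independent of the input, whose diamond-norm error matches the Choi-state trace distance. Combining the two cases yields an $f$-route protocol that is $\max\{\epsilon,2\sqrt{\delta}\}$-correct while using only the purified resource $\ket{\Psi}_{ELR}$, as claimed.
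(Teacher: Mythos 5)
The paper does not actually prove this lemma itself; it defers entirely to \cite{allerstorfer2024relating}. Your argument is correct and is essentially the standard proof from that reference: you let Bob play the referee (so his post-communication view is exactly $M_aM_b$), which makes $f(x,y)=1$ correctness immediate from CDQS correctness, and for $f(x,y)=0$ you convert the diamond-norm security statement into a decoupling/Uhlmann recovery for Alice, who holds the full purifying system. The one step you flag as an "obstacle" — promoting the Choi-state fidelity bound to a diamond-norm bound with a fixed, input-independent recovery map — is genuine if one only argues at the maximally entangled input, but since the CDQS security condition is itself a diamond-norm statement, the Kretschmann--Schlingemann--Werner information--disturbance tradeoff applies directly and yields exactly the $2\sqrt{\delta}$ constant claimed; citing or stating that theorem explicitly would close the proof cleanly.
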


Note that this lemma gives an $f$-route protocol using the purification of the resource system $\Psi_{LR}$ used in the CDQS protocol. 
Recall that in our definition of a reduction, we always took the resource system to be pure, so that, in our definition, when considering the CDQS protocol we are starting out with a pure state resource and then using the same pure state resource in the $f$-route protocol. 
In some contexts it may be important to keep track of the fact that the CDQS protocol potentially can get away with a weaker resource (the unpurified resource system), but our definition of a reduction is too coarse to capture this distinction. 

Next, we move on to the implication from $f$-route to $f$-measure. 
\begin{lemma} \label{lem:frouteTOfmeasure}
    \textbf{$f$-route $\Rightarrow$ $f$-measure:} An $\epsilon$-correct $f$-route protocol for function $f$ using resource system $\Psi$ implies the existence of a $2\epsilon$-correct $f$-measure protocol for the same function using resource system $\Psi^{\otimes 2}$.
\end{lemma}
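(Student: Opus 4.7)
The plan is to build an $f$-measure protocol from an $\epsilon$-correct $f$-route protocol, using two copies of $\Psi$, by routing two entangled qubits to opposite sides of the protocol so that both Alice and Bob end up with a local qubit whose measurement in an appropriate basis reveals $b$.

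Concretely, I would view Alice's input $Q$ as half of an EPR pair with the referee's system $R$; the referee will later measure $R$ in either the computational or Hadamard basis (according to $f(x,y)$) to obtain $b$. Alice first prepares a fresh ancilla $Q'=|0\rangle$ on her side, applies a CNOT from $Q$ onto $Q'$ (creating a GHZ state on $(Q,R,Q')$), and then applies a Hadamard to $Q'$. The resulting state is
\begin{align*}
|\psi\rangle_{QRQ'} = \tfrac{1}{\sqrt{2}}\bigl(|00\rangle_{QR}|+\rangle_{Q'} + |11\rangle_{QR}|-\rangle_{Q'}\bigr).
\end{align*}

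Next, Alice uses the first copy of $\Psi$ to $f$-route $Q$ (sending $Q$ to Bob iff $f=1$) and uses the second copy of $\Psi$, together with $O(1)$ additional EPR pairs for teleportation, to implement a complementary $\neg f$-routing of $Q'$ (sending $Q'$ to Bob iff $f=0$). So after the round, $Q$ and $Q'$ are held on opposite sides. In the $f=0$ case, the referee's computational measurement on $R$ projects $(Q,Q')$ onto $|b\rangle_Q \otimes H|b\rangle_{Q'}$; Alice (holding $Q$) measures in the computational basis and Bob (holding $Q'$) measures in the Hadamard basis, and both output $b$. By a direct black-box argument, this part of the protocol inherits the $\epsilon$-correctness of $f$-route (as in Remark~\ref{remark:blackboxerror}).

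The main obstacle is the $f=1$ case. Here the referee measures $R$ in the Hadamard basis, and with the preparation above the projected state on $(Q,Q')$ turns out to be entangled rather than a product state; consequently, a single local measurement on each side does not individually reveal $b$. In fact one can check that no tripartite state on $(Q,R,Q')$ can project, under both referee measurement bases, to a product state $|\phi_b\rangle_Q|\phi_b\rangle_{Q'}$ from which each party individually reads $b$---so the construction cannot be purely ``prepare a tripartite state then route''. Handling this case requires augmenting the preparation (for instance, introducing an additional ancilla and an extra entangling gate before routing, or exploiting the second use of the $f$-route's communication structure beyond black-box routing), so that the Hadamard-basis projection also produces a product state from which both sides can read $b$. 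Identifying the right augmentation that preserves the $\epsilon$-correctness bound is the main technical content of the proof, with the remainder being a straightforward diamond-norm error-propagation argument.
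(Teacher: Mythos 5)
Your construction correctly handles the $f=0$ case, and you have correctly diagnosed why it fails when $f=1$: after the referee's Hadamard-basis measurement your two routed qubits are left maximally entangled across the two sides, so no pair of local measurements reveals $b$. Your observation that no single tripartite state on $(Q,R,Q')$ can project to a product state in both referee bases is also right. But this means the ``augmentation'' you defer to future work is not a technical detail --- it is the entire content of the proof, and as written the proposal does not establish the lemma.

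The paper's resolution is to give each party \emph{two} qubits and read out $b$ as the \emph{parity} of two local measurement outcomes, rather than from a single local qubit. Concretely, Alice copies the input qubit $A$ in the computational basis into $C$, then copies $A$ into $B$ and $C$ into $D$ in the Hadamard basis, producing a four-qubit state. She then runs $f$-route on $B$, $\neg f$-route on $C$, and always sends $D$ to Bob, so each side ends with two qubits. One checks directly that in all four cases $(f,b)\in\{0,1\}^2$ the resulting four-qubit state has the property that measuring both of Alice's qubits (and both of Bob's) in the computational basis when $f=0$, or the Hadamard basis when $f=1$, yields outcomes whose parity equals $b$ on each side. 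The parity readout is exactly what evades the no-go you identified for single-qubit readout. Two smaller points: the lemma's resource count $\Psi^{\otimes 2}$ is achieved because $\neg f$-route uses the same resource state as $f$-route (swap which message goes left and which goes right), so your extra teleportation EPR pairs are unnecessary; and since the construction makes two oracle calls to ($\epsilon$-correct) routing protocols, the error propagation in the paper actually gives $2\epsilon$ rather than the $\epsilon$ you claim.
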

\begin{proof} We first describe the proof when the $f$-route protocol is perfectly correct; we then comment on the $\epsilon>0$ case. 

In the $f$-measure protocol, we are given input $x\in \{0,1\}^n$, $H^{f(x,y)}\ket{b}_A$ on Alice's side, $y\in\{0,1\}^n$ on Bob's side. 
Alice's protocol begins by copying system $A$ in the computational basis into a register $C$, then copying $A$ in the Hadamard basis into $B$, and $C$ in the Hadamard basis into $D$.  
Thus for example when $f(x,y)=0,b=0$, system $A$ is in state $\ket{0}$ and this procedure is
\begin{align}
    \ket{0}_A\overset{\text{copy}}{\rightarrow} \ket{00}_{AC} &=(\ket{+}_A+\ket{-}_A)(\ket{+}_C+\ket{-}_C) \nonumber \\
    &=\ket{++}_{AC}+ \ket{+-}_{AC}+ \ket{-+}_{AC} + \ket{--}_{AC} \nonumber \\
    &\overset{\text{copy}\, H}{\rightarrow} \ket{++++}_{ABCD}+ \ket{++--}_{ABCD}+ \ket{--++}_{ABCD}+ \ket{----}_{ABCD} \nonumber\\
    &=\ket{0000}_{ABCD}+ \ket{0011}_{ABCD}+ \ket{1100}_{ABCD}+\ket{1111}_{ABCD}, \nonumber 
\end{align}
where we haven't kept track of normalization. 
Repeating this for all four input states, we find
\begin{align}
    \ket{\Psi_{f=0,b=0}}_{ABCD} &= \frac{1}{2}\left(\ket{0000}+ \ket{0011}+ \ket{1100}+\ket{1111} \right), \nonumber \\
    \ket{\Psi_{f=0,b=1}}_{ABCD} &= \frac{1}{2}\left(\ket{0101}+ \ket{0110} + \ket{1001} + \ket{1010} \right), \nonumber \\
    \ket{\Psi_{f=1,b=0}}_{ABCD} &= \frac{1}{\sqrt{2}}\left(\ket{++++}+\ket{----} \right), \nonumber \\
    \ket{\Psi_{f=1,b=1}}_{ABCD} &=\frac{1}{\sqrt{2}}\left(\ket{++--}+\ket{--++} \right).
\end{align}
Alice and Bob perform $f$-route on the $B$ system, and $\neg f$-route on the $C$ system.
This uses two copies of the resource system for $f$-route.\footnote{Note that $f$-route and $\neg f$-route can be performed using the same resource system. This is because Alice and Bob can run the $f$-route protocol, then swap the message systems sent left and right to negate the function.} 
Alice always sends $D$ to Bob. 
Afterwards, Alice labels the two systems she holds (which may be $AB$ if $f(x,y)=0$ or $AC$ if $f(x,y)=1$) as $AB$, and Bob labels the two systems he holds ($CD$ if $f(x,y)=0$ or $BD$ if $f(x,y)=1$) as $CD$.
This produces the four states
\begin{align}\label{eq:postSWAP}
    \ket{\Psi_{f=0,b=0}}_{ABCD} &= \frac{1}{2}\left(\ket{0000}+ \ket{0011}+ \ket{1100}+\ket{1111} \right), \nonumber \\
    \ket{\Psi_{f=0,b=1}}_{ABCD} &= \frac{1}{2}\left(\ket{0101}+ \ket{0110} + \ket{1001} + \ket{1010} \right), \nonumber \\
    \ket{\Psi_{f=1,b=0}}_{ABCD} &= \frac{1}{\sqrt{2}}\left(\ket{++++}+\ket{----} \right), \nonumber \\
    \ket{\Psi_{f=1,b=1}}_{ABCD} &=\frac{1}{\sqrt{2}}\left(\ket{+-+-}+\ket{-+-+} \right).
\end{align}
In the second round, Alice and Bob know $x,y$ and hence know $f(x,y)$.
Then, Alice measures $AB$ and Bob measures $CD$; they measure each qubit in the computational basis if $f(x,y)=0$ and in the Hadamard basis if $f(x,y)=1$. 
Doing so, we see by inspection of the above states that Alice's two measurement outcomes will have even parity if $b=0$, and odd parity if $b=1$. 
The same is true of Bob's two measurement outcomes. 
Thus Alice and Bob both output the parity of their measurement outcomes, and this will correctly complete the $f$-measure task. 

Since the protocol uses oracle instances of protocols for $f$-route and $\neg f$-route, and the $\neg f$-route protocol is $\epsilon$ correct when the $f$-route protocol is, the overall protocol will be $2\epsilon$ correct by \Cref{remark:blackboxerror}.
\end{proof}

The final lemma needed is the implication from $f$-measure to CDQS'. 
For this result, we need to introduce some machinery developed in earlier analyses of $f$-measure \cite{bluhm2022single, asadi2025linear}. 
We follow the notation of \cite{asadi2025linear}.
We begin with the following definition. 

\begin{definition}\label{def:01setBB84}
    For $\epsilon \in [0,1]$, let $S^{\epsilon}_{C}$ be the set of states $\ket{\phi}_{PMM'}$ such that there exists a measurement on subsystem $M$ and a measurement on subsystem $M'$ that each determine the outcome of a measurement in the computational basis on $P$ with probability at least $1-\epsilon$. 
    Similarly, let $S^{\epsilon}_{H}$ be the set of states $\ket{\phi}_{PMM'}$ such that there exists a measurement on subsystem $M$ and a measurement on subsystem $M'$ that allows us to guess the outcome of a measurement in the Hadamard basis on $P$ with probability at least $1-\epsilon$.   
\end{definition}
Next we record the following lemma, proven in \cite{asadi2025linear}. 
\begin{lemma}\label{lemma:condentropybounds}
    Let $h(x)=-x\log x - (1-x) \log (1-x)$ be the binary entropy function.
    Suppose $\ket{\psi^C}_{PMM'} \in S^\epsilon_{C}$, and $\rho^C_{ZMM'}$ is obtained by measuring $P$ in the computational basis and recording the outcome in system $Z$.
    Then
    \begin{align*}
        S(Z|M')_{\rho^C_{ZMM'}} &\leq h(\epsilon)\, .
    \end{align*}
\end{lemma}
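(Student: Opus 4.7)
The plan is to derive the bound by combining the classical Fano inequality with the data-processing inequality applied to classical--quantum conditional entropy. By \cref{def:01setBB84} and the hypothesis $\ket{\psi^C}_{PMM'} \in S^\epsilon_{C}$, there exists a POVM on subsystem $M'$ whose classical outcome, call it $\hat Z$, satisfies $\Pr[\hat Z = Z] \geq 1-\epsilon$, where $Z$ is the computational-basis measurement outcome on $P$ recorded in $\rho^C_{ZMM'}$.

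First I would use data processing to trade the quantum system $M'$ for the classical guess $\hat Z$ in the conditional entropy. Since $Z$ is classical and the POVM on $M'$ is a quantum channel that produces $\hat Z$, the mutual information can only shrink under this processing, $I(Z:M')_{\rho^C} \geq I(Z:\hat Z)$, which rearranges to
\begin{align*}
S(Z|M')_{\rho^C} \;\leq\; S(Z|\hat Z) \;=\; H(Z|\hat Z),
\end{align*}
with the final equality holding because both $Z$ and $\hat Z$ are now classical.

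Next I would invoke Fano's inequality. Because $P$ is a qubit, $Z$ is binary, and the standard Fano bound $H(Z|\hat Z) \leq h(p_e) + p_e \log(|Z|-1)$ collapses to $H(Z|\hat Z) \leq h(p_e)$, where $p_e = \Pr[Z \neq \hat Z] \leq \epsilon$. Monotonicity of $h$ on $[0,1/2]$ then gives $H(Z|\hat Z) \leq h(\epsilon)$, which combined with the previous display yields the lemma.

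The only subtlety is the regime $\epsilon > 1/2$, where $h$ is not monotone on the relevant interval. This is benign: one may assume without loss of generality that $p_e \leq 1/2$ by postprocessing any guess with a trivial random bit flip, and in the regime $\epsilon > 1/2$ the hypothesis gives essentially no information anyway. No step here is technically hard; the content of the lemma is a clean transcription of the binary Fano inequality into the classical--quantum setting via data processing.
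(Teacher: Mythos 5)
Your proof is correct and follows essentially the route the paper intends (the lemma is imported from the cited reference, but the paper's own toolkit of the Holevo bound and Fano's inequality is exactly this argument): use data processing to pass from $S(Z|M')$ to $S(Z|\hat Z)$ for the guess $\hat Z$ produced by the measurement on $M'$ guaranteed by membership in $S^\epsilon_C$, then apply the binary Fano inequality. The one blemish is your handling of $\epsilon>1/2$ — forcing $p_e\le 1/2$ by flipping the guess does not yield $h(p_e)\le h(\epsilon)$ once $h$ is decreasing, and indeed the stated bound is simply false there (e.g.\ $\epsilon=1$ would force $S(Z|M')\le 0$); but the lemma is only meaningful and only used for $\epsilon\le 1/2$, so this does not affect the substance.
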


Now we are ready to prove that $f$-measure implies CDQS.
To show this, we actually first show $f$-measure implies CDQS' (where the secret is taken to be classical).
The intuition for the reduction is that an $f$-measure protocol has the effect of copying the input in the computational basis when $f(x,y)=0$, and in the Hadamard basis when $f(x,y)=1$. 
We exploit this functionality by preparing $H\ket{s}_Q$ and taking $Q$ as the quantum input to the $f$-measure protocol, so that the secret is copied correctly (and sent to the referee) when $f(x,y)=1$, but not when $f(x,y)=0$. 
Finally, \Cref{thm:CDQS'toCDQS} (proven in \Cref{sec:CDQSproperties}) shows that CDQS' implies CDQS, completing the chain of equivalences.

\begin{lemma} \label{lem:fmeasureTOcdqs} \textbf{$f$-measure $\Rightarrow$ CDQS':} Given a protocol for $f$-measure for function $f$ using resource system $\Psi$, there is a protocol for CDQS' for the same function using the same resource system. 
Further, as the error in the $f$-measure protocol goes to zero the correctness and security errors in the CDQS' protocol goes to zero.
\end{lemma}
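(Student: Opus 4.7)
The plan is to use the $f$-measure protocol essentially as a black box. Since the CDQS' secret $s\in\{0,1\}$ is classical and known to Alice, she prepares the qubit $H\ket{s}_Q$ and feeds $Q$ into the $f$-measure protocol. Both parties run the protocol, using the same resource system $\Psi$, and produce classical one-bit outputs $b_A, b_B$, which they send to the referee as their messages. The decoder outputs $b_A$ (or the majority).

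Correctness when $f(x,y)=1$ should be immediate, since $H\ket{s}=H^{f(x,y)}\ket{s}$ is exactly an $f$-measure input state corresponding to referee outcome $s$. Working in the purified picture $\Psi^+_{QR}$ and conditioning the joint $f$-measure correctness $\Pr[Z=b_A=b_B]\geq 1-\epsilon$ on the uniform Hadamard outcome $Z=s$, I expect the CDQS' correctness error to be at most $2\epsilon$.

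Security when $f(x,y)=0$ is the main challenge, because the input $H\ket{s}$ is not a computational-basis state and so $f$-measure correctness gives no direct information about what Alice and Bob output. I again pass to the purification, replacing ``Alice prepares $H\ket{s}$'' by ``$Q$ is one half of $\Psi^+_{QR}$, and the Hadamard-basis measurement of $R$ yields the secret $s$''. After running the $f$-measure protocol and coherently applying the Stinespring dilations of Alice's and Bob's POVMs --- which by Naimark I may take to be projective --- one obtains a pure state $\ket{\chi}_{R\tilde M b_A\tilde M'b_B}$ where $\tilde M,\tilde M'$ are the post-measurement systems and $b_A,b_B$ are the outcome registers. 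The key structural observation is that since the measurements are projective, $\tilde M$ lies in the range of the projector for outcome $b_A$, so $\tilde M$ alone determines $b_A$, and symmetrically $\tilde M'$ determines $b_B$.

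The main technical step is then to apply Theorem~\ref{thm:CIT} (CIT) to $\ket{\chi}$ with $R$ as the reference qubit and the bipartition $E=\tilde M\tilde M'$, $F=b_Ab_B$, yielding
\begin{align*}
S(Z\mid \tilde M\tilde M')_{\chi^C}+S(s\mid b_Ab_B)_{\chi^H}\geq 1.
\end{align*}
Since $\tilde M$ suffices to recover $b_A$, and $b_A$ matches the computational-basis outcome $Z$ with error $\leq \epsilon$ by $f$-measure correctness at $f=0$, Lemma~\ref{lemma:condentropybounds} combined with data processing bounds the first term by $h(\epsilon)$. Hence $I(s:b_Ab_B)_{\chi^H}\leq h(\epsilon)$, and Pinsker's inequality~\eqref{eq:pinsker} converts this into $\|\rho_{b_Ab_B\mid s=0}-\rho_{b_Ab_B\mid s=1}\|_1=O(\sqrt{h(\epsilon)})$, the required CDQS' security error, which vanishes as $\epsilon\to 0$. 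The hardest part will be choosing the right CIT partition: the naive classical choice putting $b_A$ and $b_B$ on opposite sides only gives one-party bounds like $S(s\mid b_B)_{\chi^H}\geq 1-h(\epsilon)$, which are too weak against a referee who sees both outputs. Routing both classical outputs to the same side of CIT, and using the projective Naimark structure to control the computational-side entropy through the quantum residuals, is what makes the argument go through.
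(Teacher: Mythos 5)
Your proof is correct, and it shares the paper's core analytic machinery --- purifying the secret preparation so that $s$ becomes the Hadamard-basis outcome on a reference qubit, bounding the computational-basis conditional entropy via the guessing argument of \cref{lemma:condentropybounds}, applying \cref{thm:CIT}, and converting to a trace-distance bound with Pinsker --- but the protocol and the CIT partition are genuinely different. The paper forwards Bob's entire post-communication \emph{quantum} share $M$ to the referee (Alice sends what she would have communicated to Bob, Bob sends what he would have kept), discards Alice's share $M'$, and applies CIT with the natural bipartition $E=M'$, $F=M$; no dilation of the output measurements is needed, and correctness is immediate because the referee just runs Bob's decoding POVM on $M$. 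You instead send only the classical bits $(b_A,b_B)$, which forces the nonstandard partition $E=\tilde M\tilde M'$, $F=b_Ab_B$ and the Naimark/projectivity argument showing that $\tilde M$ still determines $b_A$, so that the computational-basis term remains bounded by $h(\epsilon)$. Both routes go through; yours gives a protocol with only $O(1)$ classical communication to the referee (slightly more than the lemma asks for) at the price of more delicate entropy bookkeeping, while the paper's is shorter because the referee's system coincides with one whole side of the CIT split. Two minor points: (i) the $s\oplus r$ one-time pad of the paper (via \cref{thm:CDQSprimearbitrary}) is not needed in your version --- since CDQS$'$ security is a per-$s$ statement and your average-over-$s$ Pinsker bound converts to a worst-case bound with a factor of $2$, directly encoding $H\ket{s}$ suffices; (ii) be sure to absorb the purifying registers of the resource state and of the first-round channels into $E=\tilde M\tilde M'$, where they can only help the guessing measurement, so that $S(Z|E)_{\chi^C}\leq h(\epsilon)$ survives and $\ket{\chi}$ is genuinely a tripartite pure state as CIT requires.
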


\begin{proof}
    Beginning with the $f$-measure protocol, take the input system $Q$ to be in the state $H\ket{r}_Q$, with $r\in\{0,1\}$ chosen uniformly at random.
    The $f$-measure protocol produces system $M'$ which ends up on Alice's side after the communication round, and system $M$ which ends up on Bob's side. 
    In the CDQS' protocol, Alice and Bob run the same operations but then throw away $M'$ and send $M$ to the referee. 
    Alice additionally sends the bit $s\oplus r$ for $s$ the secret input to the CDQS. 
    We claim this defines a CDQS' protocol which hides $s$ and is both correct and secure. 
    
    Notice that the protocol involves running the first round operations of the $f$-measure protocol once, so uses one copy of the resource system $\Psi$. 
    We give the proof below assuming a single bit secret, but it extends easily to any $O(1)$ size secret. 

    \vspace{0.2cm}
    \noindent \textbf{Correctness:} By $\epsilon$-correctness of the $f$-measure protocol, when $f(x,y)=1$ Alice and Bob in the $f$-measure protocol can obtain $r$ with probability at least $1-\epsilon$.
    Since here the referee in the CDQS receives the systems Bob would have received, he can also determine $r$ with at least probability $1-\epsilon$, and hence recover $s$ from $s\oplus r$ and $r$ with the same probability.
    This means the CDQS' protocol is also $\epsilon$-correct. 

    \vspace{0.2cm}
    \noindent \textbf{Security:} In the case where $f(x,y)=0$, the intuition is that correctness of the $f$-measure protocol means a state in the computational basis inserted into the protocol would allow $r$ to be determined. 
    However, since we have recorded $r$ in the Hadamard basis, we expect $r$ (and hence $s$) cannot be determined perfectly. 
    To make this precise, we imagine preparing the system $Q$ in the state $H\ket{r}_Q$ by first preparing the maximally entangled state on $PQ$, then measuring $P$ in the Hadamard basis. 
    This measurement on $P$ commutes with the actions of Alice and Bob in the $f$-measure protocol, so we can first analyze that protocol as if $PQ$ were maximally entangled. 
    Recalling \Cref{def:01setBB84}, we have by $\epsilon$-correctness of the $f$-measure protocol that
    \begin{align}
        \rho_{PMM'} \in S^\epsilon_C.
    \end{align}
    Then, from \Cref{lemma:condentropybounds}, we have that
    \begin{align}
        S(Z|M')_{\rho^C_{ZMM'}} \leq h(\epsilon).
    \end{align}
    Then applying CIT we have that
    \begin{align}
        S(Z|M)_{\rho^H_{ZMM'}} \geq 1- S(Z|M')_{\rho^{C}_{ZMM'}} \geq 1-h(\epsilon),
    \end{align}
    which says that, from the referee's perspective, $Z$ (which determines $r$) is high entropy, suggesting the referee cannot learn $s$.

    To prove security of the CDQS' protocol formally, we need that the state the referee sees is close to one that doesn't depend on $s$.  
    We will first show that a protocol where Alice and Bob only send $M$ is secure (in the sense that $\rho_{M}$ is independent of $r$).
    To show this, note that the general form of $\rho^H_{ZM}$ is
    \begin{align}\label{eq:form}
        \rho^H_{ZM}(r,x,y) = \frac{1}{2}\sum_r \ketbra{r}{r}_Z\otimes \rho_{M}'(r,x,y).
    \end{align}
    We now use
    \begin{align}
        I(Z:M)_{\rho^H} = S(Z)_{\rho^H} - S(Z|M)_{\rho^H}  
    \end{align}
    and the lower bound above on $S(Z|M)_{\rho^H}$ to bound
    \begin{align}
        I(Z:M)_{\rho^H} \leq h(\epsilon).
    \end{align}
    Next use $I(A:B)_\rho=D(\rho_{AB}\|\rho_A\otimes \rho_{B})$ and Pinsker's inequality \eqref{eq:pinsker} to obtain
    \begin{align}\label{eq:TDH}
        \left\Vert \frac{1}{2}\sum_r \ketbra{r}{r}\otimes \rho^H_M(x,y) - \frac{1}{2}\sum_r \ketbra{r}{r}_Z\otimes \rho'_{M}(r,x,y) \right\Vert_1 \leq \sqrt{(2\ln 2)h(\epsilon)},
    \end{align}
    where $\rho^H_M$ is formed by tracing out $Z$ from $\rho^H_{ZM}$, and is from \Cref{eq:form}
    \begin{align}
        \rho^H_{M}(x,y) = \frac{1}{2}\sum_r \rho'_M(r,x,y).
    \end{align}
    Then, pulling the sum out of the trace distance in \Cref{eq:TDH}, we obtain
    \begin{align}
         \frac{1}{2}\sum_r \left\Vert \rho^H_{M}(x,y) - \rho'_{M}(r,x,y) \right\Vert_1 \leq \sqrt{(2\ln 2)h(\epsilon)}.
    \end{align}
    From this we have that, for all $r$, we must have
    \begin{align}
        \left\Vert \rho^H_{M}(x,y) - \rho_{M}'(r,x,y) \right\Vert_1 \leq 2\sqrt{(2\ln 2)h(\epsilon)}.
    \end{align}
    Comparing to the security \Cref{def:CDQSprime} we see that at small enough $\epsilon$, the CDQS' protocol becomes arbitrarily secure (where the secret is $r$), as needed. 
\end{proof}

\begin{remark}\textbf{$f$-measure $\Rightarrow$ CDQS:}
    By \Cref{thm:CDQS'toCDQS} along with \Cref{lem:fmeasureTOcdqs} we have that we can construct a CDQS protocol using $O(1)$ copies of the $f$-measure resource state $\Psi$.
\end{remark}

\subsection{Coherent unitary and coherent phase}

In this section we prove the implications $Cf$-SWAP $\Rightarrow$ $Cf$-PHASE $\Rightarrow$ $Cf$-PAULI. 

The coherently controlled SWAP NLQC asks us to implement the unitary
\begin{align}
    Cf\text{-SWAP}_{AA'BB'} = \sum_{x,y} \textnormal{SWAP}_{A'B'}^{f(x,y)}\otimes \ketbra{x}{x}_A\otimes \ketbra{y}{y}_B\,.
\end{align}
We can see that a protocol for $Cf$-SWAP implies a protocol for $Cf$-PHASE, which recall asks us to implement the unitary
\begin{align}
    Cf\text{-PHASE}_{AB} = \sum_{x,y} (-1)^{f(x,y)} \ketbra{x}{x}_A\otimes\ketbra{y}{y}_B.
\end{align}
We state this implication as the next lemma. 

\begin{lemma} \label{lem:cfswapTOcfphase}
    For every family of Boolean functions, there is a $O(1)$ implication from $Cf$-SWAP to $Cf$-PHASE. 
\end{lemma}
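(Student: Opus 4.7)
The plan is to follow the hint given earlier in the introduction: the SWAP operator has a $-1$ eigenspace (the antisymmetric subspace of two qubits), spanned by the singlet $\ket{\psi^-}_{A'B'} = \tfrac{1}{\sqrt{2}}(\ket{01}-\ket{10})$. On this state, $\mathrm{SWAP}^{f(x,y)}_{A'B'}\ket{\psi^-} = (-1)^{f(x,y)}\ket{\psi^-}$. Feeding $\ket{\psi^-}$ as the $A'B'$ input of $Cf$-SWAP therefore converts the controlled SWAP into a controlled $(-1)$ phase on the $AB$ registers while leaving $A'B'$ decoupled.

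Concretely, I would proceed as follows. Given an arbitrary $Cf$-PHASE input on $AB$ (possibly entangled with a reference $R$), Alice adjoins a fresh qubit $A'$ and Bob adjoins a fresh qubit $B'$. They prepare $\ket{\psi^-}_{A'B'}$ from one additional shared EPR pair via local unitaries (this adds $\beta=O(1)$ qubits of entanglement and no extra communication). They then invoke the $Cf$-SWAP protocol once, as an oracle, on the combined system $AA'BB'$. Finally, they discard $A'B'$ and output $AB$. Since for any $x,y$ one has
\begin{align*}
Cf\text{-SWAP}_{AA'BB'}\bigl(\ket{x,y}_{AB}\otimes\ket{\psi^-}_{A'B'}\bigr) = (-1)^{f(x,y)}\ket{x,y}_{AB}\otimes\ket{\psi^-}_{A'B'},
\end{align*}
the composition on $AB$ equals $Cf$-PHASE on computational basis inputs; by linearity it equals $Cf$-PHASE on all inputs, including those entangled with $R$.

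For error propagation, the construction uses a single black-box invocation of $Cf$-SWAP, preceded by the isometry $V: \rho_{RAB}\mapsto \rho_{RAB}\otimes\ketbra{\psi^-}{\psi^-}_{A'B'}$ and followed by the partial trace $\tr_{A'B'}$. By \cref{remark:blackboxerror} (or directly by the contractivity of the diamond norm under pre- and post-composition with fixed channels), an $\epsilon$-correct $Cf$-SWAP implementation yields an $\epsilon$-correct $Cf$-PHASE implementation. This gives an $(\alpha,\beta,\delta)$ resource-state reduction with $\alpha=1$, $\beta=O(1)$, and $\delta(\epsilon)=\epsilon$, which is an $O(1)$ reduction as claimed.

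There is no real obstacle here beyond checking the eigenvalue identity on the singlet and verifying that the auxiliary state preparation and discarding can be done locally from $O(1)$ additional entanglement; both are immediate. The construction is in fact an oracle reduction in the sense of \cref{def:oraclereduction}, and so also yields a resource-state reduction with the same parameters.
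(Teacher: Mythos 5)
Your proposal is correct and follows essentially the same route as the paper's proof: feed the singlet (the $-1$ eigenstate of SWAP, prepared locally from one extra EPR pair) into the $A'B'$ register of $Cf$-SWAP, trace it out, and propagate the error via the black-box remark. No gaps.
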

\begin{proof}
First consider the case where we have access to an exactly correct $Cf$-SWAP protocol. 
Alice and Bob hold one extra EPR pair, which we take to be on the $A'B'$ systems, $\Psi^+_{A'B'}$. 
Have Alice apply $\mathbf{Z}_{A'}\mathbf{X}_{A'}$ operator to put the $A'B'$ systems in the state
\begin{align}
    \ket{\Phi^-} = \frac{1}{\sqrt{2}}(\ket{01} - \ket{10}).
\end{align}
Then, run the $Cf$-SWAP protocol with this state as the input on $A'B'$, and then trace out $A'B'$ at the end, which will implement the operation
\begin{align}
    \ket{\Phi^-}_{A'B'}\ket{x}_A\ket{y}_B \overset{Cf\text{-SWAP}}{\rightarrow}& \text{SWAP}^{f(x,y)}_{A'B'}\ket{\Phi^-}_{A'B'} \ket{x}_A\ket{y}_B \nonumber \\
    =\,\,\,\,\,\,\,\,& (-1)^{f(x,y)}\ket{\Phi^-}_{A'B'} \ket{x}_A\ket{y}_B \nonumber \\
    \overset{\tr_{A'B'}}{\rightarrow}\,\,\,\,& (-1)^{f(x,y)}\ket{x}_A\ket{y}_B\,.
\end{align}
This is exactly the action of the $Cf$-PHASE operator on a basis, as needed. 

Observe that this protocol uses the $Cf$-SWAP protocol as an oracle, so \Cref{remark:blackboxerror} implies that $Cf$-SWAP protocol with error $\epsilon$ gives a $Cf$-PHASE protocol with error at most $\epsilon$. 
\end{proof}

Note that we have not been able to show an implication from $Cf$-PHASE to $Cf$-SWAP, but we do not have any formal evidence this is not possible. 

Continuing, we can find an implication from $Cf$-PHASE to $Cf$-PAULI, which choosing the Pauli to be $\mathbf{Z}$ is the unitary
\begin{align}
    Cf\text{-}\mathbf{Z}_{A'AB} = \sum_{x,y} \mathbf{Z}_{A'}^{f(x,y)} \otimes \ketbra{x}{x}_A\otimes \ketbra{y}{y}_B.
\end{align}
We can now state the following theorem. 
\begin{theorem} \label{thm:cfphaseTOcfpauli}
    For every family of Boolean functions, there is a $O(1)$ implication from $Cf$-PHASE to $Cf$-PAULI and from $Cf$-PAULI to $Cf$-PHASE. 
\end{theorem}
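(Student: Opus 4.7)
The plan is to prove the two directions separately, leveraging the closure properties of $Cf$-PHASE already established in \cref{thm:PhaseANDclosure}.

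For $Cf$-PAULI $\Rightarrow$ $Cf$-PHASE, I would fix the target qubit $A'$ to the $-1$ eigenstate of $\mathbf{Z}$, namely $\ket{1}_{A'}$, before invoking the $Cf$-$\mathbf{Z}$ protocol. On input $\ket{1}_{A'}\ket{x}_A\ket{y}_B$ the protocol produces $(-1)^{f(x,y)}\ket{1}_{A'}\ket{x}_A\ket{y}_B$, and tracing out $A'$ leaves exactly the phase action $(-1)^{f(x,y)}\ket{x}_A\ket{y}_B$. This costs one additional qubit and uses a single black-box call to the $Cf$-PAULI protocol, so the error propagates through unchanged by \cref{remark:blackboxerror}. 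For a Pauli other than $\mathbf{Z}$, one inserts the $-1$ eigenstate of the chosen Pauli instead.

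For $Cf$-PHASE $\Rightarrow$ $Cf$-PAULI, I would invoke \cref{thm:PhaseANDclosure} to lift the given $Cf$-PHASE protocol for $f(x,y)$ into a $Cf$-PHASE protocol for $f(x,y)\wedge z_1$ at the cost of one extra EPR pair, placing the auxiliary bit $z_1$ on Alice's side and identifying its register with the target qubit $A'$. The resulting protocol implements
\begin{align}
\sum_{x,y,z_1} (-1)^{f(x,y)\wedge z_1}\,\ketbra{x}{x}_A\otimes \ketbra{y}{y}_B\otimes \ketbra{z_1}{z_1}_{A'}
\end{align}
on computational basis states, which coincides with $\sum_{x,y}\mathbf{Z}_{A'}^{f(x,y)}\otimes \ketbra{x}{x}_A\otimes \ketbra{y}{y}_B$ because $\mathbf{Z}^{f(x,y)}\ket{z_1}=(-1)^{f(x,y)z_1}\ket{z_1}$. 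Since both operators are diagonal in this basis and agree on every basis element, they agree as unitaries, so the protocol automatically handles arbitrary superposition inputs on $A'$ correctly. The extension to $Cf$-$\mathbf{X}$ or $Cf$-$\mathbf{Y}$ is obtained by conjugating $A'$ with the appropriate single-qubit rotation before and after the $Cf$-$\mathbf{Z}$ protocol, which is purely local and does not change the entanglement cost.

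The only subtlety I anticipate is confirming that \cref{thm:PhaseANDclosure} permits placing the new variable $z_1$ on Alice's side specifically (so it can be identified with $A'$); inspecting its black-box construction should make this transparent, as the roles of Alice and Bob are symmetric in introducing the additional AND-factor. Given this, both directions are $O(1)$ in resources and communication, and the error bound follows directly from \cref{remark:blackboxerror}, so I do not expect any deeper obstacle.
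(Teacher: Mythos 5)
Your proposal is correct and follows essentially the same route as the paper: $Cf$-PAULI $\Rightarrow$ $Cf$-PHASE by inserting $\ket{1}_{A'}$, and $Cf$-PHASE $\Rightarrow$ $Cf$-PAULI by observing that $Cf$-$\mathbf{Z}$ acts diagonally as $(-1)^{f(x,y)\wedge z}$ and invoking \cref{thm:PhaseANDclosure}, with errors handled via \cref{remark:blackboxerror}. The subtlety you flag is real but harmless, exactly as you suspect: the construction in \cref{thm:PhaseANDclosure} places $z$ on Bob's side, but the argument is symmetric under exchanging Alice and Bob, so the auxiliary bit can equally be placed on Alice's side and identified with $A'$.
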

\begin{proof}
Consider the action of $Cf$-$\mathbf{Z}$ on an arbitrary basis state, $\ket{z}_{A'}\ket{x}_A\ket{y}_B$,
\begin{align}
    Cf\text{-}\mathbf{Z}_{A'AB}\ket{z}_{A'}\ket{x}_A\ket{y}_B = \mathbf{Z}^{f(x,y)}_{A'}\ket{z}_{A'}\ket{x}_A\ket{y}_B = (-1)^{f(x,y)\wedge z}\ket{z}_{A'}\ket{x}_A\ket{y}_B.
\end{align}
Thus, a $Cf$-PHASE protocol for the function $f(x,y)\wedge z$ is exactly a $Cf$-$\mathbf{Z}$ protocol, where the input qubit for the $Cf$-PHASE that holds $z$ is treated as the target qubit in the $Cf$-$\mathbf{Z}$ protocol. 
Let the resource system used in the $Cf$-PHASE protocol for function $f$ be $\Psi$.
Using \Cref{thm:PhaseANDclosure}, we know the $Cf$-PHASE protocol for $f(x,y)\wedge z$ can be implemented using $\Psi\otimes \Psi^+$, and so $Cf$-$\mathbf{Z}$ can be as well. 

To obtain $Cf$-PHASE from $Cf$-PAULI, we choose $A'$ to be in the state $\ket{1}_{A'}$ and run the $Cf$-PAULI protocol. 

To handle the approximate case, we again observe that because the above reductions are oracle reduction we can employ \Cref{remark:blackboxerror}.
\end{proof}

$Cf$-PHASE can also be used to implement other coherently controlled unitaries. 
In particular any unitaries that act on qubits held entirely by Alice (or entirely by Bob) and have $\pm 1$ eigenvalues can be implemented via a similar strategy to the one used to implement $Cf$-$\mathbf{Z}$. 
To implement $\mathbf{U}_{A'}$ Alice should first apply a unitary $\mathbf{V}_{A'}$ to map the inputs from the eigenbasis of $\mathbf{U}_{A'}$ to the computational basis, implement a controlled phase as needed, then apply $\mathbf{V}^\dagger$. 
For instance, a $CZ$ gate can be implemented in this way with $\mathbf{V}_{A'}=\mathcal{I}_{A'}$ and using the controlled phase operation $f(x,y)\wedge z_1 \wedge z_2$. 
Similarly, $\text{SWAP}_{A'_1A'2}$ can be implemented by choosing $\mathbf{V}_{A'_1A'_2}$ to map from the Bell basis to the computational basis and a controlled phase with function $f(x,y)\wedge z_1 \wedge z_2$ again.\footnote{Note that while coherently controlled $\text{SWAP}_{A'_1A'_2}$ follows from $Cf$-PHASE, this does not imply that  controlled $\text{SWAP}_{A'B'}$ can, where in the later case the two qubits being swapped are separately on Alice and Bob's side. In this article when we refer to $Cf$-SWAP we mean implementing $\text{SWAP}_{A'B'}$.} 
Similar strategies to those described here also relate controlled phase operations with other choices of phase $\omega \neq \pm 1$ and unitaries with eigenvalues $1,\omega$. 

\subsection{Coherent Pauli to CDQS}

Previously, we noted that $Cf$-SWAP $\Rightarrow$ $f$-route, and hence $Cf$-SWAP also implies the other classically controlled protocols we study. 
Here, we show that in fact the potentially weaker coherent primitive of $Cf$-PAULI already suffices to obtain the classically controlled classes.

\begin{theorem}\label{thm:PaulitoCDQS}
    For every family of Boolean functions, there is a $O(1)$ implication from $Cf$-PAULI to CDQS. 
\end{theorem}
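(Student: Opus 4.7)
The plan is to build a CDQS protocol for $f$ from one invocation of $Cf$-$\mathbf{X}$ (equivalent to $Cf$-PAULI by \cref{thm:cfphaseTOcfpauli}) together with a single extra shared EPR pair. The key observation is that $Cf$-$\mathbf{X}$ applied to the input state $\ket{0}_{A'}$ produces $\ket{f(x,y)}$ on Alice's side, so after running the NLQC Alice can measure $A'$ in the computational basis to learn the classical value $f(x,y)$ and use it to decide whether to release the correction bits of a teleported secret. Since the $f$-measure $\Rightarrow$ CDQS reduction of \cref{lem:fmeasureTOcdqs} also uses the intermediate round of communication of its subprotocol, this pattern is compatible with the CDQS setup as used in this paper.

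Concretely, given Alice's secret $\ket{\psi}_Q$ and an additional shared EPR pair $\ket{\Phi^+}_{A_e B_e}$, Alice first Bell-measures $(Q,A_e)$ to obtain classical outcomes $(a,b)$; by standard teleportation Bob's $B_e$ now holds $X^a Z^b\ket{\psi}$. Alice and Bob next run the $Cf$-$\mathbf{X}$ NLQC with Alice's input $\ket{0}_{A'}$, and Alice measures $A'$ to learn $f(x,y)$. She then sends the honest pair $(a,b)$ to the referee if $f=1$, and otherwise a fresh uniformly random independent pair $(a',b')$ sampled from a local ancilla; Bob sends $B_e$ unconditionally. The referee, who knows $f$, always applies $Z^{b_{\text{msg}}} X^{a_{\text{msg}}}$ to $B_e$ and outputs the result as the reconstructed secret.

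Correctness when $f=1$ is immediate: the referee exactly undoes the teleportation correction and recovers $\ket{\psi}$. Security when $f=0$ follows from a Pauli-twirl argument: since the bits Alice sends are independent of the uniformly random true Bell outcomes $(a,b)$, averaging over the latter makes $B_e$ maximally mixed, so the referee's joint view factors as uniform classical bits tensored with $I/2$ and is manifestly independent of $\ket{\psi}$. In the ideal case this yields a perfect $O(1)$ reduction, consuming one $Cf$-PAULI resource, one EPR pair, and two classical bits of communication.

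For the approximate case I would invoke \cref{remark:blackboxerror}: when $Cf$-$\mathbf{X}$ is $\epsilon$-correct, the state of $A'$ is within $\epsilon$ trace distance of $\ket{f}$, so Alice measures the wrong value of $f$ with probability at most $\epsilon$. A wrong guess sends her into the opposite branch of the protocol and violates either correctness (when $f=1$ but she thinks $f=0$) or security (when $f=0$ but she thinks $f=1$), so both errors stay $O(\epsilon)$ and the required condition $\delta\to 0$ as $\epsilon\to 0$ of \cref{def:reduction} is satisfied. The main technical obstacle I anticipate is cleanly bounding the CDQS security error in diamond norm by exhibiting an explicit simulator matching the referee's view in the $f=0$ case up to $O(\epsilon)$ trace distance; this is essentially routine once the Pauli-twirl calculation is combined with the measurement error bound from \cref{remark:blackboxerror}.
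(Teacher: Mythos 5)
There is a genuine gap in your proposal: it violates the communication structure of CDQS. In CDQS Alice and Bob do \emph{not} communicate with each other --- they share only a resource state and each sends a single message to the referee. Your protocol, however, requires Alice and Bob to ``run the $Cf$-$\mathbf{X}$ NLQC'' to completion and then have Alice measure the output register $A'$ to learn $f(x,y)$. The output system of an NLQC only materializes after the second-round operations, which act on systems Alice kept \emph{together with the message Bob sent her} in the one round of NLQC communication. In the CDQS setting Alice never receives that message, so she cannot complete the NLQC on her side and cannot learn $f(x,y)$. Your appeal to \cref{lem:fmeasureTOcdqs} rests on a misreading: that reduction (like the paper's proof of \cref{thm:PaulitoCDQS}) only executes the \emph{first-round} operations of the subprotocol and then reroutes the would-be inter-player messages to the referee, who simulates Bob's second round; Alice receives nothing and learns nothing. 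Indeed, if Alice could non-interactively compute $f(x,y)$ and conditionally release the secret, CDS would essentially trivialize --- the whole difficulty of the primitive is that neither player alone knows $f(x,y)$, and the only party who does know it (the referee) is precisely the adversary in the $f=0$ case, so the conditioning cannot be delegated to them either.

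The paper's proof circumvents this by never requiring anyone on Alice's side to learn $f(x,y)$: the secret is folded into the controlled function itself. Alice and Bob run only the first-round operations of $Cf$-$\mathbf{X}_{B'}$ for the function $f(x,y)\wedge s$ with target $\ket{0}_{B'}$ on Bob's side; the systems destined for Bob are sent to the referee, who completes Bob's second round and reads off $f(x,y)\wedge s = s$ when $f(x,y)=1$. Security when $f(x,y)=0$ follows from a purification argument showing the referee's reduced state is independent of $s$. Your teleportation-plus-conditional-release idea cannot be repaired within the non-interactive model, so the construction needs to be replaced rather than patched.
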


\begin{proof}
    We use the $Cf$-PAULI protocol to implement a CDQS' protocol. \Cref{thm:CDQS'toCDQS} then gives us the CDQS protocol. We use the following procedure. 
    \begin{enumerate}
        \item From their classical inputs Alice prepares quantum systems $\ket{s}_{A'}\ket{x}_A$; Bob prepares $\ket{0}_{B'}\ket{y}_B$.
        \item Alice and Bob execute the first round operations of a protocol for $Cf$-$\mathbf{X}_{B'}$ with choice of function $f(x,y)\wedge s$. 
        \item The systems that would be sent to Bob (call them system $M$) in the second round are sent to the referee in the CDQS protocol. The remaining degrees of freedom which would be sent to Alice (call them system $M'$) are traced out.
    \end{enumerate}
    Using \Cref{thm:PhaseANDclosure} and the equivalence of $Cf$-PAULI and $Cf$-PHASE, we have that this uses one copy of the resource system for a $Cf$-$\mathbf{X}$ protocol for function $f$, plus $O(1)$ additional EPR pairs.
    
    \vspace{0.2cm}
    \noindent \textbf{Correctness:} We have that $f(x,y)=1$. Let the action of (both rounds of) the $Cf$-$\mathbf{X}$ protocol be described by a channel $\mathbfcal{N}$. 
    Then, perfect correctness of the $Cf$-$\mathbf{X}$ protocol would give that 
    \begin{align}
        \mathbfcal{N}\left(\ketbra{s,x}{s,x}_{A'A}\otimes \ketbra{0,y}{0,y}_{B'B}\right) &= \ketbra{s,x}{s,x}_{A'A}\otimes \ketbra{f(x,y)\wedge s,y}{f(x,y)\wedge s,y}_{B'B} \nonumber \\
        &= \ketbra{s,x}{s,x}_{A'A}\otimes \ketbra{s,y}{s,y}_{B'B}, \nonumber 
    \end{align}
    where in the second line we used that $f(x,y)=1$.
    Since the referee in the CDQS gets the same systems as Bob obtains in the $Cf$-$\mathbf{X}$, the referee can apply the same local channel as Bob and hence recover the $B'B$ portion of the above density matrix. 
    Thus the referee in the perfectly correct case can measure $B'$ and learn $f(x,y)\wedge s=s$.
    
    To obtain approximate correctness when the $Cf$-$\mathbf{X}$ protocol is approximately correct, we note that the CDQS protocol we implement produces the same state on $B'B$ as if we implemented the $Cf$-$\mathbf{X}$ protocol as an oracle, and then traced out the system on the left. 
    Because of this, \Cref{remark:blackboxerror} implies the protocol here is $\epsilon$-correct if the $Cf$-$\mathbf{X}$ protocol is $\epsilon$ correct. 

    \vspace{0.2cm}
    \noindent \textbf{Security:} Let $\mathbf{V}_{SAB'B\rightarrow MM'}$ be an isometric extension of Alice and Bob's first round operations in the $Cf-\mathbf{X}$ protocol. 
    Let $\mathbf{W}^L_{M'\rightarrow SAl}$ and $\mathbf{W}^R_{M\rightarrow B'Br}$ to be isometric extensions of the second round operations, with $l$ and $r$ the purifying systems. 
    Define channels
    \begin{align}   
        \mathbfcal{V}_{SAB'B\rightarrow MM'}&= \mathbf{V}_{SAB'B\rightarrow MM'}(\cdot) \mathbf{V}_{SAB'B\rightarrow MM'}^\dagger\nonumber \\
        \mathbfcal{W}^L_{M'\rightarrow A'Al} &= \mathbf{W}^L_{M'\rightarrow A'Al}(\cdot)(\mathbf{W}^L_{M'\rightarrow A'Al})^\dagger \nonumber \\
        \mathbfcal{W}^R_{M'\rightarrow A'Ar} &= \mathbf{W}^R_{M\rightarrow B'Br}(\cdot)(\mathbf{W}^R_{M\rightarrow B'Br})^\dagger.
    \end{align}
    Then the action of the $Cf$-$\mathbf{X}$ protocol, in $f(x,y)=0$ instances and assuming perfect correctness of the $Cf$-$\mathbf{X}$, is
    \begin{align}
        \ketbra{s,x}{s,x}_{A'A}&\otimes \ketbra{0,y}{0,y}_{B'B} \nonumber \\
        &= \tr_{lr}(\mathbfcal{W}_{M'\rightarrow A'Al}^L\otimes \mathbfcal{W}_{M\rightarrow B'Br}^R) \circ \mathbfcal{V}_{A'AB'B\rightarrow M'M}(\ketbra{s,x}{s,x}_{A'}\otimes \ketbra{0,y}{0,y}_{B'}). \nonumber 
    \end{align}
    Removing the trace, we have
    \begin{align}\label{eq:step}
        \ket{s,x}_{A'A} &\otimes \ket{0,y}_{B'B} \otimes \ket{\Psi(s,x,y)}_{lr} \nonumber \\
        &= (\mathbf{W}_{M'\rightarrow A'Al}^L\otimes \mathbf{W}_{M\rightarrow B'Br}^R) \circ \mathbf{V}_{A'AB'B\rightarrow M'M}\ket{s,x}_{A'}\otimes \ket{0,y}_{B'}) \nonumber \\
        &=\mathbf{M}_{A'AB'B\rightarrow A'AB'Blr} \ket{s,x,y},
    \end{align}
    where the last line defines $\mathbf{M}$.
    We can actually simplify the above by noting that the state on $lr$ must be independent of $s,x,y$. 
    To see why, suppose we had inserted a superposition of $\ket{s,x,y}$ basis states, producing
    \begin{align}
        \sum_{s,x,y} c_{s,x,y}  \ket{s,x,0,y} \rightarrow \sum_{s,x,y} c_{s,x,y} \mathbf{M}\ket{s,x,0,y} = \sum_{s,x,y} c_{s,x,y} \ket{s,x,0,y}_{AA'BB'}\ket{\Psi(s,x,y)}_{lr}.
    \end{align}
    But now tracing out $lr$ we obtain
    \begin{align}
        \sum_{s,x,y} c_{s,x,y}\bar{c}_{s',x',y'} \braket{\Psi(s,x,y)}{\Psi(s',x',y')} \ketbra{s,x,0,y}{s',x',0,y'}_{A'AB'B} .
    \end{align}
    By correctness, this must equal the output of the correct protocol, so that
    \begin{align}
        \sum_{s,x,y} c_{s,x,y}\bar{c}_{s',x',y'}& \ketbra{s,x,0,y}{s',x',0,y'}_{A'AB'B}  \nonumber \\
        &= \sum_{s,x,y} c_{s,x,y}\bar{c}_{s',x',y'} \braket{\Psi(s,x,y)}{\Psi(s',x',y')} \ketbra{s,x,0,y}{s',x',0,y'}_{A'AB'B}.\nonumber 
    \end{align}
    This gives that $\braket{\Psi(s,x,y)}{\Psi(s',x',y')}=1$, so that $\ket{\Psi(s,x,y}$ must not depend on $s,x,y$.
    Using this, we can simplify \Cref{eq:step} to take $\Psi_{lr}$ to be independent of $s,x,y$, so that
    \begin{align}
        (\ketbra{s,x}{s,x}_{A'}&\otimes \ketbra{0,y}{0,y}_{B'B}\otimes \Psi_{lr}) \nonumber \\
        &= (\mathbfcal{W}^L_{M'\rightarrow A'A} \otimes \mathbfcal{W}^R_{M\rightarrow B'B})\circ \mathbfcal{V}_{A'AB'B\rightarrow M'M}(\ketbra{s,x}{s,x}_{A'}\otimes \ketbra{0,y}{0,y}_{B'}). \nonumber 
    \end{align}
    Next act with $(\mathbfcal{W}_{M'\rightarrow A'Al}^L\otimes \mathbfcal{W}^R_{M\rightarrow B'Br})^\dagger$ to produce
    \begin{align}
        (\mathbfcal{W}^L_{M'\rightarrow A'A})^\dagger \otimes (\mathbfcal{W}^R_{M\rightarrow B'B})^\dagger (\ketbra{s,x}{s,x}_{A'}\otimes &\ketbra{0,y}{0,y}_{B'B}\otimes \Psi_{lr}) \nonumber \\
        &= \mathbfcal{V}_{A'AB'B\rightarrow M'M}(\ketbra{s,x}{s,x}_{A'}\otimes \ketbra{0,y}{0,y}_{B'}) \nonumber \\
        &= \rho_{M'M}(x,y,s).
    \end{align}
    Tracing out $M'$ to find the density matrix that describes $M$, we see that the system the referee obtains is
    \begin{align}
        \rho_M(x,y,s) &= \tr_{M'} (\mathbfcal{W}^L_{M'\rightarrow A'Al})^\dagger \otimes (\mathbfcal{W}^R_{M\rightarrow B'Br})^\dagger (\ketbra{s,x}{s,x}_{A'A}\otimes \ketbra{0,y}{0,y}_{B'B}\otimes \Psi_{lr}) \nonumber \\
        &= \tr_{A'Al}  (\mathbfcal{W}^R_{M\rightarrow B'Br})^\dagger (\ketbra{s,x}{s,x}_{A'Al}\otimes \ketbra{0,y}{0,y}_{B'Br}\otimes \Psi_{lr}) \nonumber \\
        &= (\mathbfcal{W}^R_{M\rightarrow B'Br})^\dagger (\ketbra{0,y}{0,y}_{B'Br}\otimes \Psi_r) \nonumber \\
        &=: \sigma_M(y).
    \end{align}
    Since this state does not depend on $s$, the CDQS protocol is perfectly secure. 

    To establish approximate security in the case where $Cf$-$\mathbf{X}$ is approximately correct, we again note that the state produced on the referee's side is the same one as if we had made an oracle invocation of the $Cf$-$\mathbf{X}$ protocol. Thus the state produced must be $\epsilon$-close to $\sigma(y)$, which is exactly $\epsilon$ security of the CDQS protocol. 
\end{proof}

\subsection{Interchange implies Distinguish}

In this section we give a different type of reduction, this one between two NLQC tasks that aren't characterized in terms of a Boolean function. 
These tasks are Interchange (i.e., coherently transform a state $\ket{\psi_0}$ to an orthogonal state $\ket{\psi_1}$ and vice versa), and Distinguish (i.e., distinguish between two orthogonal states $\ket{\phi_0},\ket{\phi_1}$). See \Cref{sec:stateandunitary} for formal definitions of these tasks. 
At a high level, we show that an efficient NLQC protocol for Interchange with states $\ket{\psi_0},\ket{\psi_1}$ implies an efficient NLQC protocol for Distinguish with the ``Fourier'' states $\ket{\phi_{\pm}} = \frac{1}{\sqrt{2}} (\ket{\psi_0} \pm \ket{\psi_1})$. 

Our result is inspired by the equivalence of these tasks in the standard model of quantum circuits: it is known~\cite{aaronson2020hardness} that an efficient algorithm for Interchange with $\ket{\psi_0},\ket{\psi_1}$ implies an efficient algorithm for Distinguish $\ket{\phi_\pm}$, \emph{and vice versa}. 
Note that in this setting we are considering global operations, with no division of the inputs between players. 
This equivalence, though simple, has had important applications in quantum cryptography~\cite{bostanci2023unitary,hhan2023hardness,morimae2025quantum} in recent years.

While we found a reduction in one direction (Interchange implies Distinguish) in the NLQC setting, we weren't able to find the other direction. 
It is an interesting open question of whether this other direction is possible. 

Before describing our reduction more precisely, we first define the notion of a \textbf{catalytic} NLQC protocol. 
Intuitively, it is one where the resource state is used as part of the protocol, and then returned to its original state at the end. 

\begin{definition}[Catalytic NLQC protocol]
Let $\mathbf{U}_{AB \rightarrow AB}$ denote a unitary $2\rightarrow 2$ task. We say that a NLQC $\mathbfcal{N}_{AB \to AB}$ for $\mathbf{U}$ is \textbf{catalytic} if the isometric extensions $V,W$ of the first and second round operations and the resource state $\Psi_{LR}$ satisfy the following: for all input states $\ket{\phi}_{AB}$,
\[
    \mathbf{W}\mathbf{V} \ket{\phi}_{AB} \otimes \ket{\Psi}_{LR} = (\mathbf{U} \ket{\phi}_{AB}) \otimes \ket{\Psi}_{LR}~.
\]
In other words, the resource state $\ket{\Psi}_{LR}$ serves as a \emph{catalyst} for the nonlocal computation. 
\end{definition}

We now argue that any NLQC for a unitary task can be made \emph{approximately} catalytic, with only a linear factor overhead in the size of the resource state.

\begin{lemma}
\label{lem:approx-catalysis}
    Let $\mathbf{U}$ denote a unitary $2\rightarrow 2$ task. Suppose that $\mathbfcal{N}_{AB \to AB}$ is a $\delta$-correct NLQC for $\mathbf{U}$ (not necessarily catalytic) with an $m$-qubit resource state. Then for all $\epsilon > 0$, there exists an $(\epsilon + \sqrt{\delta})$-approximately catalytic NLQC $\rule{0pt}{3ex}\hat{\mathbfcal{N}}_{AB \to AB}$ for $\mathbf{U}$ using a $O(m/\epsilon^2)$-qubit resource state $\Xi_{L'R'}$, in the sense that for all input states $\ket{\phi}_{AB}$,
    \[
        \| \hat{\mathbf{W}}\hat{\mathbf{V}} (\ket{\phi}_{AB} \otimes \ket{\Xi}_{L'R'}) - (\mathbf{U} \ket{\phi}_{AB}) \otimes \ket{\Xi}_{L'R'} \|_2 \leq \epsilon + \sqrt{\delta}
    \]
    where $\hat{\mathbf{V}},\hat{\mathbf{W}}$ are the isometric extensions of the first and second round operations, respectively, of the NLQC $\hat{\mathbfcal{N}}$.
\end{lemma}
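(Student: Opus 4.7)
The plan is to proceed in two main steps.

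First, I would establish a structural property of any $\delta$-correct NLQC $\mathbfcal{N}$ for a unitary $\mathbf{U}$: when viewed as an isometry $\mathbf{W}\mathbf{V}$ on input plus resource, it approximately produces $\mathbf{U}\ket{\phi}$ together with a ``garbage'' state on $LR$ that is essentially independent of the input $\ket{\phi}$. Concretely, $\delta$-correctness and Uhlmann's theorem give, for each $\ket{\phi}$, a pure state $\ket{\chi_\phi}_{LR}$ such that
\[
\| \mathbf{W}\mathbf{V} (\ket{\phi}_{AB} \otimes \ket{\Psi}_{LR}) - (\mathbf{U}\ket{\phi})_{AB} \otimes \ket{\chi_\phi}_{LR} \|_2 \leq O(\sqrt{\delta}).
\]
Applying linearity of $\mathbf{W}\mathbf{V}$ to uniform superpositions $(\ket{\phi_0}+\ket{\phi_1})/\sqrt{2}$ and using the triangle inequality shows $\ket{\chi_{\phi_0}} \approx \ket{\chi_{\phi_1}}$, so $\ket{\chi_\phi}$ can be replaced by a single fixed $\ket{\chi}$ at the cost of an additional $O(\sqrt{\delta})$ error. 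In the exact case this is rigid (the garbage must be exactly $\phi$-independent by linearity), and the above is the quantitative version.

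Second, I would construct the catalytic NLQC by enlarging the resource to $k = O(m/\epsilon^2)$ copies of $\ket{\Psi}$ equipped with an auxiliary pointer register in a carefully chosen superposition, and have $\hat{\mathbfcal{N}}$ coherently apply the original first- and second-round operations to the copy indexed by the pointer. Since only one of $k$ copies is acted upon in each branch of the superposition, the disturbance to the joint catalyst should scale as $O(1/\sqrt{k})$ in Euclidean distance, so setting $k=O(m/\epsilon^2)$ gives the stated $\epsilon$ contribution; combined with the $\sqrt{\delta}$ from the first step this yields the claimed $(\epsilon+\sqrt{\delta})$ bound.

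The main obstacle is the design and disturbance analysis of the enlarged catalyst. A product catalyst $\ket{\Psi}^{\otimes k}$ with a flat pointer superposition is insufficient, since a direct overlap computation between the pre- and post-protocol catalyst states collapses to $\langle\Psi|\chi\rangle$, giving no improvement in $k$. The fix is to choose a catalyst with permutation symmetry (or an appropriately phased pointer) and to uncompute the pointer using the symmetry, so that ``consuming'' one $\ket{\Psi}$ becomes indistinguishable from the original catalyst at leading order in $1/\sqrt{k}$. The key technical step is bounding the trace distance between pre- and post-catalyst states via a concentration-type estimate on this symmetric superposition; combined with \cref{remark:blackboxerror} to propagate the $\sqrt{\delta}$ error through the coherent application, this gives the desired $\epsilon+\sqrt{\delta}$ guarantee.
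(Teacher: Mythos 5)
Your first step (Uhlmann plus a superposition argument to force the garbage state to be input-independent) matches the paper's opening move, and is in fact more careful than the paper, which obtains a purifying state $\ket{\Theta}$ for a single arbitrary input and then implicitly treats it as input-independent when reversing it with fixed local unitaries. The gap is in your second step. You correctly diagnose that the flat catalyst $\tfrac{1}{\sqrt{k}}\sum_j \ket{jj}\otimes\ket{\Psi}^{\otimes k}$ yields overlap $\braket{\Psi}{\chi}$ with no improvement in $k$, but neither of the fixes you name repairs this. A phased pointer changes nothing: the post-protocol catalyst is $\tfrac{1}{\sqrt{k}}\sum_j \omega^j\ket{jj}\otimes\ket{\Psi}^{\otimes (j-1)}\otimes\ket{\chi}\otimes\ket{\Psi}^{\otimes(k-j)}$, and since the pointer branches are orthogonal its overlap with the phased original is still exactly $\braket{\Psi}{\chi}$. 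A permutation-symmetric catalyst with a fixed number $t$ of ``spent'' slots fares worse: the protocol maps the $t$-sector to the $(t+1)$-sector, which is orthogonal to it. What does work is a superposition over the consumption \emph{count} -- a staircase catalyst $\tfrac{1}{\sqrt{k}}\sum_{j=0}^{k-1}\ket{jj}_{L_0R_0}\otimes\ket{\chi}^{\otimes j}\otimes\ket{\Psi}^{\otimes(k-j)}$ with the pointer maximally correlated between Alice and Bob, so that each party can control her round operations on her local pointer half and locally increment $j\mapsto j+1$ afterwards. This shifts the staircase by one step, matching $k-1$ of the $k$ branches, so the overlap is $1-1/k$ and the Euclidean disturbance is $O(1/\sqrt{k})$; taking $k=O(1/\epsilon^2)$ slots of $O(m)$ qubits each recovers the stated $O(m/\epsilon^2)$ resource and the $\epsilon+O(\sqrt{\delta})$ bound. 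As written, though, the crucial construction is missing and the two candidate constructions you do name fail, so the proof is not complete.

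For comparison, the paper takes a different and shorter route to the same end: it invokes van Dam--Hayden entanglement embezzlement states. An $n=O(m/\epsilon^2)$-qubit universal embezzler $\ket{\Xi_n}$ lets Alice and Bob locally extract $\ket{\Psi}$ before running $\mathbf{V},\mathbf{W}$ and locally reabsorb the garbage $\ket{\Theta}$ afterwards, each step costing $\epsilon/2$ in Euclidean norm, with the $\sqrt{\delta}$ coming from the same Fuchs--van de Graaf/Uhlmann step you use. The embezzlement lemma replaces the bespoke catalyst-disturbance analysis entirely; your route, once the staircase is supplied, is more elementary (it needs no universal embezzler, only copies of $\ket{\Psi}$ and $\ket{\chi}$) and has the side benefit of making explicit the input-independence of the garbage state, which both proofs in fact rely on.
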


\begin{proof}
This proof relies on the idea of \emph{entanglement embezzlement states}~\cite{van2003universal}; these are a family of bipartite states $\{ \ket{\Xi_n}_{LR} \}$ where $\ket{\Xi_n}$ is on $n$-qubits, such that for all bipartite pure states $\ket{\psi}_{AB}$ on $m$ qubits, there exists local isometries $\mathbf{S}_{L \to LA}, \mathbf{T}_{R \to RB}$ such that
\begin{align}
    \| \mathbf{S}_{L\rightarrow LA} \otimes \mathbf{T}_{R\rightarrow RB} \ket{\Xi_n}_{LR} - \ket{\Xi_n}_{LR} \otimes \ket{\psi}_{AB} \|_2^2 \leq \frac{2m}{n}~.
\end{align}
In other words, using only local operations, the entanglement $\ket{\psi}$ has been \emph{embezzled} from the state $\ket{\Xi_n}$, and this is not very noticeable (provided that $n \gg m$). 

From $\delta$-correctness of $\mathbfcal{N}_{AB\rightarrow AB}$, we have that for an arbitrary input state $\ket{\psi}_{AB}$
\begin{align}
    \|\mathbfcal{N}_{AB\rightarrow AB}(\ketbra{\psi}{\psi}_{AB}) - \mathbf{U}_{AB}\ketbra{\psi}{\psi}\mathbf{U}^\dagger_{AB} \|_1 \leq \delta.
\end{align}
Let $\mathbf{V},\mathbf{W}$ be isometric extensions of the first- and second-round operations of $\mathbfcal{N}$, respectively.
Let $\ket{\Psi}_{LR}$ denote the resource state, and $\tilde{L}\tilde{R}$ the Hilbert space of the purifying system at the end of the protocol. 
Then the above gives that
\begin{align}
    \|\tr_{\tilde{L}\tilde{R}}(\mathbf{W}\mathbf{V}(\ketbra{\psi}{\psi}\otimes \ketbra{\Psi}{\Psi})\mathbf{V}^\dagger \mathbf{W}^\dagger) - \mathbf{U}\ketbra{\psi}{\psi}\mathbf{U}^\dagger \|_1 \leq \delta.
\end{align}
Converting this to a fidelity using the Fuchs van de Graff inequalities, applying Uhlmann's theorem to express the fidelity in terms of purified states, and converting back to a trace distance we obtain that there exists a state $\ket{\Theta}_{\tilde{L}\tilde{R}}$ such that
\begin{align} \label{eq:after-Uhlmann}
    \|\mathbf{W}\mathbf{V}(\ketbra{\psi}\otimes \ketbra{\Psi})\mathbf{V}^\dagger\mathbf{W}^\dagger - \mathbf{U}\ketbra{\psi}\mathbf{U}^\dagger\otimes \ketbra{\Theta} \|_1 \leq \sqrt{2\delta} \, .
\end{align}
Using that $\|\ketbra{u} - \ketbra{v}\|_1 = 2 \sqrt{1-|\braket{u}{v}|^2}\geq 2\sqrt{1-|\braket{u}{v}|}$ and the fact that we are free to choose the phase of $\ket{\Theta}$ such that the inner product of the vectors in \eqref{eq:after-Uhlmann} is positive, we can also infer that
\begin{align}
    \|\mathbf{W}\mathbf{V}(\ket{\psi}\otimes \ket{\Psi}) - \mathbf{U}\ket{\psi}\otimes \ket{\Theta} \|_2 \leq \sqrt{\delta} \,.
\end{align}
In other words, in the process of transforming $\ket{\phi}$ to $\mathbf{U} \ket{\phi}$, the resource state $\ket{\Psi}$ has been transformed to another resource state $\ket{\Theta}$. 
This is not a catalytic protocol when $\ket{\Psi}$ is far from $\ket{\Theta}$. 

We now adapt this to be a catalytic NLQC. 
Consider the following NLQC. 
Set $n = 8m/\epsilon^2$, and consider the $n$-qubit embezzlement state $\ket{\Xi_n}_{L'R'}$. 
By the above there exist unitary operations $\mathbf{S}_{L'L \to L'L},\mathbf{T}_{R'R \to R'R}$ such that $\mathbf{S} \otimes \mathbf{T}$, acting on $\ket{\Xi_n}\otimes \ket{0}^{\otimes m}$, is $\epsilon/2$-close to $\ket{\Xi_n}_{L'R'} \otimes \ket{\Psi}_{LR}$. First, apply these local operations. Then, apply the original isometries $\mathbf{V},\mathbf{W}$ with the resource state $\ket{\Psi}$ as in $\mathbfcal{N}$, and finally apply the inverse of a pair of local unitaries $\mathbf{S}' \otimes \mathbf{T}'$ that ``reverse'' the embezzlement, which satisfy
\begin{align}
    \| \mathbf{S}' \otimes \mathbf{T}' \ket{\Xi_n} \otimes \ket{0 \cdots 0} - \ket{\Xi_n} \otimes \ket{\Theta} \|_2 \leq \epsilon/2~.
\end{align}
Define the isometric extensions $\hat{\mathbf{V}} = \mathbf{V} (\mathbf{S} \otimes \mathbf{T})$ and $\hat{\mathbf{W}} = ((\mathbf{S}')^\dagger \otimes (\mathbf{T}')^\dagger) \mathbf{W}$. Since $\mathbf{V},\mathbf{W}$ factor into local operations (i.e., $\mathbf{V} = \mathbf{V}_{AL} \otimes \mathbf{V}_{BR}$ and $\mathbf{W} = \mathbf{W}_{AL} \otimes \mathbf{W}_{BR}$) these isometries $\hat{\mathbf{V}},\hat{\mathbf{W}}$ yield a NLQC. 
By the triangle inequality this NLQC for $\mathbf{U}$ is $(\epsilon+\sqrt{\delta})$-approximately catalytic.
\end{proof}

Finally, we now show that any efficient, catalytic, Interchange protocol (for which \Cref{lem:approx-catalysis} shows the catalytic property is without loss of generality) implies an efficient Distinguish protocol for a related pair of states.  

\begin{theorem}
    Suppose there is an $\epsilon$-correct catalytic NLQC for Interchange between two orthogonal states $\ket{\psi_0}_{AB},\ket{\psi_1}_{AB}$ using an $m$-qubit resource state. 
    Then there exists a $\epsilon$ correct NLQC for Distinguish between the ``Fourier'' states $\ket{\phi_\pm} = \frac{1}{\sqrt{2}} (\ket{\psi_0} \pm \ket{\psi_1})$ that uses the same $m$-qubit resource state, with two extra EPR pairs. 
\end{theorem}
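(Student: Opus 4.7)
The plan is to realize the Hadamard test (phase kickback) non-locally, exploiting that $\ket{\phi_\pm}$ are $\pm 1$ eigenstates of the Interchange unitary $U$. In the standard setting, this test is: prepare an ancilla in $\ket{+}$, apply controlled-$U$, and measure the ancilla in the $X$ basis; the outcome reveals the eigenvalue. To port this to the NLQC setting, I would use the first extra EPR pair as a coherently shared control register and the second to facilitate the Bell-state readout at the end.

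Concretely, I would first prepare the shared control $\ket{\Phi^+}_{C_AC_B}$ and execute a controlled version of the Interchange NLQC in which each of Alice's local isometries $V_A, W_A$ is quantum-controlled on $C_A$ and each of Bob's $V_B, W_B$ is quantum-controlled on $C_B$. Because $\ket{\Phi^+}$ is perfectly correlated in the computational basis, only the $\ket{00}$ and $\ket{11}$ branches of $C_AC_B$ contribute: the former does nothing and the latter applies the entire NLQC, with the single communication round of the Interchange NLQC reused for the controlled version. The catalytic hypothesis gives $\mathbf{W}\mathbf{V} = U \otimes I_{LR}$, so this controlled NLQC acts as $cU \otimes I_{LR}$; combining with $U\ket{\phi_\pm} = \pm\ket{\phi_\pm}$, the resulting state is
\[
\ket{\Phi^\pm}_{C_AC_B} \otimes \ket{\phi_\pm}_{AB} \otimes \ket{\Psi}_{LR},
\]
with the sign encoding the distinguishing bit. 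Next, I would use the second EPR pair $\ket{\Phi^+}_{D_AD_B}$ together with a handful of classical bits piggybacked on the same communication round (for instance, via a non-local CNOT from $C_A$ to $C_B$ implemented through the Eisert--Jacobs--Papadopoulos--Plenio protocol) to convert $\ket{\Phi^\pm}_{C_AC_B}$ into a locally readable form such as $\ket{\pm}_{C_A}\otimes\ket{0}_{C_B}$, from which the distinguishing bit is obtained by an $X$-basis measurement on $C_A$ and then shared as needed through the remaining capacity of the single NLQC round.

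The main obstacle will be to verify that all local operations and classical messages can be scheduled within a single round of NLQC communication: the Interchange messages of the controlled NLQC, the classical bits needed for the non-local CNOT, and the bit that reports the final measurement outcome must all be dispatched in that one round, and the pre- and post-communication operations of the non-local CNOT must be interleaved with the controlled NLQC's round-1 and round-2 local gates without disturbing the coherent control structure on $C_AC_B$. The catalytic hypothesis is essential to this scheduling argument: it guarantees that the controlled NLQC acts as $cU \otimes I_{LR}$ with no residual entanglement between the control register and the resource state $\ket{\Psi}_{LR}$, which is what makes the phase kickback unambiguous and allows the post-controlled-NLQC state to cleanly factor as a Bell pair on $C_AC_B$ tensored with $\ket{\phi_\pm}_{AB}\otimes\ket{\Psi}_{LR}$, enabling the final distinguishing step to be analyzed independently of the rest of the system.
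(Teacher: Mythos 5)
There is a genuine gap at the central step. You assert that, with a shared Bell pair $\ket{\Phi^+}_{C_AC_B}$ as control, the whole Interchange NLQC (first-round isometries, the communication round, and the second-round isometries) can be run ``coherently controlled'' and acts as $cU\otimes \mathcal{I}_{LR}$. But the physical systems each party transmits in the single communication round cannot depend on the branch of the control register, while the \emph{contents} of those systems do: in the $\ket{11}$ branch Alice sends $M_b=\mathbf{V}_A(AL)$, whereas in the $\ket{00}$ branch she must send the same physical register containing some un-acted-upon part of her input and resource. After the second round, the $\ket{00}$ branch leaves the residual systems in a different configuration (parts of $A$ sit with Bob, parts of $B$ with Alice) than the $\ket{11}$ branch, so the two branches differ by a nontrivial register permutation $\Pi$ in addition to the desired $\pm1$. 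Tracing out everything but $C_AC_B$ then damps the phase information by a factor $\abs{\braket{\Xi}{\Pi|\Xi}}$, which is generically far from $1$; the state on $C_AC_B$ is not $\ket{\Phi^\pm}$ and the kickback is lost. The catalytic hypothesis removes residual entanglement with $\ket{\Psi}_{LR}$, but it does not remove this branch-dependent routing of the input systems, which is the actual obstruction to controlling a distributed protocol as a black box.

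The paper's proof sidesteps exactly this issue by never controlling the communication round. Using catalyticity one has $\ket{\psi_1}\ket{\Phi}=\mathbf{W}\mathbf{V}\ket{\psi_0}\ket{\Phi}$, hence $\mathbf{W}^\dagger\ket{\psi_0}\ket{\Phi}=\mathbf{V}\ket{\psi_1}\ket{\Phi}$ and $\mathbf{W}^\dagger\ket{\psi_1}\ket{\Phi}=\mathbf{V}\ket{\psi_0}\ket{\Phi}$. Both $\mathbf{V}$ and $\mathbf{W}^\dagger$ are purely local (each factors across Alice and Bob, with no message exchange), and both map the initial configuration into the \emph{same} mid-protocol register configuration. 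So each party applies, controlled on her half of the Bell pair, either her part of $\mathbf{V}$ or her part of $\mathbf{W}^\dagger$ --- an entirely local controlled operation --- and the two branches interfere cleanly, producing $\frac{1}{2}\bigl(\ket{\phi_0}+(-1)^b\ket{\phi_1}\bigr)\bigl(\ket{00}+(-1)^b\ket{11}\bigr)$ with $\ket{\phi_i}=\mathbf{V}\ket{\psi_i}\ket{\Phi}$. The residual $\ket{\phi}$-part then factors out identically in both branches and can be discarded, and the single communication round plus one more EPR pair is spent only on distinguishing $\ket{00}\pm\ket{11}$. If you want to salvage your write-up, replace the controlled-$U$ Hadamard test with this controlled-$(\mathbf{V}$ vs.\ $\mathbf{W}^\dagger)$ construction; as written, the claim that the controlled NLQC equals $cU\otimes\mathcal{I}_{LR}$ within one round is the step that fails.
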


\begin{proof}
    Label an isometric extension of the first round operations in the catalytic Interchange protocol $\mathbf{V}$, and an isometric extension of the second round operations $\mathbf{W}$. Let the catalytic resource state used by the Interchange protocol be $\ket{\Phi}_{LR}$. 
    Define states
    \begin{align}
        \ket{\phi_0}&=\mathbf{V}\ket{\psi_0} \ket{\Phi}\nonumber \\
        \ket{\phi_1}&=\mathbf{V}\ket{\psi_1} \ket{\Phi}
    \end{align}
    and notice that since by correctness $\Vert \ket{\psi_1}\ket{\Phi}-\mathbf{W}\mathbf{V}\ket{\psi_0}\ket{\Phi}\Vert_1\leq \epsilon$ and $\Vert \ket {\psi_0} \ket{\Phi}-\mathbf{W}\mathbf{V}\ket{\psi_1} \ket{\Phi}\Vert_1\leq \epsilon$, we have also
    \begin{align}
        \Vert\ket{\psi_1} \ket{\Phi} - \mathbf{W}\ket{\phi_0}\Vert_1 &\leq \epsilon, \nonumber \\
        \Vert \ket{\psi_0} \ket{\Phi} - \mathbf{W}\ket{\phi_1}\Vert_1 &\leq \epsilon.
    \end{align}
    Now consider the following protocol for Distinguish. 
    Distribute one extra EPR pair, so that Alice and Bob hold
    \begin{align}
        \frac{1}{2}\left(\ket{\psi_0}_{AB} + (-1)^b \ket{\psi_1}_{AB} \right)\left( \ket{00}_{A_0B_0}+\ket{11}_{A_0B_0}\right) \ket{\Phi}_{LR}~.
    \end{align}
    Then, we have Alice and Bob do the following: controlled on $A_0$, Alice implements her operations from $\mathbf{V}$ if $A_0$ is in state $\ket{0}$, and her operations from $\mathbf{W}^\dagger$ if it is in state $\ket{1}$. 
    Similarly, Bob does his operations from $\mathbf{V}$ or $\mathbf{W}^\dagger$ controlled on $B_0$. 
    This brings the state to one $\epsilon$-close in trace distance to
    \begin{align}
        \frac{1}{2}\left(\ket{\phi_0}_{AB} + (-1)^b \ket{\phi_1}_{AB} \right)&\ket{00}_{A_0B_0}+\frac{1}{2}\left(\ket{\phi_1}_{AB} + (-1)^b \ket{\phi_0}_{AB} \right)\ket{11}_{A_0B_0} \nonumber \\
        &= \frac{1}{2}\left(\ket{\phi_0}_{AB} + (-1)^b \ket{\phi_1}_{AB} \right)\left( \ket{00}_{A_0B_0}+(-1)^b\ket{11}_{A_0B_0}\right) \,.\nonumber 
    \end{align}
    Next, Alice and Bob throw away the $AB$ systems. 
    This leaves the $A_0B_0$ systems in a state $\epsilon$-close to $\frac{1}{\sqrt{2}}\left( \ket{00}_{A_0B_0}+(-1)^b\ket{11}_{A_0B_0}\right)$ and run Distinguish on the inputs $\ket{00}, \ket{11}$. 
    This can be implemented using one EPR pair, and will be $\epsilon$ correct.  
    Along with the EPR pair used above, this implements Distinguish on the inputs $\ket{\psi_0}, \ket{\psi_1}$ with two extra EPR pairs compared to the catalytic Interchange protocol. 
\end{proof}

\section{Implications and simplifications}\label{sec:simplifications}

Because $f$-route and $f$-measure are well studied in the literature, our proof that they are equivalent leads to a large number of simplifications. 
In particular, the papers \cite{asadi2024rank, asadi2025linear, bluhm2022single, escola2025quantum} each contain repeated proofs, once for $f$-measure and once for $f$-route, which can now become one proof for either measure or routing, and then \Cref{thm:measureandroute} ensures the other NLQC inherits the same property. 
Notably, the proofs in these papers often used different techniques for each protocol; our reduction lets either one suffice for both.

As an additional simplification, we can also notice that both of the results in \cite{escola2024quantum} now follow immediately from earlier literature combined with our reductions. 
In particular, in \cite{unruh2014quantum} security was shown for the $f$-measure protocol in the \textit{random oracle model}: the success probability for an $f$-measure protocol is small unless many oracle calls to $f$ are made by Alice and Bob.
Separately, security for $f$-route in the random oracle model was shown in \cite{escola2024quantum} using different techniques. 
Now, only one of these two proofs is needed, along with the observation that from our implication from $f$-route to $f$-measure, the oracle calls needed for $f$-route upper bound those for $f$-measure. 
In \cite{escola2024quantum} the authors also show strong parallel repetition of $f$-route in the no pre-shared entanglement model. 
Again, this property was known for $f$-measure \cite{tomamichel2013monogamy}, and is also carried over by our reductions. 

Beyond simplifying the literature, we also obtain a number of new properties of $f$-route and $f$-measure via our reductions.
Regarding lower bounds, \cite{asadi2024conditional} proved lower bounds on CDQS (equivalently, $f$-route) on communication and the dimension of the shared resource state in terms of various measures of communication complexity. 
For instance, among other bounds, they proved a lower bound on communication in CDQS for a function $f$ in terms of the communication cost of two-round quantum interactive proof for the same function. 
An alternative lower bound in terms of a public coin quantum interactive proof, again with two rounds but now with two provers, was proven in \cite{girish2025comparing}. 
Both of these are now inherited by $f$-measure.

Regarding upper bounds, more was previously known about $f$-route than about $f$-measure.
In particular \cite{cree2023code} gave an upper bound from span program size for $f$-route; $f$-measure now inherits this bound.
Via the connection to classical CDS, \cite{allerstorfer2024relating} gave upper bounds on $f$-route for all functions of $2^{O(\sqrt{n\log n})}$, which is also now inherited by $f$-measure.\footnote{Note that as of this writing, all classically controlled function NLQCs use sub-exponential entanglement even in the worst case. It would be interesting to understand if this is always the case.}
Finally \cite{allerstorfer2024relating} gave an efficient $f$-route protocol for quadratic residuosity, which we now obtain for $f$-measure.  

On a more negative side, proving linear lower bounds on CDQS or $f$-route was known to face a `CDS barrier': placing such a lower bound on entanglement in CDQS also puts the same lower bound on randomness in classical CDS, which is a long standing open problem in the classical literature. 
Now, we see that placing such lower bounds on $f$-measure also faces the same barrier. 

\section{Discussion}

Taking a comparative approach opens many new directions towards understanding NLQC and the subjects it is related to. 
Perhaps most clearly, future explorations of further NLQC classes and their relationships may yield more insight into entanglement cost in NLQC. 
For instance, inspired by our observation here that many superficially distinct classically controlled NLQC classes are equivalent, it is natural to ask if \emph{all} NLQC classes involving controlling an $O(1)$ size, non-trivial, quantum operation off of classical inputs could be equivalent.

A direction we have not explored here is to prove that some instances of NLQC \emph{cannot} be reduced to one another. 
For instance, it seems plausible that coherent protocols cannot be reduced to classically controlled ones. 
In fact, $Cf$-PHASE and hence $Cf$-SWAP are subject to the conditional exponential lower bounds from \cite{junge2022geometry}, while CDQS has a sub-exponential upper bound \cite{allerstorfer2024relating}. 
Assuming the conjectures needed for the exponential lower bound hold then, there can be no efficient implication from CDQS to $Cf$-SWAP. 
More generally, one could look for general properties that imply one NLQC cannot be reduced to another. 
Doing so would identify those properties as key to the ``hardness'' of a NLQC. 

One of the key open questions in NLQC is to prove a linear lower bound on a classically controlled function class. 
For instance, to prove a linear bound against $f$-route while allowing correctness errors. 
This would be highly desirable from the perspective of QPV. 
At the same time, such a bound would also imply a highly sought after lower bound for CDS, where a key goal is to show a linear lower bound on communication in the setting with correctness and privacy errors allowed.  
Our approach suggests a starting point to proving such lower bounds: pick a harder but related NLQC and try to lower bound it first. 
For instance, since $Cf$-PAULI implies $f$-route, placing a linear, robust, lower bound on $Cf$-PAULI is an easier but related problem to proving a robust lower bound on $f$-route and CDS. 

We can also hope that new reductions will lead to new connections among the many areas tied to NLQC. 
For instance, information-theoretic cryptography and in particular CDS is related to $f$-route, while $f$-measure seems to be related to uncloneable cryptography. 
In particular parallel repetition for $f$-measure is proven using the same key results on monogamy of entanglement games as appear in uncloneable cryptography \cite{tomamichel2013monogamy}. 
It would be interesting to further explore this connection, or other ways in which bridges among NLQCs can relate apparently distinct areas in quantum cryptography and information theory. 

Finally, as a speculative comment, we can ask if the connections between NLQC and complexity theory highlighted here can give insight into complexity theory, for instance by providing a route to new lower bounds on complexity measures. 
It is interesting to note here a comparison to communication complexity: in communication complexity it is possible to obtain good lower bounds on the needed non-local resource (communication), but the applications of this to complexity theory are restricted by the fact that communication never needs to be larger than linear. 
In NLQC, we again have some non-local resource we can hope to bound (correlation or entanglement), but now we can hope for super-linear lower bounds. 
Since entanglement in many NLQC classes is upper bounded by complexity measures, entanglement lower bounds in NLQC provide complexity lower bounds. 
This was already observed in earlier works, for instance in the context of $f$-routing \cite{buhrman2013garden, cree2023code}, but we hope that finding new complexity based upper bounds on further NLQC classes and exploring the relationships among classes can lead to interesting complexity lower bounds.  

\subsection*{Acknowledgements}

AB was supported by the ANR project PraQPV, grant number ANR-24-CE47-3023. AM and MS acknowledge the support of the Natural Sciences and Engineering Research Council of Canada (NSERC); this work was supported by an NSERC Discovery grant (RGPIN-2025-03966) and NSERC-UKRI Alliance grant (ALLRP 597823-24). 
Research at the Perimeter Institute is supported by the Government of Canada through the Department of Innovation, Science and Industry Canada and by the Province of Ontario through the Ministry of Colleges and Universities. PVL is supported by France 2030 under the French National Research Agency award number ANR-22-PETQ-0007. HY is supported by AFOSR award FA9550-23-1-0363, NSF CAREER award CCF-2144219, and the Sloan Foundation.
SH acknowledges that:
This work was supported by the MSCA cofund QuanG [grant number : 101081458]
\includegraphics[width=0.15\linewidth]{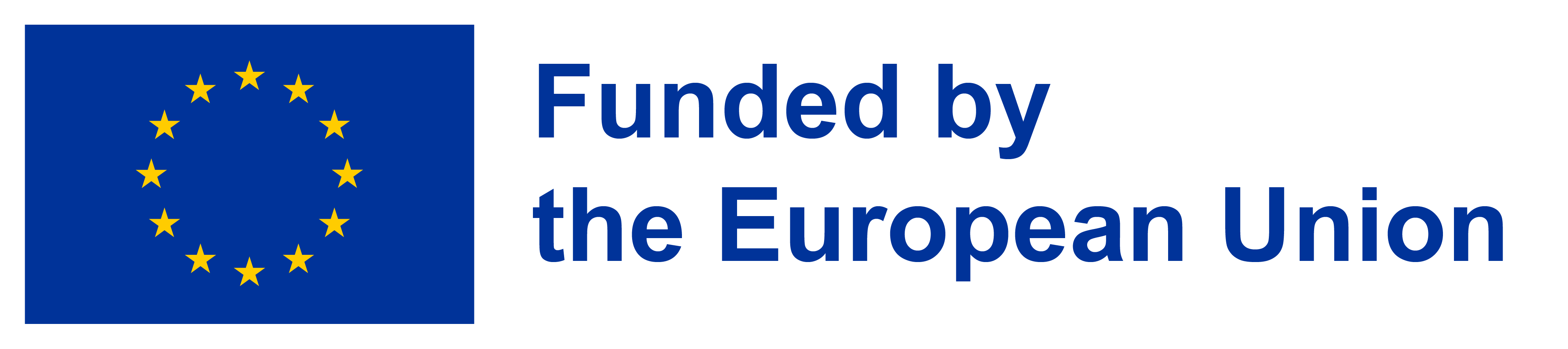}
\appendix

\section{Properties of CDQS}\label{sec:CDQSproperties}

\begin{theorem}\label{thm:CDQStoCDQS'}
    A CDQS protocol using resource system $\Psi$ implies a CDQS' protocol for the same function using the same resource system.
\end{theorem}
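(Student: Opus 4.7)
The plan is to build the CDQS$'$ protocol by simply embedding the classical secret into a computational-basis quantum state and running the given CDQS protocol. Concretely, on classical secret $s \in \{0,1\}^{|s|}$, Alice prepares the state $\ket{s}_Q$ in the secret register $Q$ of the CDQS protocol, and Alice and Bob then run their CDQS first-round operations using the shared resource $\Psi$, forwarding the message systems $M_a, M_b$ to the referee. The referee applies the CDQS decoder $\mathbfcal{D}^{x,y}_{M\to Q}$ and then measures $Q$ in the computational basis, outputting the result as its guess for $s$. No extra resources beyond $\Psi$ are used.

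First I would verify correctness. Suppose the given CDQS is $\epsilon$-correct, so $\|\mathbfcal{D}^{x,y}_{M\to Q}\circ \mathbfcal{N}^{x,y}_{Q\to M} - \mathcal{I}_{Q\to Q}\|_\diamond \leq \epsilon$ whenever $f(x,y)=1$. Plugging in the input $\ketbra{s}{s}_Q$, the definition of diamond norm (which upper bounds trace distance on product inputs) gives $\|\mathbfcal{D}\circ\mathbfcal{N}(\ketbra{s}{s}) - \ketbra{s}{s}\|_1 \leq \epsilon$. Measuring in the computational basis, the total variation distance between the induced distributions is at most $\epsilon/2$, so the referee outputs the correct $s$ with probability at least $1-\epsilon/2$. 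This is stronger than the $\epsilon$-correctness required by \cref{def:CDQSprime}.

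Next I would verify security. By $\delta$-security of CDQS, when $f(x,y)=0$ there exists a simulator channel $\mathbfcal{S}^{x,y}_{\varnothing\to M}$ with $\|\mathbfcal{S}^{x,y}\circ \tr_Q - \mathbfcal{N}^{x,y}_{Q\to M}\|_\diamond \leq \delta$. Applying both sides to $\ketbra{s}{s}_Q$ yields $\|\sigma_M(x,y) - \rho_M(x,y,s)\|_1 \leq \delta$, where $\sigma_M(x,y) := \mathbfcal{S}^{x,y}(1)$ is manifestly independent of $s$, and $\rho_M(x,y,s)$ is exactly the referee's view in the constructed CDQS$'$. Taking $\{\sigma_M(x,y)\}_{x,y}$ as the simulator distribution in \cref{def:CDQSprime} gives $\delta$-security directly. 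In particular, both error parameters transfer with no loss, and since $\delta \to 0$ as the CDQS errors vanish, the reduction meets our requirement on accuracy.

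There is essentially no obstacle here: the argument is a direct unpacking of the diamond-norm guarantees on the product input $\ketbra{s}{s}_Q$. The only care required is to (i) observe that diamond-norm closeness upper bounds trace distance on any fixed input, and (ii) note that a computational-basis measurement converts trace distance between density matrices into total variation distance between classical distributions. These are both standard, and the same resource state $\Psi$ is reused without modification, so the reduction has the claimed form.
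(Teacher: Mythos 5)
Your proposal is correct and matches the paper's proof essentially verbatim: same construction (encode $s$ as $\ket{s}_Q$, run the CDQS, decode and measure in the computational basis), and the same security argument via evaluating the simulator guarantee on the input $\ketbra{s}{s}_Q$. The only cosmetic difference is in correctness, where the paper passes through the fidelity and Fuchs--van de Graaf to get $\Pr[s'=s]\geq(1-\epsilon/2)^2$ while you bound the total variation distance of the measurement outcomes directly by $\epsilon/2$; both are standard and yours is marginally tighter.
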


\begin{proof}
    \noindent Take the classical secret $s\in S$ for the CDQS' protocol and record it as a computational basis state $\ket{s}_Q$, which serves as the secret for the CDQS protocol. 
    The referee runs the decoder for the CDQS protocol and then measures the outcome in the computational basis, and takes the output as his guess for the value of $S$.

    \vspace{0.2cm}
    \noindent \textbf{Correctness:} We have that $f(x,y)=1$. By correctness of the CDQS protocol we have
    \begin{align}
        \left\Vert \mathbfcal{D}^{x,y}_{M\rightarrow Q}\circ \mathbfcal{N}^{x,y}_{Q\rightarrow M} - \mathbfcal{I}_Q \right\Vert_\diamond \leq \epsilon.
    \end{align}
    Evaluating these channels on the input state $\ket{s}_Q$, we obtain
    \begin{align}\label{eq:TDbound}
        \left\Vert \mathbfcal{D}^{x,y}_{M\rightarrow Q}\circ \mathbfcal{N}^{x,y}_{Q\rightarrow M}(\ketbra{s}{s}_Q) - \ketbra{s}{s}_Q \right\Vert_1 \leq \epsilon\,.
    \end{align}
    To obtain a classical outcome for the CDQS protocol, the referee then measures the output of the decoder in the computational basis. 
    Label the outcome of this measurement as $s'$. 
    Their probability of a correct outcome is then
    \begin{align}
        \text{Pr}[s'=s] = \bra{s}\mathbfcal{D}^{x,y}_{M\rightarrow Q}\circ \mathbfcal{N}^{x,y}_{Q\rightarrow M}(\ketbra{s}{s}_Q)\ket{s} = F(\ketbra{s}{s}, \mathbfcal{D}^{x,y}_{M\rightarrow Q}\circ \mathbfcal{N}^{x,y}_{Q\rightarrow M}(\ketbra{s}{s}_Q)).
    \end{align}
    Using the Fuchs Van de Graaf inequalities \eqref{eq:FVDG} and the bound \eqref{eq:TDbound} on the trace distance, we obtain
    \begin{align}
        \text{Pr}[s'=s] &= F(\ketbra{s}{s}, \mathbfcal{D}^{x,y}_{M\rightarrow Q}\circ \mathbfcal{N}^{x,y}_{Q\rightarrow M}(\ketbra{s}{s}_Q)) \nonumber \\
        &\geq \left(1 - \epsilon/2\right)^2,
    \end{align}
    so that we obtain a perfectly correct CDQS' protocol as the CDQS protocol becomes perfectly correct, as needed. 

    \vspace{0.2cm}
    \noindent \textbf{Security:} We have that $f(x,y)=0$. From the definition of security in the CDQS protocol, we have that there exists a simulator channel $\mathbfcal{S}_{\varnothing \rightarrow M}^{x,y}$ such that
    \begin{align}
        \|\mathbfcal{S}_{\varnothing \rightarrow M}^{x,y} \circ \tr_Q - \mathbfcal{N}_{Q\rightarrow M}^{x,y}\|_\diamond \leq \delta.
    \end{align}
    Consider taking as input the state $\ket{s}_Q$, as we do in the CDQS' protocol. 
    This gives
    \begin{align}
        \left\Vert \sigma_M(x,y) - \rho_M(s,x,y)\right\Vert_1 \leq \delta,
    \end{align}
    where $\sigma_M(x,y)=\mathbfcal{S}_{\varnothing \rightarrow M}^{x,y} \circ \tr_Q(\ketbra{s}{s}_Q)$. 
    Taking these $\sigma_M(x,y)$ as our simulator distribution, this is exactly $\delta$ security of the CDQS' protocol. 
\end{proof}

\begin{theorem}\label{thm:CDQS'toCDQS}
    A CDQS' protocol using resource system $\Psi$ implies a CDQS protocol for the same function using resource system $\Psi^{\otimes 2}$.
\end{theorem}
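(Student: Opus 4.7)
The plan is to use quantum one-time pad (QOTP) encryption as the glue between a classical-secret CDQS' and a quantum-secret CDQS. The idea is: Alice encrypts the qubit secret $\rho_Q$ by applying a random Pauli $\mathbf{X}^a \mathbf{Z}^b$, sends the encrypted qubit directly to the referee in her message, and then uses two copies of the CDQS' protocol to hide the two classical key bits $a$ and $b$. When $f(x,y)=1$, the referee decodes both CDQS' invocations to recover $(a,b)$ and undoes the Pauli mask; when $f(x,y)=0$, security of the CDQS' ensures the key is hidden, so the uniform average over $(a,b)$ leaves the encrypted qubit maximally mixed and independent of $\rho_Q$.

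Concretely, the CDQS protocol I would build does the following. On input $\rho_Q$, $x$, Alice samples uniformly random $a, b \in \{0,1\}$ and forms the encrypted state $\mathbf{X}^a \mathbf{Z}^b \rho_Q \mathbf{Z}^b \mathbf{X}^a$, which she includes in her message to the referee. In parallel she and Bob run two independent copies of the CDQS' protocol (consuming $\Psi^{\otimes 2}$): the first hides the single classical bit $a$, the second hides $b$. The combined referee message $M$ consists of the encrypted qubit together with the two CDQS' transcripts.

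For correctness ($f(x,y)=1$), the referee runs both CDQS' decoders to obtain $(a',b')$, which match $(a,b)$ with probability at least $1-\epsilon$ each; by a union bound, both are correct with probability at least $1-2\epsilon$, after which applying $\mathbf{Z}^b\mathbf{X}^a$ to the encrypted qubit recovers $\rho_Q$ exactly. Converting this success probability into a diamond-norm bound gives correctness error $O(\epsilon)$. For security ($f(x,y)=0$), I replace each CDQS' transcript by its classical-secret simulator $\sigma^i_{M_i}(x,y)$, paying trace distance $\delta$ per invocation. After replacement the referee's joint state factorizes as
\begin{align*}
\tfrac{1}{4}\sum_{a,b} \bigl(\mathbf{X}^a\mathbf{Z}^b \rho_Q \mathbf{Z}^b\mathbf{X}^a\bigr) \otimes \sigma^1_{M_1}(x,y) \otimes \sigma^2_{M_2}(x,y) = \tfrac{\mathcal{I}}{2} \otimes \sigma^1_{M_1}(x,y) \otimes \sigma^2_{M_2}(x,y),
\end{align*}
using the twirling identity of the quantum one-time pad. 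This right-hand side is independent of $\rho_Q$ and serves as the CDQS simulator; by the triangle inequality applied to the decomposition $A\otimes B - C\otimes D = A\otimes(B-D) + (A-C)\otimes D$, the total security error is $O(\delta)$.

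The main technical care, rather than a genuine obstacle, lies in correctly bounding the security error under the two independent simulator replacements and in checking that the diamond-norm formulation of correctness accommodates the classical error event in CDQS'; these are routine uses of the triangle inequality, convexity of trace distance, and the Fuchs--van de Graaf inequalities \eqref{eq:FVDG}. The construction as stated handles qubit secrets and one-bit CDQS'; larger-dimensional $\mathcal{H}_Q$ is handled identically by taking the CDQS' to hide $|s|=\log d_Q$ bits (still with $\Psi^{\otimes 2}$) or, alternatively, by appealing to \cref{thm:CDQSamplification} after reducing to the qubit case.
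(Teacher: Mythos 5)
Your proposal is correct and follows essentially the same route as the paper's proof: encrypt the quantum secret with a one-time pad, use two instances of CDQS' to hide the classical key, and invoke the Pauli twirling identity so the simulator is the maximally mixed state tensored with the CDQS' simulators. The paper carries out the diamond-norm bookkeeping more explicitly (arriving at $4\epsilon$-correctness and $2\delta$-security), but the construction and the error analysis you sketch are the same.
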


\begin{proof}
    \noindent Alice takes the quantum system $Q$ input to the CDQS protocol and applies the one-time pad, using $2 \log d_Q$ bits of classical key, which we call $s$. 
    We use two instances of the CDQS' protocol, each of which hide $\log d_Q$ bits of secret, to hide the $2\log d_Q$ bits of key for the one-time pad. 
    A lemma in \cite{allerstorfer2024relating} shows the correctness and security errors of the CDQS' protocol are at most twice that of the original CDQS' protocol. 
    The encoded system $Q$ is sent to the referee, along with the message systems for the CDQS' protocol. 
    The channel applied by Alice and Bob's combined actions is then
    \begin{align}\label{eq:CDQSchannel}
        \mathbfcal{N}^{x,y}_{Q\rightarrow QM}(\cdot) = \frac{1}{2^{|s|}} \sum_{m,s} P_Q^s(\cdot)P_Q^s \otimes \rho_M(s,x,y),
    \end{align}
    where $M$ is the message system for (both instances of) the CDQS' protocol. 

    \vspace{0.2cm}
    \noindent \textbf{Correctness:} We have that $f(x,y)=1$. Correctness of the CDQS' states that there exists a decoder $\mathbfcal{D}^{'x,y}_{M\rightarrow S}$ such that
    \begin{align}\label{eq:foralls}
        \forall s\in S,\,\,\, \text{Pr} \left[\mathbfcal{D}^{'x,y}_{M\rightarrow S}(\rho_M(x,y,s)) =s\right] \geq 1-2\epsilon.
    \end{align}
    The $2\epsilon$ rather than $\epsilon$ is because we need to use two instances of the CDQS' protocol, as mentioned.
    We can also frame this condition in terms of the conditional probabilities, 
    \begin{align}
        p_{s'|s} = \text{Pr} \left[s'=\mathbfcal{D}^{'x,y}_{M\rightarrow S}(\rho_M(x,y,s))\right].
    \end{align}
    In particular \Cref{eq:foralls} is the statement that $p_{s'=s|s}\geq 1-2\epsilon$ for all $s$.
    This also means that $\sum_{s'\neq s}p_{s'|s}\leq 2\epsilon$, so that 
    \begin{align}\label{eq:4epsilon}
        \frac{1}{2^{|s|}}\sum_{s',s} |\delta_{s's}-p_{s'|s}| = \frac{1}{2^{|s|}}\sum_s\left(\sum_{s'=s}|1-p_{s|s}| + \sum_{s'\neq s}p_{s'|s}\right) \leq \frac{1}{2^{|s|}} \sum_s 4\epsilon = 4\epsilon.
    \end{align}
    To recover the secret, the referee applies $\mathbfcal{D}^{'x,y}_{M\rightarrow S}$ to the message system he receives, producing a variable $s'$, and then applies $P^{s'}_Q$, then finally traces out $M$. 
    Let the combined action of these steps be denoted by the channel $\mathbfcal{D}^{x,y}_{M\rightarrow S}$.
    This produces
    \begin{align}
        \mathbfcal{D}^{x,y}_{QM\rightarrow Q}\circ \mathbfcal{N}^{x,y}_{Q\rightarrow QM}(\cdot) = \frac{1}{2^{|s|}} \sum_{s',s}  p_{s'|s} P_Q^{s-s'}(\cdot)P_Q^{s-s'} .
    \end{align}
    We need to show this is close to the identity channel, which we do in the following calculation 
    \begin{align}
        \left\Vert \frac{1}{2^{|s|}}\sum_{s',s}  p_{s'|s} P_Q^{s-s'}(\cdot)P_Q^{s-s'} - \mathbfcal{I}_Q\right\Vert_\diamond &= \left\Vert\frac{1}{2^{|s|}}\sum_{s',s}  p_{s'|s} P_Q^{s-s'}(\cdot)P_Q^{s-s'} - \frac{1}{2^{|s|}}\sum_{s',s}  \delta_{s's} P_Q^{s-s'}(\cdot)P_Q^{s-s'}\right\Vert_\diamond \nonumber \\
        &= \sup_d \max_{\Psi_{R_dQ}} \frac{1}{2^{|s|}}\left\Vert\sum_{s',s}  \left(p_{s'|s} P_Q^{s-s'}\Psi_{R_dQ}P_Q^{s-s'} -   \delta_{s's} P_Q^{s-s'}\Psi_{R_dQ}P_Q^{s-s'}\right)\right\Vert_1 \nonumber \\
        &\leq  \sup_d \max_{\Psi_{R_dQ}} \frac{1}{2^{|s|}}\sum_{s',s}\left\Vert  \left(p_{s'|s}  -   \delta_{s's} \right)P_Q^{s-s'}\Psi_{R_dQ}P_Q^{s-s'}\right\Vert_1 \nonumber \\
        &\leq  \left(\frac{1}{2^{|s|}}\sum_{s',s}  |p_{s'|s}-\delta_{s's}| \right) \sup_d \max_{\Psi_{R_dQ},s,s'}  \left\Vert P_Q^{s-s'} \Psi_{R_dQ} P_Q^{s'-s}\right\Vert_1\nonumber \\
        &=4\epsilon.
    \end{align}
    We used sub-additivity in the first inequality. The second inequality follows by bringing the $p_{s'|s}  -   \delta_{s's}$ factor out using absolute homogeneity, and then moving the sup and max through the sum by adding the maximization over $s,s'$.
    Finally we use \Cref{eq:4epsilon} to obtain the last line.  
    This is $4\epsilon$-correctness of the CDQS protocol.
    
    \vspace{0.2cm}
    \noindent \textbf{Security:} We have now $f(x,y)=0$. We choose the simulator channel
    \begin{align}
    \mathbfcal{S}_{\varnothing \rightarrow MQ}^{x,y} = \frac{\mathcal{I}_Q}{d_Q}\otimes \sigma_M(x,y),
    \end{align}
    where $\sigma_M(x,y)$ is the family of channels appearing in the security definition for the CDQS' protocol, which recall states that
    \begin{align}\label{eq:security'}
        \forall s\in S,\,\,\, \left\Vert \sigma_M(x,y) - \rho_M(s,x,y) \right\Vert_1 \leq \delta.
    \end{align}
    We need to bound the following diamond norm, 
    \begin{align}
    \Vert \mathbfcal{S}_{\varnothing \rightarrow MQ}^{xy}&\circ \tr_Q - \mathbfcal{N}_{Q\rightarrow QM}\Vert_\diamond \nonumber \\
    &= \sup_{n}\max_{\Psi_{R_nQ}}\left\Vert \mathbfcal{S}_{\varnothing \rightarrow MQ}^{xy}\circ \tr_Q(\Psi_{R_nQ}) - \mathbfcal{N}^{xy}_{Q\rightarrow QM}(\Psi_{R_nQ})\right\Vert_1 \nonumber \\
    &=\sup_{n}\max_{\Psi_{R_nQ}}\left\Vert\Psi_{R_n}\otimes \frac{\mathcal{I}_Q}{d_Q}\otimes \sigma_M(x,y) - \frac{1}{2^{|s|}}\sum_{s}P^s_Q\Psi_{R_nQ}P^s_Q\otimes \rho_M(s,x,y)\right\Vert_1 \nonumber \\
    &= \sup_{n}\max_{\Psi_{R_nQ}}\left\Vert\frac{1}{2^{|s|}}\sum_{s}P^s_Q\Psi_{R_nQ}P^s_Q\otimes \sigma_M(x,y)-\frac{1}{2^{|s|}}\sum_{s}P^s_Q\Psi_{R_nQ}P^s_Q\otimes \rho_M(s,x,y)\right\Vert_1 \nonumber \\
    &\leq \left\Vert\sigma_M(x,y)-\rho_M(s,x,y)\right\Vert_1 \,\sup_{n}\max_{\Psi_{R_nQ}}\left\Vert \frac{1}{2^{|s|}}\sum_{s}P^s_Q\Psi_{R_nQ}P^s_Q \right\Vert_1 \nonumber \\
    &\leq 2\delta.
\end{align}
We used sub-multiplicativity of the one-norm in the first inequality, and the security statement \eqref{eq:security'} along with normalization of the state $\Psi_{R_dQ}$ in the second inequality. 
The above is exactly $2\delta$ security of the CDQS protocol.\footnote{Note $2\delta$ is appearing again rather than $\delta$ because we use two instances of the CDQS' protocol.}
\end{proof}

\begin{theorem}\label{thm:CDQSprimearbitrary}
    Suppose we have a CDQS' protocol which hides a uniformly random classical secret $r$. Then, at the cost of $|r|$ bits of communication, we can build a protocol that hides a secret $s$ with $|s|=|r|$ with an arbitrary distribution. 
\end{theorem}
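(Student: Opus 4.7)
The plan is to build the new protocol by combining the given CDQS' protocol with a classical one-time pad, using Alice's freshly sampled uniform randomness as both the key for the OTP and the secret of the inner protocol. Concretely, to hide a secret $s$ distributed according to some arbitrary $P_S$, Alice samples $r \in \{0,1\}^{|s|}$ uniformly at random, runs the given CDQS' protocol (hiding $r$) jointly with Bob, and additionally transmits the classical string $t := s \oplus r$ to the referee as $|r|$ bits of extra communication. The overall message system becomes $M' = MT$, where $T$ holds $t$.

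For correctness, assume $f(x,y)=1$. Let $\mathbfcal{D}'^{x,y}$ be the decoder of the given CDQS' protocol, whose output on $\rho_M(r,x,y)$ equals $r$ with probability at least $1-\epsilon$ for every $r$. The new decoder reads $t$ off the classical register $T$, applies $\mathbfcal{D}'^{x,y}$ to $M$ to produce $r'$, and outputs $s' := t \oplus r'$. When $r' = r$, we have $s' = s$, so the new protocol is $\epsilon$-correct for every fixed $s$, matching the correctness of the inner protocol.

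For security, assume $f(x,y)=0$. Let $\sigma_M(x,y)$ be the simulator distribution guaranteed by $\delta$-security of the given CDQS' protocol, so $\Vert \rho_M(r,x,y) - \sigma_M(x,y)\Vert_1 \leq \delta$ for every $r$. Because Alice samples $r$ uniformly and independently of $s$, the actual state seen by the referee in the new protocol is
\begin{align*}
    \rho'_{MT}(x,y,s) \;=\; \frac{1}{2^{|r|}} \sum_{r} \rho_M(r,x,y) \otimes \ketbra{s \oplus r}{s \oplus r}_T.
\end{align*}
I propose the new simulator
\begin{align*}
    \sigma'_{MT}(x,y) \;=\; \sigma_M(x,y) \otimes \frac{1}{2^{|r|}} \sum_{t} \ketbra{t}{t}_T,
\end{align*}
which is manifestly independent of $s$. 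After the substitution $t = s\oplus r$ in the sum defining $\sigma'$, the two states share the same block-diagonal structure in the $T$ basis, so the trace distance reduces to an average of block distances:
\begin{align*}
    \left\Vert \rho'_{MT}(x,y,s) - \sigma'_{MT}(x,y) \right\Vert_1 \;=\; \frac{1}{2^{|r|}} \sum_{r} \left\Vert \rho_M(r,x,y) - \sigma_M(x,y) \right\Vert_1 \;\leq\; \delta,
\end{align*}
establishing $\delta$-security for every $s$ as required. No step looks genuinely hard; the only subtle point is the security reduction, which hinges on the fact that the uniform pad $r$ is sampled locally by Alice and therefore makes $s \oplus r$ uniform and independent of the inner message $M$'s residual dependence on $r$.
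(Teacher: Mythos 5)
Your proposal is correct and follows essentially the same route as the paper's proof: pad the secret with a fresh uniform $r$, hide $r$ in the given CDQS' protocol, send $s\oplus r$ in the clear, and use the block-diagonal structure in the pad register to reduce security to that of the inner protocol. If anything, your security step is written slightly more carefully than the paper's, since you correctly express the trace distance as the \emph{average} of the per-block distances $\Vert \rho_M(r,x,y)-\sigma_M(x,y)\Vert_1$ before bounding it by $\delta$.
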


\begin{proof}
    Run the CDQS' protocol on secret $r$, and additionally have Alice send $s'=s\oplus r$. 

    \vspace{0.2cm}
    \noindent \textbf{Correctness:} To decode, the referee recovers $r'$ from the CDQS' protocol, which will equal the input $r$ with probability at least $1-\epsilon$. They then compute $s'\oplus r'$ as output. Since $s'=s\oplus r$, this will be correct whenever $r'=r$, so with probability at least $1-\epsilon$. 
    Thus the protocol storing $s$ is $1-\epsilon$ correct. 

    \vspace{0.2cm}
    \noindent \textbf{Security:} Security of the CDQS' protocol storing $r$ implies that
    \begin{align}
        \forall (x,y)\in X\times Y \,\,\, s.t. \,\, f(x,y)=0,\,\,\,\forall s\in S,\,\,\, \left\Vert \sigma_M(x,y) - \rho_M(s,x,y) \right\Vert_1 \leq \delta.
    \end{align}
    For the CDQS' protocol storing $s$, we take 
    \begin{align}
        \tilde{\sigma}_{SM}(x,y) = \pi_{A'}\otimes \sigma_M(x,y) = \frac{1}{2^{|r|}}\sum_{r} \ketbra{s\oplus r}{s\oplus r} \otimes \sigma_M(x,y),
    \end{align}
    where $\pi_{A'}$ is the maximally mixed state on $S$.
    Then by the triangle inequality,
    \begin{align}
        \left\Vert\tilde{\sigma}_{SM}(x,y) - \frac{1}{2^{|r|}}\sum_{r} \ketbra{s\oplus r}{s\oplus r}_{A'}\otimes \rho_{M}(r,x,y) \right\Vert_1 &= \left\Vert\frac{1}{2^{|r|}}\sum_{r}\ketbra{s\oplus r}{s\oplus r}_{A'} \otimes \sigma_M(x,y) \right. \nonumber \\ 
        &\qquad \quad  - \left. \frac{1}{2^{|r|}}\sum_{r} \ketbra{s\oplus r}{s\oplus r}_{A'}\otimes \rho_{M}(r,x,y) \right\Vert_1 \nonumber \\
        &= \left\Vert \sigma_M(x,y) - \rho_{M}(r,x,y) \right\Vert_1 \nonumber \\
        &\leq \delta \nonumber 
    \end{align}
    so that the protocol hiding $s$ is also $\delta$ secure. 
\end{proof}

\section{Properties of \texorpdfstring{$Cf$}{TEXT}-PHASE}\label{sec:phaseproperties}

\reoplusphase*

\vspace{0.2cm}
\begin{proof}
    The protocol proceeds by 
    \begin{enumerate}
        \item In the first round operations, copying the inputs in the computational basis.
        \item Executing the protocol for $f_1$ on the first copy and the protocol for $f_2$ on the second
        \item Inverting the copying procedure to reset the ancillas in the second round operations. 
    \end{enumerate} 
    Following these steps, let us first evaluate the action of the protocol in the case where $\epsilon_1=\epsilon_2=0$ on a basis element $\ket{x}_A\ket{y}_B$,
    \begin{align}
        \ket{x}_A\ket{y}_B \ket{0}_{\bar{A}}\ket{0}_{\bar{B}}&\overset{1)}{\rightarrow} \ket{x}_A\ket{y}_B\ket{x}_{\bar{A}} \ket{y}_{\bar{B}} \nonumber \\
        &\overset{2)}{\rightarrow} ((-1)^{f_1(x,y)}\ket{x}_A\ket{y}_B)((-1)^{f_2(x,y)}\ket{x}_{\bar{A}} \ket{y}_{\bar{B}}) \nonumber \\
        &=(-1)^{f_1(x,y)\oplus f_2(x,y)} \ket{x}_{A} \ket{y}_{B}\ket{x}_{\bar{A}} \ket{y}_{\bar{B}} \nonumber \\
        &\overset{3)}{\rightarrow}(-1)^{f_1(x,y)\oplus f_2(x,y)} \ket{x}_{A} \ket{y}_{B}\ket{0}_{\bar{A}} \ket{0}_{\bar{B}}\,.
    \end{align}
    We see that this is an exactly correct protocol for $f_1\oplus f_2$ whenever the protocols for $f_1$ and $f_2$ are exactly correct. 
    
    It is clear that the reduction is an oracle reduction, so \Cref{remark:blackboxerror} leads to the additive error property. 
\end{proof}

Next consider \Cref{thm:PhaseANDclosure}, which deals with taking the AND of $f(x,y)$ with a single bit $z$.

\rePhaseANDclosure*

\vspace{0.2cm}
\begin{proof}
Consider the function 
\begin{align}
    f(x,y\wedge Z)\equiv f(x_1,...,x_n,y_1\wedge z, ...,y_n\wedge z).
\end{align}
This can be implemented by having Bob, who holds $\ket{y}$ and $\ket{z}$, compute the AND locally and input $\ket{y\wedge Z}$ into the phase protocol for $f(x,y)$. 
Then when $z=1$, this implements the original function $f(x,y)$, while if $z=0$ this implements $f(x,0)$. 
This is not quite performing $f(x,y)\wedge z$, but is performing something close: it only differs in $z=0$ cases, and in that case differs by a function computed only from $x$ (in particular $f(x,0)$ is applied, rather than applying no phase). 

To correct this, it suffices to have Alice and Bob implement using \Cref{thm:oplusphase} the two functions $f_1(x,y,z)=f(x,0)\wedge (z\oplus 1)$, and $f_2(x,y,z)=f(x,y\wedge Z)$. 
Then using that 
\begin{align}
    f(x,y)\wedge z=[f(x,0)\wedge (z\oplus 1)]\oplus[f(x,y\wedge Z)]
\end{align}
this will correctly implement $f(x,y)\wedge z$ as needed. 
We noted above how to implement $f(x,y\wedge Z)$ using a single instance of the protocol for $f(x,y)$. 
It remains to see that we can implement $f(x,0)\wedge (z\oplus 1)$ with a single EPR pair $\Psi^+$. 

To do this, Alice locally computes $f(x,0)$ coherently from $\ket{x}$ and stores it in a register $A_0$. 
Meanwhile, Bob computes $z\oplus 1$ coherently and stores it in a register $B_0$. 
Then, they execute a coherent phase protocol on $A_0$ and $B_0$ for the AND function; this amounts to implementing a $CZ_{A_0B_0}$ gate which can be done with one EPR pair by a standard teleportation strategy. 
Alice and Bob then uncompute $f(x,0)$ and $z\oplus 1$ in their second round operations to reset the $A_0$ and $B_0$ registers to $0$. 

Using \Cref{remark:blackboxerror} and the fact that we use a single invocation of the protocol for $f(x,y)$ (along with an exact protocol for a single AND function) the overall error is the same as the error in the protocol for $f(x,y)$. 
\end{proof}

Next, we will develop the proof of the more general \Cref{thm:phaseclosure}, which replaces $z$ with an arbitrary function. 
We restate the theorem before giving the proof. 

\rephaseclosure*

\vspace{0.2cm}
\begin{proof}
    Write $g(k,\ell)$ in its rank decomposition, 
    \begin{align}
        f(x,y)\wedge g(k,\ell) &= f(x,y) \wedge \bigoplus_{i=1}^m h_{i}(k)\wedge h_i'(\ell) \nonumber \\
        &= \bigoplus_{i=1}^m f(x,y)\wedge h_i(k) \wedge h_i'(\ell).
    \end{align}
    In the second line we distributed the first AND operator over the sum. 
    Next, use that from \Cref{thm:oplusphase} we can implement the XOR of a collection of functions by implementing each function in parallel, so it suffices to implement each of the functions $f(x,y)\wedge h_i(k) \wedge h_i'(\ell)$. 
    To do this, we apply \Cref{thm:PhaseANDclosure} twice: by one invocation of the theorem a protocol for $f(x,y)$ allows us to implement $f(x,y)\wedge h_i(k)$ by adding one EPR pair to the resource system, and then by a second invocation the resulting protocol for $f(x,y)\wedge h_i(k)$ plus one further EPR pair gives an implementation of $f(x,y)\wedge h_i(k) \wedge h_i'(\ell)$. 
    Thus, if $\Psi$ is the resource system for $f(x,y)$ we can implement $f(x,y)\wedge h_i(k)\wedge h(\ell)$ using $\Psi\otimes (\Psi^+)^{\otimes 2}$.
    Repeating this for each of the $m$ terms in the above decomposition, we find that $f(x,y)\wedge g(k,\ell)$ can be implemented using the resource system $\Psi^{\otimes m}\otimes (\Psi^+)^{\otimes 2m}$.

    Noting that we use the protocol for $f(x,y)$ a total of $m$ times and using \Cref{remark:blackboxerror} we have that the overall error is $\epsilon \cdot m$, where $\epsilon$ is the error in the coherent phase protocol for $f(x,y)$. 
\end{proof}

\section{Properties of \texorpdfstring{$Cf$-$\mathbf{U}$}{TEXT}}\label{sec:coherentUproperties}

We record the proofs of the composition properties given for the coherent unitary NLQC.

\retensorproduct*

\begin{proof}
The protocol for the tensor product unitary is
    \begin{enumerate}
    \item Initialize the ancilla systems $\bar{A}$ and $\bar{B}$ on the respective sides in state $\ket{0}$.
    \item Apply CNOTs to $A\bar{A}$ and $B\bar{B}$ locally to copy $A$ and $B$ in the computational basis.
    \item Implement coherent $\mathbf{U}_1$ on $A'AB$ and $\mathbf{U}_2$ on $B'\bar{A}\bar{B}$.
    \item Locally uncompute $\bar{A}$ and $\bar{B}$ using again the same CNOTs.
\end{enumerate}
To see that this succeeds explicitly, we can track the state through the steps in the protocol,
\begin{align}
    \ket{\Psi}_{A'B'AB}\otimes\ket{0}_{\bar{A}}\ket{0}_{\bar{B}} &= \underset{abxy}{\sum}c_{abxy}\ket{a}_{A'}\ket{b}_{B'}\ket{x}_A\ket{y}_B\otimes \ket{0}_{\bar{A}}\ket{0}_{\bar{B}} \\
    \overset{CNOT_{A\bar{A}}\otimes CNOT_{B\bar{B}}}{\rightarrow} \hskip 1cm &\underset{abxy}{\sum}c_{abxy}\ket{a}_{A'}\ket{b}_{B'}\ket{x}_A\ket{y}_B\otimes \ket{x}_{\bar{A}}\ket{y}_{\bar{B}} \nonumber \\
    \overset{(\mathbf{U}^f_1)_{A'AB}\otimes (\mathbf{U}^f_2)_{B'\bar{A}\bar{B}}}{\rightarrow} \hskip 1cm &\underset{abxy} {\sum}c_{abxy}\,\mathbf{U}^{f(x,y)}_1\ket{a}_{A'}\mathbf{U}^{f(x,y)}_2\ket{b}_{B'}\ket{x}_A\ket{y}_B\otimes \ket{x}_{\bar{A}}\ket{y}_{\bar{B}} \nonumber \\
    \overset{CNOT_{A\bar{A}}\otimes CNOT_{B\bar{B}}}{\rightarrow} \hskip 1cm &\underset{abxy}{\sum}c_{abxy}\,\mathbf{U}^{f(x,y)}_1\ket{a}_{A'}\mathbf{U}^{f(x,y)}_2\ket{b}_{B'}\ket{x}_A\ket{y}_B\otimes \ket{0}_{\bar{A}}\ket{0}_{\bar{B}}.
\end{align}
Tracing out the ancillas yields the desired output. 

Notice that the construction is an oracle reduction, so has an additive error by \Cref{remark:blackboxerror}.
\end{proof}

\repowers*

\vspace{0.2cm}
\begin{proof}
    Since $\mathbf{U}^1_{A'}$ and $\mathbf{U}^2_{A'}$ commute, they must share an eigenbasis, which we label $\{\ket{u_k}\}_k$.
    Label the eigenvalues of $\mathbf{U}^1_{A'}$ as $\{\mu_i\}_i$ and the eigenvalues of $\mathbf{U}^2_{A'}$ as $\{\nu_i\}_i$.
    Alice and Bob execute the following protocol. 
    \begin{enumerate}
        \item Alice copies system $A'$ in the eigenbasis $\ket{u_k}$, producing systems $A'_1$ and $A'_2$.
        \item Alice and Bob run a NLQC protocol for $\mathbf{U}_{A'_1}^1\otimes \mathbf{U}^2_{A'_2}$.
        \item Alice undoes the copying operation, leaving only output system $A'_1=A'$. 
    \end{enumerate}
    To see the effect of this protocol, consider an arbitrary basis state $\ket{u_i}_{A'}\ket{x}_A\ket{y}_B$. 
    Then the steps above produce
    \begin{align}
        \ket{u_i}_{A'}\ket{x}_A\ket{y}_B &\overset{1}{\rightarrow} \ket{u_i}_{A'_1}\ket{u_i}_{A'_2}\ket{x}_A\ket{y}_B \nonumber \\
        & \overset{2}{\rightarrow} (\mu_i \ket{u_i}_{A'_1})(\nu_i\ket{u_i}_{A'_2})\ket{x}_A\ket{y}_B \nonumber \\
        &\overset{3}{\rightarrow} (\mu_i\nu_i)\ket{u_i}_{A'}\ket{x}_A\ket{y}_B \nonumber \\
        &= \mathbf{U}^1_{A'}\mathbf{U}^2_{A'}\ket{u_i}_{A'}\ket{x}_A\ket{y}_B\,.
    \end{align}
    Since this holds for a basis, it also holds for an arbitrary input state. 
    Notice that the construction is an oracle reduction, so has an additive error by \Cref{remark:blackboxerror}.
\end{proof} 

\bibliographystyle{unsrtnat}
\bibliography{biblio}

\end{document}